\DeclareMathOperator{\Cl}{Cl}
\DeclareMathOperator{\D}{d}
\DeclareMathOperator{\I}{Im}
\DeclareMathOperator{\R}{Re}
\DeclareMathOperator{\Div}{div}
\DeclareMathOperator{\Si}{Si}
\DeclareMathOperator{\Ci}{Ci}
\DeclareMathOperator{\Ei}{Ei}
\def\eor{\hfill$ \square$}
\tikzset{>=stealth}
\begin{document}

\selectlanguage{english}
\title{Quantitative spectral analysis of electromagnetic scattering.\ II: Evolution semigroups and non-perturbative solutions}
\author{Yajun Zhou\thanks{Program in Applied and Computational Mathematics, Princeton University, Princeton, NJ 08544, USA; Academy of Advanced Interdisciplinary Studies (AAIS), Peking University, Beijing 100871, P. R. China, (yajunz@math.princeton.edu, yajun.zhou.1982@pku.edu.cn). } }

\maketitle


\begin{abstract}
We carry out quantitative studies  on the Green operator $ \hat{\mathscr G}$ associated with the Born equation, an  integral equation
that models electromagnetic scattering, building the strong stability of the evolution semigroup $\{\exp(-i\tau\hat{\mathscr G})|\tau\geq0\} $  on polynomial compactness and the Arendt--Batty--Lyubich--V\~u theorem. The strongly-stable evolution semigroup inspires our proposal of  a non-perturbative method  to solve the light scattering problem and improve the Born approximation. \end{abstract}\begin{keywords}electromagnetic scattering, Green operator, evolution semigroup, strong stability, non-perturbative solution\end{keywords}

\begin{AMS}       35Q61, 45E99,   47B38, 47D06,   78A45         \end{AMS}
\section{Introduction}
A great variety of physical applications \cite{HulstMeteo,vandeHulst,Aydin,Bruno2003,RICM} rely on numerical modeling of the interaction between  electromagnetic waves   and dielectric media \cite{Rayleigh,Mie,AsanoYamamoto,ColtonKress1983,ColtonKress,Sh,FDTD,FDTDbook,DDA,Goedecke,Yurkina,Yurkinb,Bruno2009}.  In direct scattering problems \cite{ColtonKress1983}, one predicts the scattering pattern from the shape and physical constituent of the scatterer; in inverse scattering problems \cite{ColtonKress}, one attempts to infer the geometric and physical information of the scattering media from the measured scattering field.

Specific examples of exact solutions to direct light scattering, in terms of an infinite series that involves special functions, do exist for cylindrical \cite{Rayleigh}, spherical \cite{Mie}, spheroidal \cite{AsanoYamamoto} and Chebyshev \cite{Sh} dielectric particles. In practice, these infinite series  do not have a closed-form sum, so their truncations lead to  approximate solutions. The special techniques employed in these series solutions may not generalize well to the analysis of arbitrarily shaped dielectrics.

There are a wealth of numerical recipes for practical solutions to direct light scattering on  arbitrarily shaped dielectric particles. There are two categories of grid-based algorithms  \cite{FDTD,FDTDbook,DDA,Goedecke,Yurkina} for  simulating the propagation of electromagnetic waves in the presence of   dielectric particles with complicated shapes. The first category of methods are based on numerical solutions of partial differential equations for electromagnetic scattering, such as discretization of the Maxwell equations  in the  ``finite difference time domain'' (FDTD) \cite{FDTD,FDTDbook} algorithms. The second category of methods are based on numerical solutions of integral equations for electromagnetic scattering. The  ``discrete dipole approximation'' (DDA), also known as the ``digitized Green's function algorithm'' \cite{DDA, Goedecke}, aim at a volume integral equation. Krylov subspace iterations \cite{Bruno2009} can be used for accelerated solution to boundary integral equations for scattering problems on metallic conductors.
  Typically, for a given (non-conducting) dielectric shape with known refractive index, an efficient   numerical simulation (either following the FDTD or DDA algorithm) requires a  grid  spacing not coarser than one tenth of the wavelength. As a result,  in  grid-based simulations for large particles, the computational cost could become exorbitantly high, and the  risks of numerical instability may ensue \cite{Yurkinb}.

Setting aside the concerns about affordability and stability of numerical solutions, we might still find \textit{ad hoc} simulations of electromagnetic scattering insufficient for certain types of applications that involve inverse scattering problems.  For example, in reflection interference contrast microscopy \cite{RICM}, an approximate analytic expression of the scattering field would be suitable for curve fitting procedures that solve the inverse scattering problem on the fly. It would be less desirable to conduct extensive numerical simulations,  trial after trial, for the information inference of the scattering medium.  Besides optical microscopy~\cite{RICM}, certain applications in meteorological radar~\cite{vandeHulst}, geological prospection~\cite{EIT,ColtonKress1983} also require the solution of an inverse scattering problem, such as deducing the shape and physical constitution from the far-field scattering patterns, and/or deducing the spatial location of the scattering medium from the near-field scattering patterns. Therefore, there are wide applications calling for an approximate analytic understanding of electromagnetic scattering with sufficient numerical accuracy.

In Paper I \cite{QuantEM1Norm} of this series, we performed a quantitative assessment of the error bound in  the perturbative solution (Born approximation) to the light scattering problem. In this work (Paper II), we  propose a non-perturbative alternative that improves the convergence rate of Born approximation. The major idea underlying our proposal is to exploit an evolution semigroup $\{\exp(-i\tau\hat{\mathscr G})|\tau\geq0\} $ generated from the Green operator $ \hat {\mathscr G}$ of electromagnetic scattering. In particular, we shall use the spectral analysis of the Green operator $\hat{\mathscr G} $ to deduce asymptotic behavior of the evolution semigroup, in connection to the asymptotic solutions of the electromagnetic scattering problem. \section{Statement of results}We recapitulate some previous studies (\S\ref{subsec:qual_summary}) on electromagnetic scattering, on which the current developments  (\S\ref{subsec:evolv_semigrp_nonpert_sln_summary}) are based.
\subsection{Discreteness of optical resonance modes and Born approximation\label{subsec:qual_summary}}As in  \cite{QualEM,QuantEM1Norm}, we model   the electromagnetic scattering problem  by the \textit{Born equation}, an exact consequence of the Maxwell equations:\begin{align}(1+\chi)\bm E(\bm r)=\bm E_{\mathrm{inc}} (\bm r)+\chi\nabla\times \nabla\times\iiint_V\frac{\bm E(\bm r') e^{-ik|\bm r-\bm r' |}}{4\pi|\bm r-\bm r' |}\D^3 \bm r',\quad\bm r\in V.\label{eq:Born_eqn}\end{align}Here $\bm E(\bm r),\bm r\in V$ is the dielectric response from a scatterer with dielectric susceptibility  $\chi$,  occupying a bounded open volume $V$ with smooth boundary $ \partial V$, and connected exterior $ \mathbb R^3\smallsetminus(V\cup\partial V)$.   The incident  beam has wavelength  $2\pi/k$, and  is represented by its electric field  $ \bm E_\mathrm{inc}(\bm r)$.
At times, we will abbreviate the Born equation \eqref{eq:Born_eqn} as $ \hat{\mathscr B}\bm E=(\hat I-\chi\hat{\mathscr G})\bm E=\bm E_\mathrm{inc} $, and refer to  $\hat{\mathscr B}$ as the \textit{Born operator},  $\hat{\mathscr G}$ the \textit{Green operator}.

It was pointed out in  \cite{QualEM} that the Green operator $ \hat{\mathscr G}:L^2(V;\mathbb C^3)\longrightarrow L^2(V;\mathbb C^3)$  exhibits certain non-physical spectral behavior: some resonance modes that belong to the Hilbert space $ L^2(V;\mathbb C^3)$ (the totality of square-integrable $ \mathbb C^3$-valued vector fields defined in $V$) fail the transversality condition $\nabla\cdot\bm E=0 $, thus do not mark the actual singularity of the Maxwell equations.    For a correct characterization of the optical resonance modes, one needs to restrict the domain of the Green operator to the Hilbert subspace $\Phi(V;\mathbb C^3)=\Cl(C^\infty(V;\mathbb C^3)\cap\ker(\nabla\cdot)\cap L^2(V;\mathbb C^3))$, which is the smallest Hilbert space that contains all  the smooth, divergence-free and square-integrable complex-valued vector fields (conventional notation:  $ \Phi(V;\mathbb C^3)=H(\Div0 ,V)$ \cite[p.~215]{DautrayLionsVol3}).
It was shown  in  \cite[\S3]{QualEM} that the   quadratic operator polynomial  $ \hat{\mathscr G}(\hat I+2\hat{\mathscr G}):\Phi(V;\mathbb C^3)\longrightarrow \Phi(V;\mathbb C^3)$  is compact
where the cubic  polynomial  $ \hat{\mathscr G}(\hat I+2\hat{\mathscr G})^{2}:\Phi(V;\mathbb C^3)\longrightarrow \Phi(V;\mathbb C^3)$  defines a Hilbert--Schmidt operator.
It has been further shown \cite[\S3.2]{QualEM}  that the spectrum of   $ \hat{\mathscr G}:\Phi(V;\mathbb C^3)\longrightarrow \Phi(V;\mathbb C^3)$  is a countable set. Except for two points $ \{0,-1/2\}$, all the other points in the spectrum are eigenvalues with a strictly negative imaginary part.

As a popular   perturbative solution to the electromagnetic scattering problem, the (first) \textit{Born approximation} is a truncation of the Neumann series $ \bm E=(\hat I-\chi\hat{\mathscr G})^{-1}\bm E_\mathrm{inc} =\bm E_\mathrm{inc}+\chi\hat{\mathscr G}\bm E_\mathrm{inc}+\chi^2\hat{\mathscr G}^2\bm E_\mathrm{inc}+\cdots$ up to the $ O(\chi)$ term.  In the perturbative regime $ |\chi|<1/\Vert \hat{\mathscr G}\Vert_{L^2(V;\mathbb C^3)}$, the error bound for the Born approximation is given by \begin{align}\left\Vert\bm E-(\hat I+\chi\hat{\mathscr G} )\bm E_\mathrm{inc}  \right\Vert_{L^2(V;\mathbb C^3)}\leq\frac{|\chi|\Vert\hat{\mathscr G}\Vert_{L^2(V;\mathbb C^3)}\Vert\bm E_\mathrm{inc}  \Vert_{L^2(V;\mathbb C^3)}}{1-|\chi|\Vert \hat{\mathscr G}\Vert_{L^2(V;\mathbb C^3)}}.\end{align}
In  \cite[Theorem 2.1]{QuantEM1Norm}, we provided  an estimate of the operator norm  $ \Vert \hat{\mathscr G}\Vert_{L^2(V;\mathbb C^3)}$.

\subsection{\label{subsec:evolv_semigrp_nonpert_sln_summary}Evolution semigroups and non-perturbative solutions}
Let  $\{\exp(-i\tau\hat{\mathscr G})\colonequals \hat I+\sum_{s=1}^\infty(-i\tau\hat{\mathscr G})^s/s!|\tau\geq0\}$ be an evolution semigroup  with infinitesimal generator $-i\hat{\mathscr G} $. This is a contractive semigroup for  either     $ \hat{\mathscr G}:L^{2}(V;\mathbb C^3)\longrightarrow L^{2}(V;\mathbb C^3)$  or    $ \hat{\mathscr G}:\Phi(V;\mathbb C^3)\longrightarrow \Phi(V;\mathbb C^3)$, thanks to the energy conservation law in electromagnetic scattering (\S\ref{sec:erg_conserv}). In addition, the spectral properties of the Green operator     $ \hat{\mathscr G}:\Phi(V;\mathbb C^3)\longrightarrow \Phi(V;\mathbb C^3)$ lead to richer structures of the semigroup in question. \begin{theorem}[Strong stability and generalized skin effect]\label{thm:semigrp_prop}\begin{enumerate}[label=\emph{(\roman*)},ref=(\roman*),widest=ii] \item\label{itm:strong_stab} If the smooth  dielectric  has a connected exterior volume   $\mathbb R^3\smallsetminus(V\cup\partial V)$, then we have    $\sigma^\Phi(\hat{\mathscr G})\cap\mathbb R\subset\{0,-1/2\}$  and the related evolution semigroup $\{\exp(-i\tau\hat{\mathscr G})|\tau\geq0\} $ is strongly stable:
\begin{align}\label{eq:strong_stab_statement}\lim_{\tau\to+\infty}\Vert\exp(-i\tau\hat{\mathscr G})\bm F\Vert_{L^2(V;\mathbb C^3)}=0,\quad\forall \bm F\in\Phi(V;\mathbb C^3).\end{align}
\item \label{itm:skin_effect} Suppose that $ \bm E_\mathrm{inc}\in \Phi(V;\mathbb C^3)$ is a transverse incident wave satisfying the Helmholtz equation $ (\nabla^2+k^2)\bm E_\mathrm{inc}=\mathbf 0$ in the sense of distributional derivatives, then the following limit holds \begin{align}\lim_{|\chi|\to+\infty}|\chi|\iiint_Vf(\bm r)((\hat I+i|\chi| \hat{\mathscr G})^{-1}\bm E_\mathrm{inc})(\bm r)\D^3\bm r=\mathbf0\label{eq:GSE}\end{align}for every compactly supported smooth function $f\in C^\infty_0(V;\mathbb C) $.
\end{enumerate}\end{theorem}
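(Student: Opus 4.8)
The strategy is to invoke the Arendt–Batty–Lyubich–Vũ (ABLV) theorem, which states that a bounded $C_0$-semigroup on a Banach space whose generator has countable intersection of its spectrum with the imaginary axis, and no eigenvalue of the generator on the imaginary axis, is strongly stable.  Here the generator is $-i\hat{\mathscr G}$, and its spectrum meets $i\mathbb R$ precisely where $\sigma^\Phi(\hat{\mathscr G})$ meets $\mathbb R$.  I would first argue that $\sigma^\Phi(\hat{\mathscr G})$ is countable: this follows from the polynomial compactness recorded in the excerpt (the quadratic polynomial $\hat{\mathscr G}(\hat I+2\hat{\mathscr G})$ is compact on $\Phi(V;\mathbb C^3)$), which forces the spectrum to be a countable set with the only possible accumulation points among the roots $\{0,-1/2\}$ of the polynomial — this is exactly what was proved in \cite{QualEM}.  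Next I would pin down $\sigma^\Phi(\hat{\mathscr G})\cap\mathbb R$.  Since the semigroup is contractive, $\sigma^\Phi(\hat{\mathscr G})$ lies in the closed lower half plane, and every spectral point other than $0,-1/2$ is an eigenvalue with strictly negative imaginary part (again from \cite{QualEM}); hence $\sigma^\Phi(\hat{\mathscr G})\cap\mathbb R\subset\{0,-1/2\}$, a finite — in particular countable — set, and the contractivity used here is supplied by the energy-conservation argument of \S\ref{sec:erg_conserv}.

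**The eigenvalue condition, which is the crux.**  The genuine obstacle is the ABLV hypothesis that $-i\hat{\mathscr G}$ has \emph{no} eigenvalue on the imaginary axis, i.e.\ that neither $0$ nor $-1/2$ is an eigenvalue of $\hat{\mathscr G}$ acting on $\Phi(V;\mathbb C^3)$.  I would handle the two points separately.  For the eigenvalue $0$: if $\hat{\mathscr G}\bm F=\mathbf0$ for some $\bm F\in\Phi(V;\mathbb C^3)$, then the volume potential $\nabla\times\nabla\times\iiint_V \bm F(\bm r')e^{-ik|\bm r-\bm r'|}/(4\pi|\bm r-\bm r'|)\,\D^3\bm r'$ vanishes throughout $V$; I would use the ellipticity/analytic-hypoellipticity of the Helmholtz operator together with the unique continuation property across $\partial V$ into the connected exterior $\mathbb R^3\smallsetminus(V\cup\partial V)$ to propagate this vanishing and conclude $\bm F=\mathbf0$ — this is precisely where the connectedness-of-the-exterior hypothesis enters, since it guarantees that a field vanishing near infinity (by the radiation condition built into the kernel $e^{-ik|\cdot|}$) and harmonic-type in the exterior must vanish there, and then by the jump relations must vanish inside.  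For the eigenvalue $-1/2$: an eigenfunction would satisfy $\hat{\mathscr G}\bm F=-\bm F/2$, equivalently $(\hat I+2\hat{\mathscr G})\bm F=\mathbf0$; I would interpret $\hat I+2\hat{\mathscr G}$ in terms of the Maxwell double-layer/strongly-singular operator and show that on the divergence-free subspace $\Phi(V;\mathbb C^3)$ its kernel is trivial, again exploiting the transversality constraint $\nabla\cdot\bm F=0$ that distinguishes $\Phi$ from the full $L^2(V;\mathbb C^3)$ and the connected-exterior hypothesis for the associated exterior boundary-value problem.

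**Part (ii): the generalized skin effect.**  For the second statement I would first rewrite $(\hat I+i|\chi|\hat{\mathscr G})^{-1}\bm E_\mathrm{inc}$ using the Laplace-transform (Post–Widder type) representation of the resolvent of a bounded generator,
\begin{align}
(\hat I+i|\chi|\hat{\mathscr G})^{-1}=|\chi|^{-1}(|\chi|^{-1}\hat I + i\hat{\mathscr G})^{-1}=\int_0^\infty e^{-t/|\chi|}\exp(-it\hat{\mathscr G})\,\D t,
\end{align}
so that $|\chi|(\hat I+i|\chi|\hat{\mathscr G})^{-1}\bm E_\mathrm{inc}=\int_0^\infty e^{-s}\exp(-i|\chi|s\hat{\mathscr G})\bm E_\mathrm{inc}\,\D s$ after the substitution $t=|\chi|s$.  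Pairing with $f\in C_0^\infty(V;\mathbb C)$ and using $|\langle f,\exp(-i|\chi|s\hat{\mathscr G})\bm E_\mathrm{inc}\rangle|\le\Vert f\Vert_{L^2}\Vert\exp(-i|\chi|s\hat{\mathscr G})\bm E_\mathrm{inc}\Vert_{L^2}\le\Vert f\Vert_{L^2}\Vert\bm E_\mathrm{inc}\Vert_{L^2}$ gives a uniform integrable dominating function $e^{-s}\Vert f\Vert_{L^2}\Vert\bm E_\mathrm{inc}\Vert_{L^2}$; then the strong stability \eqref{eq:strong_stab_statement} from part \ref{itm:strong_stab} forces $\Vert\exp(-i|\chi|s\hat{\mathscr G})\bm E_\mathrm{inc}\Vert_{L^2}\to0$ as $|\chi|\to\infty$ for each fixed $s>0$, and dominated convergence closes the argument.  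The only point needing care is that $\bm E_\mathrm{inc}$, assumed transverse and Helmholtz on $V$ in the distributional sense, indeed lies in $\Phi(V;\mathbb C^3)$ so that part \ref{itm:strong_stab} applies — this I would verify by noting that a distributional solution of $(\nabla^2+k^2)\bm E_\mathrm{inc}=\mathbf0$ is smooth by elliptic regularity, and $\nabla\cdot\bm E_\mathrm{inc}=0$ then places it in $C^\infty(V;\mathbb C^3)\cap\ker(\nabla\cdot)\cap L^2(V;\mathbb C^3)\subset\Phi(V;\mathbb C^3)$.  The hard part of the whole theorem remains the triviality of the eigenspaces at $0$ and $-1/2$ in part \ref{itm:strong_stab}; once that is secured, part \ref{itm:skin_effect} is a soft consequence.
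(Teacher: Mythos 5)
For part \ref{itm:strong_stab} your architecture coincides with the paper's: contractivity from energy conservation, countability of $\sigma^\Phi(\hat{\mathscr G})$ from polynomial compactness, absence of real eigenvalues, then the Arendt--Batty--Lyubich--V\~u theorem. The paper does not re-derive the spectral input; it cites the optical resonance theorem of \cite{QualEM} (Theorem 3.1 and Proposition 3.4) for $\sigma_p^\Phi(\hat{\mathscr G})\cap\mathbb R=\varnothing$ and $\sigma^\Phi(\hat{\mathscr G})\cap\mathbb R=\{0,-1/2\}$, whereas you sketch a fresh unique-continuation argument. If you insist on re-proving these facts, note that your starting point for the eigenvalue $0$ is wrong: with the paper's convention, $\hat{\mathscr G}\bm F=\nabla\times\nabla\times\iiint_V\bm F(\bm r')\frac{e^{-ik|\bm r-\bm r'|}}{4\pi|\bm r-\bm r'|}\D^3\bm r'-\bm F$ (see \eqref{eq:EqMotion}), so $\hat{\mathscr G}\bm F=\mathbf 0$ does not say the double-curl volume potential vanishes in $V$; it says $\bm F$ \emph{equals} that potential, and the continuation argument must be run on that identity. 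As written, your treatment of the points $0$ and $-1/2$ is a program rather than a proof, but since the intended source result is available, this part is acceptable modulo citation.

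Part \ref{itm:skin_effect} contains a genuine gap, caused by a dropped Jacobian. From the resolvent formula, $|\chi|(\hat I+i|\chi|\hat{\mathscr G})^{-1}=\bigl(|\chi|^{-1}\hat I+i\hat{\mathscr G}\bigr)^{-1}=\int_0^\infty e^{-t/|\chi|}\exp(-it\hat{\mathscr G})\D t$, and the substitution $t=|\chi|s$ gives $|\chi|\int_0^\infty e^{-s}\exp(-i|\chi|s\hat{\mathscr G})\D s$, not $\int_0^\infty e^{-s}\exp(-i|\chi|s\hat{\mathscr G})\D s$ as you wrote (check the scalar case: your right-hand side equals $(1+i|\chi|G)^{-1}$, i.e.\ the \emph{unweighted} resolvent). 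Consequently your dominated-convergence step only proves $\iiint_V f\,(\hat I+i|\chi|\hat{\mathscr G})^{-1}\bm E_{\mathrm{inc}}\to\mathbf 0$, which is strictly weaker than \eqref{eq:GSE}; with the correct extra factor $|\chi|$ the argument collapses, because strong stability carries no decay rate and cannot beat the divergent prefactor. The telltale sign is that you never use the hypothesis $(\nabla^2+k^2)\bm E_{\mathrm{inc}}=\mathbf 0$ except to place $\bm E_{\mathrm{inc}}$ in $\Phi(V;\mathbb C^3)$ (which the statement already assumes); the paper's proof needs it essentially: applying $\nabla^2+k^2$ to the evolution equation \eqref{eq:EqMotion} and testing against $f$ yields $i\frac{\D}{\D\tau}\iiint_V\bm\psi_\tau(\nabla^2+k^2)f=-k^2\iiint_V f\,\bm\psi_\tau$; integrating in $\tau$, the boundary term at $\tau=0$ vanishes by the Helmholtz equation and the term at $\tau=\infty$ by strong stability, so the improper integral $\int_0^\infty\bigl[\iiint_V f\,\bm\psi_\tau\bigr]\D\tau$ converges to $0$ (\eqref{eq:preGSE}); then Abel continuity in $\alpha=1/|\chi|$ of $\int_0^\infty e^{-\alpha\tau}\bigl[\iiint_V f\,\bm\psi_\tau\bigr]\D\tau$ delivers \eqref{eq:GSE}. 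You need this (or an equivalent quantitative input) to close part \ref{itm:skin_effect}; strong stability plus dominated convergence alone cannot.
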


As will be revealed in \S\ref{sec:evolv_semigrps}, Theorem~\ref{thm:semigrp_prop}\ref{itm:strong_stab} is a direct consequence of the Arendt--Batty--Lyubich--V\~u theorem (see  \cite{AB88,LV88}, as well as  \cite[pp.~326--327]{erg}); while Theorem~\ref{thm:semigrp_prop}\ref{itm:skin_effect} is a corollary of the strong stability.

From the integral representation of the solution to the Born equation\begin{align}\bm E= (\hat I-\chi\hat{\mathscr G})^{-1}\bm E_\mathrm{inc}=\frac{1}{i\chi}\int_0^{+\infty}e^{i\tau/\chi}\exp(-i\tau\hat{\mathscr G})\bm E_\mathrm{inc}\D \tau,\quad\I\chi<0,\end{align} one can show that the perturbative Born approximation  $ \bm E=(\hat I-\chi\hat{\mathscr G})^{-1}\bm E_\mathrm{inc} \approx\bm E_\mathrm{inc}+\chi\hat{\mathscr G}\bm E_\mathrm{inc}$ respects the short-term behavior $ \exp(-i\tau\hat{\mathscr G})=\hat I-i\tau\hat{\mathscr G}+O(\tau^{2})$, but is incompatible with the long-term strong stability  $ \lim_{\tau\to+\infty}\Vert\exp(-i\tau\hat{\mathscr G})\bm F\Vert_{L^2(V;\mathbb C^3)}=0$ of the evolution semigroup for electromagnetic scattering.

In \S\ref{sec:nonpert_approach}, we propose a non-perturbative alternative to the Born approximation  as follows:\begin{align}&((\hat I-\chi\hat{\mathscr G})^{-1}\bm E_\mathrm{inc})(\bm
r)\notag\\\sim{}&\bm E_\mathrm{inc}(\bm
r)+\chi k^2\iiint_V\frac{e^{-i\sqrt{1+\chi}k|\bm
r-\bm r'|}}{4\pi|\bm
r-\bm r'|}\bm E_\mathrm{inc}(\bm
r')\D^3\bm r'\notag\\{}&-\frac{\chi}{1+\chi}\nabla\varoiint_{\partial V}\frac{\bm n'\cdot\bm E_\mathrm{inc}(\bm
r')e^{-i\sqrt{1+\chi}k|\bm
r-\bm r'|}}{4\pi|\bm
r-\bm r'|}\D^3\bm r',\quad \bm r\in V.\end{align}Such a formula recovers the Born approximation in the perturbative regime, while honoring the  strong stability  $ \lim_{\tau\to+\infty}\Vert\exp(-i\tau\hat{\mathscr G})\bm F\Vert_{L^2(V;\mathbb C^3)}=0$ in the long run. As a concrete application, we evaluate the non-perturbative approximation to the forward scattering amplitude of Mie scattering for $ \I\chi<0,\I n=\I\sqrt{1+\chi}<0$:\begin{footnotesize} \begin{align}&-\chi k\iiint_{|\bm r|<R}\bm E_\mathrm{inc}^{*}(\bm r)\cdot(\hat I-\chi\hat{\mathscr G})^{-1}\bm E_\mathrm{inc}(\bm r)\D^3\bm r\notag\\\sim{}&2\pi inR^{2}+\frac{\pi i(n+1)^2}{(n-1)^2}\frac{1-e^{-2i(n-1)kR}[1+2i(n-1)kR]}{4k^{2}}+\frac{\pi i(n-1)^{2}}{(n+1)^2}\frac{e^{-2i(n+1)kR}[1+2i(n+1)kR]-1}{4k^{2}} \notag\\&+\frac{\pi}{16 k^2 n^2}   \left\{-2 i\chi^{2} [2 (\chi+2)  k^2R^2-1] \left[\Ei(-2 i  (n-1)k R)-\Ei(-2 i  (n+1)k R)+\log\frac{n+1}{n-1}\right]\right.\notag\\&+4 i n (2 \chi ^{2}k^2  R^2-\chi-2)+ e^{-2 i  (n+1)k R}(n-1)^2 \left[2  (n+1) (\chi+2)k R+i (n^{2}+4n+1)\right]\notag\\&\left.- e^{-2 i  (n-1)k R}(n+1)^2  \left[2 (n-1)(\chi+2)k R+i (n^{2}-4n +1)\right]\phantom{\frac12}\hspace{-1em}\right\},\label{eq:nonperturb}\end{align}\end{footnotesize}and compare it to the exact solution (Mie series) along with various well-known approximation formulae in the physical literature. Here in \eqref{eq:nonperturb}, for a special incident beam $\bm E_\mathrm{inc}(\bm r)=\bm e_x\exp(-ikz) $, we use a superscripted asterisk to denote complex conjugation and we have $ \Ei(z):=-\int_{-z}^\infty e^{-t}\frac{\D t}{t}$.

\section{Energy conservation in electromagnetic scattering\label{sec:erg_conserv}}
The Born operator $\hat{\mathscr B}=\hat I-\chi\hat {\mathscr G}:L^2(V;\mathbb C^3)\longrightarrow L^2(V;\mathbb C^3) $ satisfies  an energy conservation law (generalized optical theorem): \begin{align}\sigma_{\mathrm{sc}} \colonequals
\frac{|\chi|^2 k^4}{16\pi^{2}\ } \varoiint_{|\bm n|=1}\left|\bm n\times\iiint_V \bm E(\bm r' ) e^{ik\bm n \cdot \bm r' }  \D^3\bm r' \right|^2\D\Omega=\I\left(\chi k\langle \bm E-\hat{\mathscr  B}\bm E,\bm E\rangle_V \right),\label{eq:GOT1}
\end{align}where $ \D\Omega=\sin\theta\D\theta\D\phi$ stands for the infinitesimal steric angles, and  $ \langle \bm F,\bm G\rangle_V\colonequals \iiint_V\bm F^*(\bm r)\cdot\bm G(\bm r)\D^3\bm r$
denotes the inner product on the Hilbert space $ L^2(V;\mathbb C^3)$. In \cite[Theorem 2.1]{QualEM}, we proved  \eqref{eq:GOT1} using Fourier analysis. In the opening paragraph of \cite[\S4.2]{QuantEM1Norm}, we gave a physical interpretation of   \eqref{eq:GOT1}  as redistribution of the electromagnetic work done by the incident field into dissipated and scattered energies.

In  \cite[\S2.2]{QualEM}, the energy conservation law \eqref{eq:GOT1} played a decisive r\^ole in the proofs of ``uniqueness theorems'' for light scattering. Later in \S\ref{sec:evolv_semigrps} of the current work, we shall use   \eqref{eq:GOT1} again to justify an integral representation for $ \bm E= (\hat I-\chi\hat{\mathscr G})^{-1}\bm E_\mathrm{inc}$ in terms of the evolution semigroup $ \{\exp(-i\tau\hat{\mathscr G})|\tau\geq0\}$. Apart from this, the generalized optical theorem \eqref{eq:GOT1} and its equivalent forms will be  used elsewhere in this article, whenever a discussion on the total scattering cross-section $\sigma_{\mathrm{sc}} $ is needed.

\section{Evolution semigroups\label{sec:evolv_semigrps}}The developments in this section are motivated by  the following integral representation of the solution to the light scattering problem: \begin{align}\bm E= (\hat I-\chi\hat{\mathscr G})^{-1}\bm E_\mathrm{inc}=\frac{1}{i\chi}\int_0^{+\infty}e^{i\tau/\chi}\exp(-i\tau\hat{\mathscr G})\bm E_\mathrm{inc}\D \tau,\quad\I\chi<0.\label{eq:IRS}\end{align}Here, $\{\exp(-i\tau\hat{\mathscr G})\colonequals \hat I+\sum_{s=1}^\infty(-i\tau\hat{\mathscr G})^s/s!|\tau\geq0\}$ is an evolution semigroup  with infinitesimal generator $-i\hat{\mathscr G} $. (In our context, the variable $\tau$ is dimensionless and has nothing to do with the elapse of physical time, so the term ``evolution'' has only a formal meaning.)
If we define $\bm \psi_{\tau}(\bm r)\equiv\bm \psi(\bm r;\tau)\colonequals (\exp(-i\tau\hat{\mathscr G})\bm E_\mathrm{inc})(\bm r) $, then we have the evolution equation \begin{align}i\frac{\partial\bm \psi(\bm r;\tau)}{\partial\tau}=\hat{\mathscr G}\bm \psi(\bm r;\tau)=\nabla\times\nabla\times\iiint_V\frac{\bm \psi(\bm r';\tau)e^{-ik|\bm r-\bm r'|}}{4\pi|\bm r-\bm r'|}\D^3\bm r'-\bm \psi(\bm r;\tau)\label{eq:EqMotion}\end{align} with initial condition $\bm \psi_{0}(\bm r)\equiv\bm \psi(\bm r;0)=\bm E_\mathrm{inc}(\bm r)  $ equal to the incident field.

The right-hand side of \eqref{eq:IRS} is a Bochner--Dunford integral \cite{Bochner,Dunford,Hille1942} (vector-valued integral in infinite-dimensional linear spaces), so it is a non-trivial extension of the scalar-valued integration formula
\begin{align}(1-\chi G)^{-1}=\frac{1}{i\chi}\int_0^{+\infty}e^{i\tau/\chi}\exp(-i\tau G)\D \tau,\quad\I\chi<0.\end{align}

To rigorously justify the integral representation in  \eqref{eq:IRS}, we need to check the Hille--Yosida--Lumer--Phillips criteria \cite{Hille1948,Yosida1948,LP1961,Yosida,LaxPhillips,erg} in the theory of evolution semigroups for infinite-dimensional Hilbert spaces. In simple terms, the Hille--Yosida--Lumer--Phillips criteria  boil down to two parts in the current problem: \begin{enumerate}[label=\arabic*.,widest=2]
\item
The operator $ (\hat I-\chi\hat{\mathscr G})^{-1}$ is non-singular in the entire open lower-half plane $ \I\chi<0$;
\item The energy of $\bm \psi_{\tau}$ does not increase as $\tau$ elapses, \textit{i.e.}
\begin{align}\iiint_V|\bm\psi_{\tau_1}(\bm r)|^{2}\D^3\bm r\geq \iiint_V|\bm\psi_{\tau_2}(\bm r)|^{2}\D^3\bm r,\quad 0\leq\tau_{1}\leq \tau_2<+\infty.\end{align}
\end{enumerate}  Here, Part~1 is guaranteed by the spectral analysis in  \cite{QualEM}. Part~2 follows from the equation of motion \eqref{eq:EqMotion}  as in the following computation \begin{align}\label{eq:moddiff}\frac{\D}{\D\tau}\langle\bm \psi_\tau,\bm \psi_\tau\rangle_{V}={}&-2\I\left\langle i\frac{\partial\bm \psi_\tau}{\partial\tau},\bm \psi_\tau\right\rangle_{V}=-2\I\langle \hat{\mathscr G}\bm \psi_\tau,\bm \psi_\tau\rangle_{V}\notag\\={}&-\frac{k^3}{8\pi^2 }\varoiint_{|\bm n|=1}\left\vert\bm n\times\iiint_V \bm \psi_\tau(\bm r' ) e^{ik\bm n \cdot \bm r' }  \D^3\bm r' \right\vert^2\D\Omega\leq0.\end{align} In the last line of \eqref{eq:moddiff}, we have use the generalized optical theorem \eqref{eq:GOT1}.
\subsection{Strong stability}\begin{proof}[Proof of Theorem~\ref{thm:semigrp_prop}\ref{itm:strong_stab}] According to the qualitative spectral analysis summarized in \S\ref{subsec:qual_summary},  we have $\sigma_p^\Phi(\hat{\mathscr G})\cap\mathbb R =\varnothing $ if the exterior volume   $\mathbb R^3\smallsetminus(V\cup\partial V)$ is  connected. Here, $\sigma_p^\Phi(\hat{\mathscr G}) $ stands for the  point spectrum (totality of eigenvalues) of the bounded linear operator $ \hat{\mathscr G}:\Phi(V;\mathbb C^3)\longrightarrow\Phi(V;\mathbb C^3)$. Using the optical resonance theorem \cite[Theorem 3.1 and Proposition 3.4]{QualEM}, we  obtain     $\sigma^\Phi(\hat{\mathscr G})\cap\mathbb R=\{0,-1/2\}$. Now, the absence of point spectrum on the real axis $\sigma^\Phi_{p}(\hat{\mathscr G})\cap \mathbb R=\varnothing $ and the countability of the spectrum on the real axis $\sigma^\Phi(\hat{\mathscr G})\cap\mathbb R=\{0,-1/2\} $ will allow us to deduce the strong stability \begin{align}\lim_{\tau\to+\infty}\Vert\exp(-i\tau\hat{\mathscr G})\bm F\Vert_{L^2(V;\mathbb C^3)}=0,\forall \bm F\in\Phi(V;\mathbb C^3)\end{align} of the operator semigroup $\{\exp(-i\tau\hat{\mathscr G})|\tau\geq0\}$, according to the Arendt--Batty--Lyubich--V\~u theorem (see  \cite{AB88,LV88}, as well as  \cite[pp.~326--327]{erg}).
 \end{proof}\begin{remark}We note that the evolution semigroup  provides some additional insights regarding the efficacy of the perturbative Born series.
From \eqref{eq:moddiff}, we know that the $ L^2$-norm $\Vert\exp(-i\tau\hat{\mathscr G})\bm E_\mathrm{inc} \Vert_{L^2(V;\mathbb C^3)}\colonequals \sqrt{\langle\exp(-i\tau\hat{\mathscr G})\bm E_\mathrm{inc}, \exp(-i\tau\hat{\mathscr G})\bm E_\mathrm{inc}\rangle_{V}}$ should not increase in $\tau$, but the truncated Taylor series of $ \exp(-i\tau\hat{\mathscr G})\bm E_\mathrm{inc}$ (and accordingly, the truncated Born series) may not necessarily honor this monotone energy decay.

To see this, we first note that a natural bound estimate of the typical amplitude in each term of the Taylor expansion for $ \exp(-i\tau\hat{\mathscr G})\bm E_\mathrm{inc}$ is given by
\begin{align}\left\Vert\frac{(-i\tau\hat{\mathscr G})^s}{s!} \bm E_\mathrm{inc}\right\Vert_{L^2(V;\mathbb C^3)}\leq\frac{\tau^s\Vert\hat{\mathscr G} \Vert_{L^2(V;\mathbb C^3)} ^s}{s!}\Vert\bm E_\mathrm{inc} \Vert_{L^2(V;\mathbb C^3)}\end{align}where $\Vert\hat{\mathscr G} \Vert_{L^2(V;\mathbb C^3)}  $ is the operator norm  of  $ \hat{\mathscr G}$. Only for $s\gtrsim\tau \Vert\hat{\mathscr G} \Vert_{L^2(V;\mathbb C^3)}$ does the upper bound estimates of individual terms decay rapidly in $s$. In other words, as $ \tau$ increases, there are about $\tau \Vert\hat{\mathscr G} \Vert_{L^2(V;\mathbb C^3)} $ terms of the Taylor expansion   that significantly  contribute to the sum $ \exp(-i\tau\hat{\mathscr G})\bm E_\mathrm{inc}$. The net result is the cancellation of many large terms that gives rise to a small quantity consistent with the energy decay of  $ \exp(-i\tau\hat{\mathscr G})\bm E_\mathrm{inc}$. Therefore, a truncated Taylor expansion of  $ \exp(-i\tau\hat{\mathscr G})\bm E_\mathrm{inc}$ may severely misrepresent the behavior of the integrand of \eqref{eq:IRS} beyond the short-term ($\tau\to0^+ $) regime, and the according truncated Born series  for $ (\hat I-\chi\hat{\mathscr G})^{-1}\bm E_\mathrm{inc}  $ may not give accurate enough approximations to the solution.

In the light of this, a more sensible way to improve the accuracy of approximation is  not to incorporate more and more terms in the Born series expansion, but to develop a better estimate of the evolution semigroup $\exp(-i\tau\hat{\mathscr G}) $ beyond the short-term expansion. We shall proceed with this line of thought and develop a non-perturbative alternative in \S\ref{subsec:nonpert_approx}.
\eor\end{remark}\subsection{Generalized skin effect}In Theorem~\ref{thm:semigrp_prop}\ref{itm:skin_effect}, we claimed that \begin{align}\lim_{|\chi|\to+\infty}|\chi|\iiint_Vf(\bm r)((\hat I+i|\chi| \hat{\mathscr G})^{-1}\bm E_\mathrm{inc})(\bm r)\D^3\bm r=\mathbf0,\quad \forall f\in C^\infty_0(V;\mathbb C)\end{align}so long as the incident wave  $ \bm E_\mathrm{inc}\in \Phi(V;\mathbb C^3)$ is a transverse vector field that solves the Helmholtz equation $ (\nabla^2+k^2)\bm E_\mathrm{inc}=\mathbf 0$ in the  distributional sense. We shall give a proof of this statement using the strongly stable semigroup  $\{\exp(-i\tau\hat{\mathscr G})|\tau\geq0\}$  and explain the physical context that leads to the name ``generalized skin effect''.
\begin{proof}[Proof of Theorem~\ref{thm:semigrp_prop}\ref{itm:skin_effect}] By hitting the operator $ (\nabla^2+k^2)$ on both sides of the equation of motion \eqref{eq:EqMotion}, we can justify the following computations where derivatives are taken in the distributional sense:
\begin{align}i(\nabla^2+k^2)\frac{\partial\bm \psi_\tau(\bm r)}{\partial\tau}={}&(\nabla^2+k^2)(\hat  {\mathscr G}\bm \psi_\tau)(\bm r)=(\nabla^2+k^2)k^2\iiint_V\bm \psi_\tau(\bm r')\frac{e^{-ik|\bm r-\bm r'|}}{4\pi|\bm r-\bm r'|}\D^3\bm r'\notag\\={}&-k^{2}\bm \psi_\tau(\bm r),\text{ where } \bm \psi_\tau=\exp(-i\tau\hat{\mathscr G})\bm E_\mathrm{inc}.\label{eq:D2k2Eq}\end{align}In the computation above, we have made use of the fact that
\begin{align}&(\nabla^2+k^2)\nabla\left[\nabla\cdot\iiint_V\bm \psi_\tau(\bm r')\frac{e^{-ik|\bm r-\bm r'|}}{4\pi|\bm r-\bm r'|}\D^3\bm r'\right]\notag\\={}&-(\nabla^2+k^2)\nabla\varoiint_{\partial V}\bm n'\cdot\bm \psi_\tau(\bm r')\frac{e^{-ik|\bm r-\bm r'|}}{4\pi|\bm r-\bm r'|}\D S'=\mathbf0,\, \forall\bm r\in V\end{align}where the surface integral should be interpreted as the canonical pairing between $H^{-1/2} $ and $ H^{1/2}$, as in   \cite[\S3]{QualEM}.

Applying the definition of distributional derivatives to \eqref{eq:D2k2Eq}, we  arrive at
\begin{align}i\iiint_Vf(\bm r)(\nabla^2+k^2)\frac{\partial\bm \psi_\tau(\bm r)}{\partial\tau}\D^3\bm r={}&i\iiint_V\frac{\partial\bm \psi_\tau(\bm r)}{\partial\tau}(\nabla^2+k^2)f(\bm r)\D^3\bm r\notag\\={}&i\frac{\D}{\D\tau}\iiint_V\bm \psi_\tau(\bm r)(\nabla^2+k^2)f(\bm r)\D^3\bm r\notag\\={}&-k^{2}\iiint_Vf(\bm r)\bm \psi_\tau(\bm r)\D^3\bm r,\quad\forall f\in C^\infty_0(V;\mathbb C).\label{eq:D2k2Eq1}\end{align}We integrate \eqref{eq:D2k2Eq1} over $\tau\in[0,+\infty) $, and employ the strong stability condition
\begin{align}\lim_{\tau\to+\infty}\Vert\bm \psi_\tau\Vert_{L^2(V;\mathbb C^3)}=0\end{align} as  well as the Helmholtz equation $(\nabla^2+k^2)\bm E_\mathrm{inc}=\mathbf 0$ for the transverse incident wave, in order to deduce\begin{align}0={}&i\iiint_V\bm E_\mathrm{inc}(\bm r)(\nabla^2+k^2)f(\bm r)\D^3\bm r-i\lim_{T\to+\infty}\iiint_V\bm \psi_T(\bm r)(\nabla^2+k^2)f(\bm r)\D^3\bm r\notag\\={}&k^{2}\lim_{T\to+\infty}\int_0^{T}\left[\iiint_Vf(\bm r)\bm \psi_\tau(\bm r)\D^3\bm r\right]\D\tau,\quad\forall f\in C^\infty_0(V;\mathbb C).\label{eq:preGSE}\end{align}

Now, using the integral representation \eqref{eq:IRS} of the solution to the Born equation, we have
\begin{align}\chi\iiint_Vf(\bm r)((\hat I-\chi \hat{\mathscr G})^{-1}\bm E_\mathrm{inc})(\bm r)\D^3\bm r=\frac{1}{i}\iiint_Vf(\bm r)\left[\int_0^{+\infty}\bm \psi_\tau(\bm r)e^{i\tau/\chi}\D\tau\right]\D^3\bm r.\end{align}Along the negative  $\I\chi$-axis where $\chi=-i|\chi|$, the exponential decay of $e^{i\tau/\chi}=e^{-\tau/|\chi|} ,\tau>0$ and the uniform boundedness of $ \Vert \bm \psi_\tau\Vert_{L^2(V;\mathbb C^3)}\leq\Vert \bm E_\mathrm{inc}\Vert_{L^2(V;\mathbb C^3)},\tau>0$ allow us to interchange the integrations with respect to $\D \tau $ and $\D^3\bm r$ (owing to the Fubini theorem), and derive the following formula\begin{align}|\chi|\iiint_Vf(\bm r)((\hat I+i|\chi| \hat{\mathscr G})^{-1}\bm E_\mathrm{inc})(\bm r)\D^3\bm r=\int_0^{+\infty}\left[\iiint_Vf(\bm r)\bm \psi_\tau(\bm r)\D^3\bm r\right]e^{-\tau/|\chi|} \D\tau.\end{align}In the limit as $\alpha\colonequals 1/|\chi|\to0^+ $, we  prove \eqref{eq:GSE} by showing that
\begin{align}\int_0^{+\infty}\bm \psi_{\tau,f}e^{-\alpha\tau}\D\tau\colonequals{}& \int_0^{+\infty}\left[\iiint_Vf(\bm r)\bm \psi_\tau(\bm r)\D^3\bm r\right]e^{-\alpha\tau}\D\tau\notag\\\to{}&\int_0^{+\infty}\left[\iiint_Vf(\bm r)\bm \psi_\tau(\bm r)\D^3\bm r\right]\D\tau=0.\label{eq:chi_inf}\end{align}Here, the  convergence in \eqref{eq:chi_inf} follows from the Abel criterion for improper integrals: the convergence of the improper integral in \eqref{eq:preGSE} and the bounded monotone factor $e^{-\alpha\tau} $ together ensure the uniform convergence of
\begin{align}\bm \psi_{f}(\alpha)\colonequals \int_0^{+\infty}\bm \psi_{\tau,f}e^{-\alpha\tau}\D\tau=\lim_{T\to+\infty}\int_0^{T}\bm \psi_{\tau,f}e^{-\alpha\tau}\D\tau\end{align}with respect to $\alpha\in[0,+\infty) $, hence  $\bm \psi_{f}(\alpha) $ is continuous with respect to $\alpha\in[0,+\infty)$.
 \end{proof}
\begin{remark}In physical terms, \eqref{eq:GSE}  characterizes the skin effect in metallic conductors (treated as ``dielectrics'' with purely imaginary susceptibilities): as the electric conductivity $\sigma_\mathrm{cond}=|\chi|\omega $ increases to infinity, the  eddy current (the metallic counterpart of dielectric polarization current)  density attributed to the internal field $\bm J(\bm r)=i\omega\epsilon_0\chi \bm E(\bm r)=\omega\epsilon_0|\chi|((\hat I+|\chi| \hat{\mathscr G})^{-1}\bm E_\mathrm{inc})(\bm r) $ converges ``locally'' to zero inside the dielectric volume  $V$.
In other words, only the  current density near the dielectric boundary $\partial V$ is relevant to highly conducting material with  $ i\chi=\sigma_\mathrm{cond}/\omega\to+\infty$.

We call the  limit relation in  \eqref{eq:GSE}   a  ``generalized skin effect'' because it is rigorously established for any transverse incident wave, and for any smooth dielectric  with connected exterior  volume   $\mathbb R^3\smallsetminus(V\cup\partial V)$. \eor\end{remark}

\section{Perturbative and non-perturbative solutions to the Born equation\label{sec:nonpert_approach}}
\subsection{Perturbative theory for spherical scatterers: Born approximation and Rayleigh--Gans scattering}In Mie scattering \cite{Mie},  we have a  plane wave  $\bm E_\mathrm{inc}(\bm r)=\bm e_x\exp(-ikz) $ incident upon a dielectric sphere $V=O(\mathbf0,R)=\{(x,y,z)\in\mathbb R^3|x^2+y^2+z^2<R^2\} $.
We choose to elaborate on the perturbative solutions to light scattering on spherical particles, for  two aesthetic reasons:  (i)~The exact solution to Mie scattering is known in the form of infinite series~(see \cite{Mie} or \cite[\S9.22]{vandeHulst})---for  given values of  $ \xi=kR$ and $ n=\sqrt{1+\chi}>1$, the total scattering cross-section $ \sigma_{\mathrm{sc}}$ satisfies\footnote{Such an infinite series does not have a closed form. In practice,  the first $ kR+O((kR)^{1/3})$ terms \cite{AlvaroRanhaNeves2012} amount to satisfactory numerical accuracy. }\begin{footnotesize}\begin{align}&
\frac{\sigma_{\mathrm{sc}}}{\pi R^{2}}\notag\\={}&\frac{2}{\xi^{2}}\R\sum_{\ell=1}^\infty(2\ell+1)\left[ \frac{\det\begin{pmatrix}n^2j_\ell(n\xi) & \left.\frac{\D}{\D z}\right|_{z=n\xi}[zj_\ell(z)] \\
j_\ell(\xi) & \left.\frac{\D}{\D z}\right|_{z=\xi}[zj_\ell(z)] \\
\end{pmatrix}}{\det\begin{pmatrix}n^2j_\ell(n\xi) & \left.\frac{\D}{\D z}\right|_{z=n\xi}[zj_\ell(z)] \\
h^{(2)}_\ell(\xi) & \left.\frac{\D}{\D z}\right|_{z=\xi}[zh^{(2)}_\ell(z)] \\
\end{pmatrix}} +\frac{\det\begin{pmatrix}j_\ell(n\xi) & \left.\frac{\D}{\D z}\right|_{z=n\xi}[zj_\ell(z)] \\
j_\ell(\xi) & \left.\frac{\D}{\D z}\right|_{z=\xi}[zj_\ell(z)] \\
\end{pmatrix}}{\det\begin{pmatrix}j_\ell(n\xi) & \left.\frac{\D}{\D z}\right|_{z=n\xi}[zj_\ell(z)] \\
h^{(2)}_\ell(\xi) & \left.\frac{\D}{\D z}\right|_{z=\xi}[zh^{(2)}_\ell(z)] \\
\end{pmatrix}} \right],\label{eq:Mie_sc_series}
\end{align}\end{footnotesize}where $ j_\ell(z)\colonequals(-z)^\ell\left( \frac{\D}{z\D z} \right)^\ell\frac{\sin z}{z} $ and $ h^{(2)}_\ell(z)\colonequals(-z)^\ell\left( \frac{\D}{z\D z} \right)^\ell\frac{ie^{-i z}}{z} $ are spherical Bessel functions; (ii)~The Born approximation of the forward scattering amplitude $ \langle\bm E_\mathrm{inc},-\chi k(\hat I-\chi\hat{\mathscr G})^{-1}\bm E_\mathrm{inc} \rangle_V$, which is $  \langle\bm E_\mathrm{inc},-\chi k(\hat I+\chi\hat{\mathscr G})\bm E_\mathrm{inc} \rangle_V$,   can be evaluated in closed functional form. Therefore, we may  compare (later in \S\ref{subsec:approx_comp})  the exact benchmark  and perturbative solutions in a relatively neat fashion.

 For real-valued $\chi$, the  Born approximation to the total scattering cross-section  $\I \langle\bm E_\mathrm{inc},-\chi k(\hat I-\chi\hat{\mathscr G})^{-1}\bm E_\mathrm{inc} \rangle_{V=O(\mathbf0,R)}$ is given by the Rayleigh--Gans formula \cite{Rayleigh,Gans,vandeHulst}:
\begin{align}\label{eq:RGB}-\chi^2k\I\langle\bm E_\mathrm{inc},\hat{\mathscr G}\bm E_\mathrm{inc} \rangle_{V}={}&\frac{\pi R^2\chi^2}{4}\left\{\frac52+2k^2R^2-\frac{\sin(4kR)}{4kR}-\frac{7[1-\cos(4kR)]}{16k^{2}R^2}+\right.\notag\\{}&\left.+ \left( \frac{1}{2k^{2}R^2}-2 \right) [\gamma _{0}+\log (4k R)- \Ci(4 k R)]\right\},\end{align} where $ \Ci(x)\colonequals-\int_x^{+\infty}\frac{\cos t}{t}\D t$ is the cosine integral, and $\gamma_0\colonequals \lim_{M\to\infty}(\sum_{m=1}^M\frac1m-\log M)=0.577215+ $ is the Euler--Mascheroni constant.

As the  $ L^2$-norm of the Green operator $ \hat{\mathscr G}\colon{\Phi(V;\mathbb C^3)}\longrightarrow {\Phi(V;\mathbb C^3)}$ satisfies
\cite[Theorem 2.1 and Lemma 4.6]{QuantEM1Norm}\begin{align}&\Vert\hat {\mathscr G}\Vert_{\Phi(V;\mathbb C^3)}\leq G(kR)\notag\\\colonequals{}&\min \left\{\frac12+\frac{3}{5 \pi } \left(\frac{4 \pi }{3}\right)^{2/3} (kR)^{2},2+\frac{4}{5} k R+\frac{1}{2\pi} \left(  \sqrt[3]{k R+\frac{1}{2}}+\frac{4}{9 \sqrt[3]{k R_{}+\frac{1}{2}}} \right )\right\}\notag\\{}&+\min \left\{\frac{11}{45}(kR)^3,\frac{9}{16}kR\right\},\end{align}and the positive definite operator $\hat\gamma_S\colonequals -\I\hat{\mathscr G}=-(\hat{\mathscr G}-\hat{\mathscr G}^*)/(2i)\colon{L^{2}(V;\mathbb C^3)}\longrightarrow {L^{2}(V;\mathbb C^3)} $ satisfies
\begin{align}\Vert\hat {\gamma}_S\Vert_{L^2(V;\mathbb C^3)}\leq g(kR)\colonequals \min \left\{\frac{11}{45}(kR)^3,\frac{9}{16}kR\right\},\end{align}we may proceed with the error bound estimate  for the Born approximation of total scattering cross-section
(cf.~\cite[(4.79)]{QuantEM1Norm})\begin{align}\chi^{2}k|\I\langle\bm E_\mathrm{inc},\hat{\mathscr G}\bm E_\mathrm{inc} \rangle_{V}-\I\langle\bm E,\hat{\mathscr G}\bm E \rangle_{V}|\leq{}&\frac{4\pi R^2}{3}\frac{|\chi|^{3}kRg(kR)G(kR)}{[1-\chi G(kR)]^{2}}[|\chi|G(kR)+2].\end{align}for $ |\chi|G(kR)<1$ and $ \I \chi=0$.

It has been well recognized that the Rayleigh--Gans formula gives a good approximation to the light scattering problem only when the susceptibility is ``very small'' $|\chi|\ll1 $ \cite{vandeHulst}. Here, we may quantify the smallness by investigating the limit behavior of the Rayleigh--Gans formula and its error bound estimate. Expanding in a neighborhood of $R=0$, we recover the Rayleigh quartic law
\begin{align}-\chi^2k\I\langle\bm E_\mathrm{inc},\hat{\mathscr G}\bm E_\mathrm{inc} \rangle_{V}\sim\frac{8\pi k^{4}R^6\chi^2}{27}\end{align} which shows a dependence of total scattering cross-section proportional to the inverse fourth power of the wavelength. Meanwhile, expanding the error bound estimate in the limit of $R\to0^+$, we obtain a conservative guess of the deviation caused by the Born approximation as
\begin{align}\sim\frac{44|\chi|^{3}\pi\l(|\chi|+4)k^{4}R^6}{135(2-|\chi|)^{2}}. \end{align}Thus, we see that the relative error is bounded by a factor of
$\frac{11}{10}\frac{|\chi|(|\chi|+4)}{(2-|\chi|)^{2}}$
in such a limit scenario. This gives a criterion for the susceptibility range where the Born approximation is accurate to our desired level. On the other hand, it is also known that the Rayleigh--Gans formula works well  when the relative phase shift $2(n-1)kR \sim \chi kR$ is ``very small'' $|\chi |kR\ll1 $ \cite[\S\S6--7]{vandeHulst}. To see what this means, we may pick a conservative estimate based on a sufficient condition for the convergence of the Born series (and hence the finiteness of the error bound estimate), which is $  |\chi|G(kR)<1$. For small values of $|\chi|$, we only need to consider relatively large values of $kR$, so that the identity $G(kR)= 2+\frac{109}{80} k R+\frac{1}{2\pi} \left(  \sqrt[3]{k R+\frac{1}{2}}+\frac{4}{9 \sqrt[3]{k R_{}+\frac{1}{2}}} \right)$ may be used. This leaves us another quantitative criterion for the efficacy of Born approximation:
\begin{align}|\chi|\left[ 2+\frac{109}{80} k R+\frac{1}{2\pi} \left(  \sqrt[3]{k R+\frac{1}{2}}+\frac{4}{9 \sqrt[3]{k R_{}+\frac{1}{2}}} \right ) \right]<1.\end{align}

From these discussions, we can clearly see that the Born approximation works fine only when the dielectric medium perturbs the incident field minimally. Some practical applications of light scattering may have sphere sizes or susceptibility values that lie outside this perturbative regime, which calls for a non-trivial effort (see \S\S\ref{subsec:nonpert_approx}--\ref{subsec:nonpert_Mie} below) to complement the approximations based on Born series.

\subsection{Heuristics for the non-perturbative approximation\label{subsec:nonpert_approx}}This subsection is more  experimental than the rest of the current article.
Instead of deriving a non-perturbative formula with full rigor, our modest goal is to present an approximation scheme that is compatible with semigroup asymptotics, consistent with physical picture, and amenable to  computation.

As mentioned before, an improvement to lower-order Born approximation would be a good estimate of the evolution  semigroup $\exp(-i\tau\hat{\mathscr G})$ for large values of $ \tau$, instead of incorporation of more  terms in a Taylor series expansion.

In this section, we will exploit the long-term ($ \tau\to+\infty$) behavior of the evolution  semigroup $\exp(-i\tau\hat{\mathscr G})$ to derive a non-perturbative approximation with higher accuracy than the  Born series.

We will first perform asymptotic analysis of the ``bulk contribution'' in
the long-term limit $ \tau\to+\infty$. To begin, we note the following integral identity
\begin{align}1-\exp(-i\tau G)=\int_0^{+\infty}\exp\left(-\frac{s^2}{4i\tau
G}\right)J_1(s)\D s,\quad \I G<0,\label{eq:J1int}\end{align}which holds for complex numbes $G\in \mathbb C$ in the lower half-plane, and the first-order Bessel function  $ J_1$. From \cite[Theorem 1.1]{QualEM}, we see that all the eigenvalues of the Green operator $\hat{\mathscr G} $  lie in the lower half-plane as well, \textit{i.e.}~the ``physical point spectrum'' satisfies $ \sigma_p^\Phi(\hat{\mathscr G})\subset \{\lambda\in \mathbb C|\I\lambda<0\}$. Moreover, the operator $\hat {\mathscr G }:\Phi(V;\mathbb C^3)\longrightarrow\Phi(V;\mathbb C^3)$ has a densely-defined\footnote{Here, the domain of definition $ \hat{\mathscr G}\Phi(V;\mathbb C^3)$ is a dense subset of $ \Phi(V;\mathbb C^3)$, because the continuous spectrum $ \sigma_c^\Phi(\hat{\mathscr G})$ contains the origin $0\in\mathbb C$ \cite[Proposition 3.4]{QualEM}.} unbounded inverse \begin{align}-\left(\hat I+\frac{\nabla^2}{k^2}\right):\hat{\mathscr G}\Phi(V;\mathbb C^3)\longrightarrow\Phi(V;\mathbb C^3) \text{ such that }-\left(\hat I+\frac{\nabla^2}{k^2}\right)\hat{\mathscr G}\bm E=\bm E,\forall\bm E\in\Phi(V;\mathbb C^3).\end{align} Thus, heuristically speaking, we may generalize the integral identity \eqref{eq:J1int} into the operator form
\begin{align}\text{``}\hat I-\exp(-i\tau \hat{\mathscr G})=\int_0^{+\infty}\exp\left(-\frac{is^2}{4\tau }\right)\exp\left(-\frac{is^2\nabla^{2}}{4\tau k^{2}}\right)J_1(s)\D s\text{''}.\label{eq:opJ1int}\end{align}
Here, $ \exp(-\frac{is^2\nabla^2}{4\tau k^2})$  hearkens back to the Schr\"odinger semigroup in quantum mechanics that governs the evolution of wave functions.

In the formula above, we have added quotation marks  as a  caveat
for two possible weaknesses of the heuristic generalization:

\begin{enumerate}[label=(\alph*), widest=a]
\item
In passage from \eqref{eq:J1int} to \eqref{eq:opJ1int}, we have literally
pretended that $ \hat{\mathscr G}$ is a finite-dimensional square matrix:
its left inverse equals its right inverse and the evolution semigroups are
identified with matrix exponentials;  \item Upon  writing $\exp(-\frac{is^2\nabla^2}{4\tau k^2}) $ in  \eqref{eq:opJ1int}, we have not explicitly specified the boundary
condition for the Laplace operator $\nabla^2 $, thus adding ambiguity to
the notation.

\end{enumerate}  Here, point (a) might not amount to a serious physical flaw
in practice. This is because the Born equation is, after all, a continuum idealization
of the interaction between light and a physical medium built upon finite-sized
atoms and
molecules. Essentially,  macroscopic light scattering  may involve very large but still
finite degrees of freedom, so it is physically acceptable to ``discretize''  $ \hat{\mathscr G}$ as a finite-dimensional square matrix. Point (b) might
 not pose as a significant numerical obstacle either, if our main interest is deducing $(\exp(-i\tau \hat{\mathscr G})\bm E_\mathrm{inc})(\bm r) $ in the long-term limit $ \tau\to+\infty$ and investigating the ``bulk region'' points  $\bm r$  that are not too close to the boundary $ \partial V$. When
 $\tau $ is large and the ``propagation time'' $\propto s^2/\tau$ is small, the points in the ``bulk region''
 may not have enough chance to  feel the effect of  the boundary, hence the
 boundary conditions for the Laplace operator $ \nabla^2$ do not matter. We
 momentarily ignore the physical boundary $ \partial V$ and write down the asymptotic expression
\begin{align}\label{eq:freeprop}&
\left(\exp\left(\frac{s^2{\nabla^2}}{4i\tau k^{2}}\right)\bm E_\mathrm{inc}\right)(\bm
r)\sim\left(\exp\left(\frac{s^2\overset{_\circ\text{\hspace{0.5em}}}{\nabla^2}}{4i\tau k^{2}}\right)\bm E_\mathrm{inc}\right)(\bm
r)\notag\\={}&\iiint_V \overset{_\circ}{\kappa}_{s^2/(4i\tau k^2)}(\bm r,\bm r')\bm E_\mathrm{inc}(\bm
r')\D^3\bm r',\quad \text{where }\overset{_\circ}{\kappa}_{\beta}(\bm r,\bm r')=\frac{e^{-|\bm
r-\bm r'|^2/(4\beta)}}{(4\pi\beta)^{3/2}}
\end{align}
 for ``bulk contributions'' with $ \tau\to+\infty$. Here, $ \overset{_\circ\text{\hspace{0.5em}}}{\nabla^2}$ refers to the usual Laplace operator defined in the free space $\mathbb R^3 $, and $\overset{_\circ}{\kappa}_{\beta}(\bm r,\bm r') $ is its associated free-space ``heat kernel'' or  ``Schr\"odinger propagator''.
Here, we have regarded $ s^2/\tau$ as a small quantity in the expression of $\exp(-\frac{is^2\nabla^2}{4\tau k^2}) $ for $\tau\to+\infty$, irrespective of the value of $s$. This can be heuristically justified by the identity:
\begin{align}\bm E_\mathrm{inc}=\lim_{\tau\to+\infty}[\hat I-\exp(-i\tau \hat{\mathscr G})]\bm E_\mathrm{inc}=\lim_{\tau\to+\infty}\int _0^{+\infty}\exp\left(-\frac{is^2}{4(\tau+i0^{+}) }\right)\bm E_\mathrm{inc}J_1(s)\D s,\end{align}
which means that treating $  s^2/\tau$  as ``uniformly small'' in the limit of $\tau\to+\infty$ is quantitatively consistent with the strong stability of the evolution semigroup $ \lim_{\tau\to+\infty}\Vert\exp(-i\tau \hat{\mathscr G})\bm E_\mathrm{inc}\Vert_{L^2(V;\mathbb C^3)}$ $=0$ in light scattering.

Now, if we take the liberty of interchanging the integrations over $\D^3\bm r' $, $\D s$ and $\D \tau$, we will be able to formally derive an asymptotic formula for $\I\chi<0$ and $|\chi|\to+\infty$:\begin{footnotesize}{
\begin{align}&i\chi[\bm E_\mathrm{inc}(\bm
r)-((\hat I-\chi\hat{\mathscr G})^{-1}\bm E_\mathrm{inc})(\bm
r)]+\text{``boundary corrections''}\notag\\\sim{}&\int_0^{+\infty}\left\{\int_0^{+\infty}\exp\left(-\frac{is^2}{4\tau }\right)\left[\iiint_V \overset{_\circ}{\kappa}_{s^2/(4i\tau k^2)}(\bm r,\bm r')\bm E_\mathrm{inc}(\bm
r')\D^3\bm r'\right]J_1(s)\D s\right\}e^{i\tau/\chi}\D\tau\notag\\={}&\iiint_V\left\{\int_0^{+\infty}\left[ \int_0^{+\infty}\left( \frac{i\tau k^{2}}{\pi s^{2}} \right)^{3/2}\exp\left( -\frac{is^2}{4\tau }-\frac{i\tau k^{2}|\bm
r-\bm r'|^2}{s^{2}} +\frac{i\tau}{\chi}\right)\D\tau\right]J_1(s)\D s\right\}\bm E_\mathrm{inc}(\bm
r')\D^3\bm r'\notag\\={}&\iiint_V\left\{\int_0^{+\infty}\left[ \int_0^{+\infty}\left( \frac{i\tau' k^{2}}{\pi } \right)^{3/2}\exp\left( -\frac{i}{4\tau' }-i\tau' k^{2}|\bm
r-\bm r'|^2 +\frac{i\tau's^{2}}{\chi}\right)s^{2}\D\tau'\right]J_1(s)\D s\right\}\bm E_\mathrm{inc}(\bm
r')\D^3\bm r'\notag\\={}&\iiint_V\left\{\int_0^{+\infty}\left[ \int_0^{+\infty}e^{i\tau's^{2}/\chi}J_1(s)s^{2}\D s\right]\left( \frac{i\tau' k^{2}}{\pi } \right)^{3/2}\exp\left( -\frac{i}{4\tau' }-i\tau' k^{2}|\bm
r-\bm r'|^2 \right)\D\tau'\right\}\bm E_\mathrm{inc}(\bm
r')\D^3\bm r'\notag\\={}&-\frac{\chi^2}{4}\iiint_V\left[\int_0^{+\infty}\left( \frac{ik^{2}}{\pi  } \right)^{3/2}\exp\left( -\frac{i(1+\chi)}{4\tau' }-i\tau' k^{2}|\bm
r-\bm r'|^2 \right)\frac{\D\tau'}{\sqrt{\tau'}}\right]\bm E_\mathrm{inc}(\bm
r')\D^3\bm r'\notag\\={}&-i\chi^2 k^2\iiint_V\frac{e^{-i\sqrt{1+\chi}k|\bm
r-\bm r'|}}{4\pi|\bm
r-\bm r'|}\bm E_\mathrm{inc}(\bm
r')\D^3\bm r'.\label{eq:BulkAsym}\end{align}}\end{footnotesize}In the derivation of  \eqref{eq:BulkAsym}, we have made use of the substitution $\tau'=\tau/s^2 $ and exactly evaluated individual integrals with respect to $ \D s$ and $\D\tau'$. In the last step of \eqref{eq:BulkAsym}, the branch cut of the refractive index $n=\sqrt{1+\chi} $ is chosen such that $\R(i\sqrt{1+\chi})>0 $ for $ \I\chi<0$.

Now we point out that the ``boundary corrections'' we previously ignored can be asymptotically recovered by the following formula \begin{align}((\hat I-\chi\hat{\mathscr G})^{-1}\bm E_\mathrm{inc})(\bm
r)\sim{}&\bm E_\mathrm{inc}(\bm
r)+\chi k^2\iiint_V\frac{e^{-i\sqrt{1+\chi}k|\bm
r-\bm r'|}}{4\pi|\bm
r-\bm r'|}\bm E_\mathrm{inc}(\bm
r')\D^3\bm r'\notag\\{}&+\frac{\chi}{1+\chi}\nabla\left[ \nabla\cdot\iiint_V\frac{e^{-i\sqrt{1+\chi}k|\bm
r-\bm r'|}}{4\pi|\bm
r-\bm r'|}\bm E_\mathrm{inc}(\bm
r')\D^3\bm r' \right]\notag\\={}&\bm E_\mathrm{inc}(\bm
r)+\chi k^2\iiint_V\frac{e^{-i\sqrt{1+\chi}k|\bm
r-\bm r'|}}{4\pi|\bm
r-\bm r'|}\bm E_\mathrm{inc}(\bm
r')\D^3\bm r'\notag\\{}&-\frac{\chi}{1+\chi}\nabla\varoiint_{\partial V}\frac{\bm n'\cdot\bm E_\mathrm{inc}(\bm
r')e^{-i\sqrt{1+\chi}k|\bm
r-\bm r'|}}{4\pi|\bm
r-\bm r'|}\D^3\bm r'.\label{eq:bulkapprox}\end{align}
Here in \eqref{eq:bulkapprox}, we have introduced  a surface integral term with  prefactor  $\frac{\chi}{1+\chi} $ to correct the boundary effects. This correction term serves two purposes: (i)~It ensures that the asymptotic formula is consistent with both the transversality constraint $\nabla\cdot((\hat I-\chi\hat{\mathscr G})^{-1}\bm E_\mathrm{inc})(\bm r)=0,\bm r\in V $ and the Helmholtz equation $[\nabla^2+(1+\chi)k^2]((\hat I-\chi\hat{\mathscr G})^{-1}\bm E_\mathrm{inc})(\bm r)=\mathbf 0,\bm r\in V $; (ii)~It recovers the correct leading order Born series $(\hat I-\chi\hat{\mathscr G})^{-1}\bm E_\mathrm{inc}\sim \bm E_\mathrm{inc}+\chi\hat{\mathscr G}\bm E_\mathrm{inc} +\cdots$ in the limit of $|\chi|\to0$, up to order $O(\chi) $. Thus, the integral formula   \eqref{eq:bulkapprox} respects both the short-term ($\tau\to0^+$) and the long-term ($\tau\to+\infty$) asymptotic behavior of the evolution semigroup $ \exp(-i\tau\hat{\mathscr G})$.

Admittedly, neat as it may seem, the asymptotic formula in \eqref{eq:bulkapprox}  still has its limitations, in at least three respects.

First, we note that    \eqref{eq:bulkapprox} only takes care of both extremes of the asymptotic behavior of the evolution semigroup, but ignores whatever happens ``in between''. It is  definitely possible that a dominant contribution to the integral representation \eqref{eq:IRS} comes from ``intermediate regimes'' in the $\tau$-domain, rather than the short-term ($\tau\to0^+$) and long-term ($\tau\to+\infty$) extremes.

Second,
we can semi-quantitatively estimate the domain of validity for  \eqref{eq:bulkapprox}, by a singularity argument.
The right-hand side of \eqref{eq:bulkapprox} has a pole at $ \chi=-1$ and  a branch cut for $\chi<-1 $. Neither the pole nor the branch cut should be present in the solution to electromagnetic scattering by transverse incident fields, according to the optical resonance theorem \cite[Theorem 3.1]{QualEM}. Therefore, loosely speaking, the ``radius of convergence'' for   \eqref{eq:bulkapprox} does not exceed $1$, and one may not deduce reliable information therefrom if the refractive index $ n=\sqrt{1+\chi}$ is greater than $\sqrt2 $. Nonetheless, numerical experiments (in \S\ref{subsec:approx_comp}) suggest that the approximate formula  \eqref{eq:bulkapprox} might still work properly when the refractive index is a little larger than $ \sqrt2$.

Third, we need to physically quantify the applicability domain   for \eqref{eq:bulkapprox}, or equivalently, understand the conditions that justify our strategy of ``fixing two extreme ends''. We  assume, without loss of generality, that  the incident wave $\bm E_\mathrm{inc}\in C^\infty(V\cup\partial V;\mathbb C^3) $ is smooth up to the dielectric boundary, and convert the right-hand side of \eqref{eq:bulkapprox} into a surface integral form \begin{footnotesize}{\begin{align}&\bm E_\mathrm{inc}(\bm
r)+\chi k^2\iiint_V\frac{e^{-i\sqrt{1+\chi}k|\bm
r-\bm r'|}}{4\pi|\bm
r-\bm r'|}\bm E_\mathrm{inc}(\bm
r')\D^3\bm r'+\frac{\chi}{1+\chi}\nabla\left[ \nabla\cdot\iiint_V\frac{e^{-i\sqrt{1+\chi}k|\bm
r-\bm r'|}}{4\pi|\bm
r-\bm r'|}\bm E_\mathrm{inc}(\bm
r')\D^3\bm r' \right]\notag\\={}&\iiint_V\frac{e^{-i\sqrt{1+\chi}k|\bm
r-\bm r'|}}{4\pi|\bm
r-\bm r'|}(\nabla'^{2}+k^{2})\bm E_\mathrm{inc}(\bm
r')\D^3\bm r'-\iiint_V\bm E_\mathrm{inc}(\bm
r')(\nabla'^{2}+k^{2})\frac{e^{-i\sqrt{1+\chi}k|\bm
r-\bm r'|}}{4\pi|\bm
r-\bm r'|}\D^3\bm r'\notag\\&-\frac{\chi}{1+\chi}\nabla\left[ \varoiint_{\partial V}\frac{e^{-i\sqrt{1+\chi}k|\bm
r-\bm r'|}}{4\pi|\bm
r-\bm r'|}\bm n'\cdot\bm E_\mathrm{inc}(\bm
r')\D S' \right]\notag\\={}&\varoiint_{\partial V}\frac{e^{-i\sqrt{1+\chi}k|\bm
r-\bm r'|}}{4\pi|\bm
r-\bm r'|}(\bm n'\cdot\nabla')\bm E_\mathrm{inc}(\bm
r')\D S'-\varoiint_{\partial V}\bm E_\mathrm{inc}(\bm
r')(\bm n'\cdot\nabla')\frac{e^{-i\sqrt{1+\chi}k|\bm
r-\bm r'|}}{4\pi|\bm
r-\bm r'|}\D S'\notag\\&-\frac{\chi}{1+\chi}\nabla\left[ \varoiint_{\partial V}\frac{e^{-i\sqrt{1+\chi}k|\bm
r-\bm r'|}}{4\pi|\bm
r-\bm r'|}\bm n'\cdot\bm E_\mathrm{inc}(\bm
r')\D S' \right]\notag\\={}&-\varoiint_{\partial V}\bm n'\times[\nabla'\times\bm E_\mathrm{inc}(\bm r')]\frac{e^{-i\sqrt{1+\chi}k|\bm
r-\bm r'|}}{4\pi|\bm
r-\bm r'|}\D S'-\varoiint_{\partial V}[\bm n'\times\bm E_\mathrm{inc}(\bm r')]\times\nabla'\frac{e^{-i\sqrt{1+\chi}k|\bm
r-\bm r'|}}{4\pi|\bm
r-\bm r'|}\D S'\notag\\&-\frac{1}{1+\chi}\varoiint_{\partial V}[\bm n'\cdot\bm E_\mathrm{inc}(\bm
r')]\nabla'\frac{e^{-i\sqrt{1+\chi}k|\bm
r-\bm r'|}}{4\pi|\bm
r-\bm r'|}\D S'.\label{eq:surfintbulk}\end{align}}\end{footnotesize}Recalling the familiar Kirchhoff integral for vector wave diffraction   \cite[p.~483]{Jackson:EM}, we see from \eqref{eq:surfintbulk} that the asymptotic formula in \eqref{eq:bulkapprox} describes a transverse electric field $\bm E $ with boundary conditions
\begin{align}&\bm n\times[\nabla\times\bm E(\bm r)]=\bm n\times[\nabla\times\bm E_\mathrm{inc}(\bm r)],\quad \bm n\times\bm E(\bm r)=\bm n\times\bm E_\mathrm{inc}(\bm r),\notag\\{}&\lim_{\varepsilon\to0^+}\bm n\cdot\bm E(\bm r-\varepsilon\bm n)=\frac{\bm n\cdot\bm E_\mathrm{inc}(\bm
r)}{1+\chi},\,\forall\bm r\in\partial V.\label{eq:quasistat_resp}\end{align}As the continuity of the normal component of the electric displacement leads us to $ \lim_{\varepsilon\to0^+}\bm n\cdot\bm E(\bm r+\varepsilon\bm n)=(1+\chi)\lim_{\varepsilon\to0^+}\bm n\cdot\bm E(\bm r-\varepsilon\bm n)={\bm n\cdot\bm E_\mathrm{inc}(\bm
r)},\forall\bm r\in\partial V$, this means that  \eqref{eq:bulkapprox} approximates the dielectric response by a vector wave diffraction problem, with boundary conditions specified by the local quasistatic response of the molecular dipoles residing on the dielectric interface.

According to  this diffraction approximation  \eqref{eq:bulkapprox}, the total electric field immediately outside the dielectric boundary appears  identical to the unperturbed incident wave $\bm E_\mathrm{inc} $. This approximation is fair only if the scattered electric field is indeed negligible as compared to the incident field, a condition that is not necessarily met for very large values of $|\chi| $. While we are not making further analytic attempts in this article to amend the boundary corrections to the Schr\"odinger semigroup that accommodates  very large values of $|\chi| $, we will numerically illustrate (in \S\ref{subsec:approx_comp}) that  \eqref{eq:bulkapprox} indeed works well for moderately large values of $ |\chi|$ that are encountered in practical problems of direct and inverse light scattering.

In short, we improve the Born approximation [short-term asymptotic expansion of the semigroup $\exp(-i\tau \hat{\mathscr G}),\tau\geq0 $] by taking care of the long-term asymptotic behavior of  $\exp(-i\tau \hat{\mathscr G}),\tau\to+\infty$. As will be shown in the next subsection, the net effect of this improvement is to lift the constraints $|\chi|\ll1 $ and $|\chi|kR\ll1$ in the application of Born approximation, and provide sufficiently accurate approximations when $|\chi| $ and $|\chi|kR$ are of order unity.
 \subsection{Non-perturbative approximation to Mie scattering\label{subsec:nonpert_Mie}}In the following two propositions, we will evaluate two integrals arising from \eqref{eq:bulkapprox} of \S\ref{subsec:nonpert_approx}, in order to derive an asymptotic formula for the forward scattering amplitude $ \langle\bm E_\mathrm{inc},-\chi k(\hat I-\chi\hat{\mathscr G})^{-1}\bm E_\mathrm{inc} \rangle_V$ of  Mie scattering.
\begin{proposition}\label{prop:F1}Let  the vector field $\bm F_1(\bm r),\bm r\in O(\mathbf0,R)$ be defined as
\begin{align}\bm F_1(\bm r)\colonequals \bm E_\mathrm{inc}(\bm
r)+\chi k^2\iiint_{O(\mathbf0,R)}\frac{e^{-i\sqrt{1+\chi}k|\bm
r-\bm r'|}}{4\pi|\bm
r-\bm r'|}{\bm E}_\mathrm{inc}(\bm
r')\D^3\bm r' \end{align}for $ \bm E_\mathrm{inc}(\bm r)=\bm e_x\exp(-ikz)$, then we have the exact identity
\begin{align}\langle\bm E_\mathrm{inc} ,-\chi k\bm F_1\rangle_{V=O(\mathbf0,R)}  ={}&2\pi inR^{2}+\frac{\pi i(n+1)^2}{(n-1)^2}\frac{1-e^{-2i(n-1)kR}[1+2i(n-1)kR]}{4k^{2}}\notag\\&+\frac{\pi i(n-1)^{2}}{(n+1)^2}\frac{e^{-2i(n+1)kR}[1+2i(n+1)kR]-1}{4k^{2}},\label{eq:F1div}\end{align}where $\I\chi<0,\I n=\I\sqrt{1+\chi}<0$.
  \end{proposition}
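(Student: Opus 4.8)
The plan is to use the rotational invariance of the ball $O(\mathbf0,R)$ to collapse the double volume integral defining $\langle\bm E_\mathrm{inc},-\chi k\bm F_1\rangle_{V=O(\mathbf0,R)}$ into a single one-dimensional integral, and then to evaluate that integral in closed form by elementary antidifferentiation. First I would scalarize. Setting $n\colonequals\sqrt{1+\chi}$ and noting that $\bm E_\mathrm{inc}(\bm r)=\bm e_x e^{-ikz}$ is a constant vector times a scalar, convolution against the Helmholtz kernel acts only on that scalar, so $\bm F_1(\bm r)=\bm e_x\bigl(e^{-ikz}+\chi k^2 u(\bm r)\bigr)$ with $u(\bm r)\colonequals\iiint_{O(\mathbf0,R)}\frac{e^{-ink|\bm r-\bm r'|}}{4\pi|\bm r-\bm r'|}e^{-ikz'}\,\D^3\bm r'$. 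Since $k>0$ is real, $\bm E_\mathrm{inc}^*\cdot\bm F_1=1+\chi k^2 e^{ikz}u$ pointwise, so $\langle\bm E_\mathrm{inc},-\chi k\bm F_1\rangle_{O(\mathbf0,R)}=-\chi k\bigl(\tfrac43\pi R^3+\chi k^2 I\bigr)$ with $I\colonequals\iiint_{O(\mathbf0,R)}\iiint_{O(\mathbf0,R)}e^{ik(z-z')}\frac{e^{-ink|\bm r-\bm r'|}}{4\pi|\bm r-\bm r'|}\,\D^3\bm r'\,\D^3\bm r$; this double integral is absolutely convergent, since $|\bm r-\bm r'|^{-1}$ is locally integrable in $\mathbb R^3$ and the remaining factors have modulus at most $1$ on $O(\mathbf0,R)$ (here $\I n<0$ keeps $|e^{-ink|\bm r-\bm r'|}|\le1$).

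The crux is to evaluate $I$. I would substitute $\bm s=\bm r-\bm r'$ and apply Fubini's theorem, which rewrites $I$ as $\iiint_{|\bm s|<2R}e^{ik\bm e_z\cdot\bm s}\,\frac{e^{-ink|\bm s|}}{4\pi|\bm s|}\,V(|\bm s|)\,\D^3\bm s$, where $V(s)$ is the volume of $O(\mathbf0,R)\cap O(s\bm e_z,R)$, i.e.\ of the intersection of two balls of radius $R$ whose centres lie a distance $s$ apart. That intersection is a union of two spherical caps of height $R-\tfrac{s}{2}$, so $V(s)=\tfrac{\pi}{12}(2R-s)^2(4R+s)=\tfrac{\pi}{12}\bigl(s^3-12R^2s+16R^3\bigr)$ for $0\le s\le 2R$. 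Performing the angular integration $\varoiint_{|\bm m|=1}e^{iks\,\bm m\cdot\bm e_z}\,\D\Omega=\tfrac{4\pi}{ks}\sin(ks)$ then reduces $I$ to $I=\tfrac1k\int_0^{2R}e^{-inks}\sin(ks)\,V(s)\,\D s=\tfrac{\pi}{12k}\int_0^{2R}\bigl(s^3-12R^2s+16R^3\bigr)e^{-inks}\sin(ks)\,\D s$.

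To finish, write $\sin(ks)=\tfrac1{2i}\bigl(e^{iks}-e^{-iks}\bigr)$ so that the integrand is a linear combination of the monomials $s^m e^{-i(n-1)ks}$ and $s^m e^{-i(n+1)ks}$, $m\in\{0,1,3\}$, each of which has an elementary polynomial-times-exponential primitive; evaluating these at $s=0$ and $s=2R$, summing the contributions, substituting $\chi=n^2-1=(n-1)(n+1)$ into $-\chi k\bigl(\tfrac43\pi R^3+\chi k^2 I\bigr)$, and collapsing the resulting rational functions of $n$ should reproduce the right-hand side of \eqref{eq:F1div}. I expect the change of variables together with the lens-volume identity to be the one genuine idea; thereafter the only real effort is the bookkeeping in assembling and combining the primitives, which is routine but sign-error prone. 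As sanity checks: the $R\to0^+$ expansion of the claimed right-hand side starts at order $R^3$ with $-\tfrac43\pi\chi kR^3=\langle\bm E_\mathrm{inc},-\chi k\bm E_\mathrm{inc}\rangle_{O(\mathbf0,R)}$, the leading Born term, while letting $\chi\to0$ (hence $n\to1$) sends it to $0$, consistent with $\bm F_1\to\bm E_\mathrm{inc}$; and one may recompute $I$ independently via Green's second identity applied to $u$, which satisfies $(\nabla^2+n^2k^2)u=-e^{-ikz}$ throughout $O(\mathbf0,R)$, paired against $e^{ikz}$, thereby converting $I$ into a surface integral over the sphere $|\bm r|=R$.
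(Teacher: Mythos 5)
Your proposal is correct, and it takes a genuinely different route from the paper. The paper works in Fourier space: it uses $\widetilde{\bm E}_\mathrm{inc}(\bm q)=4\pi R^{3}\bm e_x\,j_1(|\bm q-k\bm e_z|R)/(|\bm q-k\bm e_z|R)$ and a Parseval-type representation of the Helmholtz-kernel convolution to write $\langle\bm E_\mathrm{inc},\chi k\bm F_1\rangle$ as $\frac{4\pi\chi kR^3}{3}$ plus a $\bm q$-integral with denominator $|\bm q|^2-(1+\chi)k^2$ (this is where $\I\chi<0$ is used structurally), shifts $\bm q'=\bm q-k\bm e_z$, reduces to a one-dimensional integral as in \eqref{eq:FTqxFq}, and then evaluates it indirectly by differentiating twice in $R$, invoking the contour identity \eqref{eq:SinIntQ} in \eqref{eq:diffEqR}, and integrating back in $R$. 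You instead stay in real space: the substitution $\bm s=\bm r-\bm r'$ turns the double integral into the self-convolution of the ball's indicator, whose value is the lens volume $V(s)=\frac{\pi}{12}(2R-s)^2(4R+s)$, and the angular average $\frac{4\pi\sin(ks)}{ks}$ reduces everything to an elementary polynomial-times-exponential integral on $[0,2R]$. The final bookkeeping you deferred does close: with $L=2R$ and $\alpha=(n\mp1)k$,
\begin{align*}
\int_0^{L}\bigl(s^3-3L^2s+2L^3\bigr)e^{-i\alpha s}\,\D s
=\frac{6}{\alpha^{4}}\bigl[1-e^{-i\alpha L}(1+i\alpha L)\bigr]+\frac{3L^{2}}{\alpha^{2}}-\frac{2iL^{3}}{\alpha},
\end{align*}
and after multiplying by $\frac{i\pi\chi^{2}k^{2}}{24}$ the $\alpha^{-4}$ pieces give the two bracketed terms of \eqref{eq:F1div} (using $\chi^2/(n\mp1)^4=(n\pm1)^2/(n\mp1)^2$), the $\alpha^{-2}$ pieces give $2\pi i nR^{2}$, and the $\alpha^{-1}$ pieces cancel the Born term $-\frac{4\pi\chi kR^{3}}{3}$ exactly. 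What each approach buys: yours is more elementary and self-contained (absolute convergence plus Fubini, no Fourier representation, no contour identity, and it works verbatim for real $n$ as well), while the paper's Fourier setup is not wasted effort in context, since the same $\bm q$-space machinery and the Proposition~\ref{prop:F1} output are reused in the subsequent evaluation of $\langle\bm E_\mathrm{inc},-\chi k\bm F_2\rangle$, where the extra factor $|\bm q\cdot\widetilde{\bm E}_\mathrm{inc}(\bm q)|^2$ has no equally clean real-space analogue.
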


\begin{proof}We will evaluate the inner product $\langle\bm E_\mathrm{inc} ,\chi k\bm F_1\rangle_{V=O(\mathbf0,R)}  $ with the help of Fourier transforms. Direct computation shows that \begin{align}\widetilde{\bm E}_\mathrm{inc}(\bm q) =\bm e_x \iiint_{|\bm r|<R}e^{i(\bm q-k\bm e_z)\cdot\bm r}\D^3\bm r=4\pi R^{3}\bm e_x\frac{ j_1(|\bm q-k\bm e_z|R)}{|\bm q-k\bm e_z|R}\label{eq:E_inc_q}\end{align}where $ j_1(u)=(\sin u-u\cos u)/u^3$ is the first order spherical Bessel function. Akin to \cite[(2.18)]{QualEM}, we may write down
\begin{align}\langle\bm E_\mathrm{inc} ,\chi k\bm F_1\rangle_{V=O(\mathbf0,R)}=\frac{4\pi\chi k R^3}{3}+\frac{\chi^{2}k^3}{(2\pi)^{3}}\iiint_{\mathbb R^3}\frac{|\widetilde{\bm E}_\mathrm{inc}(\bm q)|^2}{|\bm q|^2-(1+\chi)k^{2}}\D^3\bm q.\end{align} Using the coordinate transformation $\bm q'=\bm q-k\bm e_z $, the triple integral  above can be converted into\begin{align}&\iiint_{\mathbb R^3}\frac{|\widetilde{\bm E}_\mathrm{inc}(\bm q)|^2}{|\bm q|^2-(1+\chi)k^{2}}\D^3\bm q=\iiint_{\mathbb R^3}\frac{|\widetilde{\bm E}_\mathrm{inc}(\bm q'+k\bm e_z )|^2}{|\bm q'+k\bm e_z |^2-(1+\chi)k^{2}}\D^3\bm q'\notag\\={}&\iiint_{\mathbb R^3}\frac{|\widetilde{\bm E}_\mathrm{inc}(\bm q'+k\bm e_z )|^2}{|\bm q'|^{2}+2k\bm e_z\cdot\bm q'-\chi k^{2}}\D^3\bm q'\notag\\={}&2\pi(4\pi R^{3})^2\int_{0}^{+\infty}\left[\frac{ j_1(qR)}{R}\right]^2\dfrac{1}{2kq}\log\dfrac{q^{2}+2kq-\chi k^{2}}{q^{2}-2kq-\chi k^{2}}\D q\notag\\={}&-32\pi^{3}\int_{-\infty}^{+\infty}\frac{ \left[2 q R \sin (2 q R)+\cos (2 q R)-2 q^2 R^2-1\right](k+q)}{8k q^4 (q^2+2 k q-\chi k^{2})}\D q.\label{eq:FTqxFq}\end{align}Here, in the last step of \eqref{eq:FTqxFq}, we have performed integration by parts and exploited the symmetry between $q $ and $-q$. Denoting the last term in \eqref{eq:FTqxFq} by $(2\pi)^{3}A_{1}(\chi,k,R) $, we have the differential equation
\begin{align}\frac{\partial}{\partial R}\left[\frac1R\frac{\partial A_{1}(\chi,k,R)}{\partial R}\right]={}&\int_{-\infty}^{+\infty}\frac{ 4\sin (2 q R)}{ kq}\frac{ k+ q}{q^2+2 k q-\chi k^{2} }\D q\notag\\={}&\int_{-\infty}^{+\infty}\frac{ 2\sin (2 q R)}{ kq}\left[\frac{1}{q+(1-n)k}+\frac{1}{q+(1+n)k}\right]\D q\notag\\={}&\frac{2\pi}{k^{2}}\left[\frac{1-e^{-2i(n-1)kR}}{1-n} +\frac{1-e^{-2i(n+1)kR}}{1+n} \right].\label{eq:diffEqR}\end{align}Here, we have applied the identity
\begin{align}\label{eq:SinIntQ}\int_{-\infty }^{+\infty } \frac{\sin (2 q R)}{ q (q+Q)} \D q=\frac{ \pi}{Q}(  1-e^{2 i Q R\I Q/|\I Q|}),\quad \I Q\neq0\end{align}
 to the last step of \eqref{eq:diffEqR}. After evaluating the integral expressed in  \eqref{eq:FTqxFq} by integrating the relation in \eqref{eq:diffEqR}, we may arrive at the conclusion
\begin{align}&\langle\bm E_\mathrm{inc} ,\chi k\bm F_1\rangle_{V=O(\mathbf0,R)}\notag\\={}&\frac{4\pi\chi k R^3}{3}-\frac{\pi i\chi^{2}}{4k^{2}}\frac{1-e^{-2i(n-1)kR}[1+2i(n-1)kR]}{(n-1)^4}\notag\\{}&-\frac{\pi i\chi^{2}}{4k^{2}}\frac{e^{-2i(n+1)kR}[1+2i(n+1)kR]-1}{(n+1)^4}-\frac{\pi\chi^2k}{3}\left (\frac{6 i nR^2}{k \chi^2}+\frac{4R^3}{\chi} \right)\notag\\={}&- \frac{\pi i(n+1)^2}{(n-1)^2}\frac{1-e^{-2i(n-1)kR}[1+2i(n-1)kR]}{4k^{2}}\notag\\{}&-\frac{\pi i(n-1)^{2}}{(n+1)^2}\frac{e^{-2i(n+1)kR}[1+2i(n+1)kR]-1}{4k^{2}}-2\pi inR^{2} \end{align}as claimed.
 \end{proof}

\begin{proposition}Let  the vector field $\bm F_2(\bm r),\bm r\in O(\mathbf0,R)$ be defined as
\begin{align}\bm F_2(\bm r)\colonequals \frac{\chi}{1+\chi}\nabla\left[ \nabla\cdot\iiint_{O(\mathbf0,R)}\frac{e^{-i\sqrt{1+\chi}k|\bm
r-\bm r'|}}{4\pi|\bm
r-\bm r'|}{\bm E}_\mathrm{inc}(\bm
r')\D^3\bm r'\right] \end{align}for $ \bm E_\mathrm{inc}(\bm r)=\bm e_x\exp(-ikz)$, then we have the exact identity
\begin{align}&\langle\bm E_\mathrm{inc} ,-\chi k\bm F_2\rangle_{V=O(\mathbf0,R)} \notag\\ =&\frac{\pi}{16 k^2 n^2}   \left\{-2 i\chi^{2} [2 (\chi+2)  k^2R^2-1] \left[\Ei(-2 i  (n-1)k R)-\Ei(-2 i  (n+1)k R)+\log\frac{n+1}{n-1}\right]\right.\notag\\&+4 i n (2 \chi ^{2}k^2  R^2-\chi-2)+ e^{-2 i  (n+1)k R}(n-1)^2 \left[2  (n+1) (\chi+2)k R+i (n^{2}+4n+1)\right]\notag\\&\left.- e^{-2 i  (n-1)k R}(n+1)^2  \left[2 (n-1)(\chi+2)k R+i (n^{2}-4n +1)\right]\phantom{\frac12}\hspace{-1em}\right\},\label{eq:F2div}\end{align} where $ \I\chi<0,\I n=\I\sqrt{1+\chi}<0$.  \end{proposition}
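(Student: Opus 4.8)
The plan is to follow the Fourier-analytic route used in the proof of Proposition~\ref{prop:F1}, now adapted to the doubly differentiated integrand. Put
\[
\bm\Psi(\bm r)\colonequals\iiint_{O(\mathbf0,R)}\frac{e^{-i\sqrt{1+\chi}k|\bm r-\bm r'|}}{4\pi|\bm r-\bm r'|}\bm E_\mathrm{inc}(\bm r')\D^{3}\bm r',
\]
so that $\bm F_2=\frac{\chi}{1+\chi}\nabla(\nabla\cdot\bm\Psi)$, and note that $\nabla(\nabla\cdot\bm\Psi)$ extends naturally to all of $\mathbb R^{3}$. Since $\bm E_\mathrm{inc}$ is bounded with support in $O(\mathbf0,R)$ while, under the branch choice $\R(i\sqrt{1+\chi})>0$ inherited from \eqref{eq:BulkAsym}, the Helmholtz kernel decays exponentially at infinity, the pairing $\langle\bm E_\mathrm{inc},-\chi k\bm F_2\rangle_{O(\mathbf0,R)}$ equals the $L^{2}(\mathbb R^{3};\mathbb C^{3})$ pairing of the zero-extended $\bm E_\mathrm{inc}$ against $-\chi k\bm F_2$. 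Applying Parseval's identity, using that the Fourier transform of $e^{-i\sqrt{1+\chi}k|\bm r|}/(4\pi|\bm r|)$ is $(|\bm q|^{2}-(1+\chi)k^{2})^{-1}$ so that (by the convolution theorem) $\widetilde{\bm\Psi}(\bm q)=\widetilde{\bm E}_\mathrm{inc}(\bm q)/(|\bm q|^{2}-(1+\chi)k^{2})$, and noting $\widetilde{\nabla(\nabla\cdot\bm\Psi)}(\bm q)=-\bm q(\bm q\cdot\widetilde{\bm\Psi}(\bm q))$, one reduces the left-hand side of \eqref{eq:F2div} to
\[
\frac{\chi^{2}k}{(2\pi)^{3}(1+\chi)}\iiint_{\mathbb R^{3}}\frac{|\bm q\cdot\widetilde{\bm E}_\mathrm{inc}(\bm q)|^{2}}{|\bm q|^{2}-(1+\chi)k^{2}}\D^{3}\bm q,
\]
with $\widetilde{\bm E}_\mathrm{inc}(\bm q)$ taken from \eqref{eq:E_inc_q}. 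The only structural novelty relative to Proposition~\ref{prop:F1} is that $|\widetilde{\bm E}_\mathrm{inc}(\bm q)|^{2}$ is replaced by $|\bm q\cdot\widetilde{\bm E}_\mathrm{inc}(\bm q)|^{2}=q_{x}^{2}(4\pi R^{3})^{2}[j_{1}(|\bm q-k\bm e_{z}|R)/(|\bm q-k\bm e_{z}|R)]^{2}$; this extra angular weight $q_{x}^{2}$ is what produces the prefactor $1/(1+\chi)$ and, ultimately, the exponential integrals in \eqref{eq:F2div}.

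I would then translate by $\bm q'=\bm q-k\bm e_{z}$, exactly as in \eqref{eq:FTqxFq}, so that the Bessel factor becomes radial in $\bm q'$ and the denominator becomes $|\bm q'|^{2}+2k\bm e_{z}\cdot\bm q'-\chi k^{2}$; in spherical coordinates about the $\bm e_{z}$-axis, $q_{x}^{2}=(q')^{2}\sin^{2}\theta\cos^{2}\phi$. The $\phi$-integration contributes $\pi$, while the $\theta$-integration is the elementary $\int_{-1}^{1}\frac{(1-u^{2})\D u}{(q')^{2}-\chi k^{2}+2kq'u}$, producing a rational function of $q'$ plus a single logarithm proportional to $\log\frac{(q')^{2}+2kq'-\chi k^{2}}{(q')^{2}-2kq'-\chi k^{2}}=\log\frac{(q'+(1-n)k)(q'+(1+n)k)}{(q'-(1-n)k)(q'-(1+n)k)}$. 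What remains is a one-dimensional integral over $q'\in(0,+\infty)$ of this angular output against $(q')^{2}[j_{1}(q'R)/(q'R)]^{2}$, which I would symmetrise to an integral over $\mathbb R$ and attack with the differential-equation-in-$R$ device of Proposition~\ref{prop:F1}: a suitable operator built from $\partial_{R}$ and $1/R$ collapses the $j_{1}^{2}$-weight to $\sin(2q'R)$ and $\cos(2q'R)$. The purely rational pieces are then evaluated by residue calculus, using \eqref{eq:SinIntQ} together with its cosine analogue and variants with higher-order poles; the rational-times-logarithm pieces are handled by closing contours in the appropriate half-plane and integrating around the branch cuts issuing from $\pm(1-n)k$ and $\pm(1+n)k$, which is exactly where $\Ei(-2i(n-1)kR)$, $\Ei(-2i(n+1)kR)$ and the $\log\frac{n+1}{n-1}$ term arise. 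Finally one integrates back up in $R$ the requisite number of times, fixing the integration constants by matching the $R\to0^{+}$ (equivalently the leading-order Born, $\chi\to0$) expansion, and rearranges to the closed form \eqref{eq:F2div}.

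The main obstacle will be this one-dimensional $q'$-integral. Relative to Proposition~\ref{prop:F1}, the extra $q_{x}^{2}$-weight raises the degree of the rational numerator and, more seriously, the angular integration forces in the logarithm, so that after the $R$-derivative reduction one must evaluate integrals of the type
\[
\int_{\mathbb R}\frac{\sin(2q'R)}{(q')^{m}}\log\frac{(q'+(1-n)k)(q'+(1+n)k)}{(q'-(1-n)k)(q'-(1+n)k)}\D q',
\]
which are no longer pure exponentials but genuinely require exponential integrals. One must keep the branch of $n=\sqrt{1+\chi}$ fixed so that for $\I\chi<0$, $\I n<0$ the branch points $(1-n)k$ and $-(1+n)k$ lie in one half-plane and $(1+n)k$ and $(n-1)k$ in the other, making the contour closures legitimate, and one must then carry the resulting $\Ei$-terms (and the accompanying $\log\frac{n+1}{n-1}$) correctly through the repeated $R$-integration, pinning each integration constant by the small-$R$ asymptotics. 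Everything else is a lengthy but essentially routine calculation parallel to the proof of Proposition~\ref{prop:F1}.
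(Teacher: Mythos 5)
Your plan is correct in outline, and its first half coincides with the paper's proof: the Parseval reduction, the shift $\bm q'=\bm q-k\bm e_z$, and the angular integration reproduce \eqref{eq:FTqFq} exactly (note only that the $1/(1+\chi)$ prefactor comes straight from the definition of $\bm F_2$, not from the $q_x^2$ weight; what the weight really does is raise the degree of the rational factor multiplying the logarithm, which is why elementary exponentials no longer suffice). Where you genuinely depart from the paper is in evaluating the resulting one-dimensional integral $A_2(\chi,k,R)$. The paper differentiates in $\chi$ (equivalently $n$): this lowers the rational prefactor of the logarithm so that one piece is read off from the already-proved Proposition~\ref{prop:F1} (giving \eqref{eq:A2_1}), the remaining piece is tamed by a further $\chi$-derivative plus the same $R$-differential-equation device used for $\bm F_1$, and the closed form is recovered by integrating in the complex $n$-plane along the ray from $\infty e^{-i\pi/4}$, the constant being pinned by the dominated-convergence limit $A_2(-i\infty,k,R)=0$; the $\Ei$'s only appear at this final antidifferentiation in $n$. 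You instead keep $\chi$ fixed, collapse $j_1^2$ by $R$-derivatives, evaluate the oscillatory rational and rational-times-logarithm integrals head-on by residues and branch-cut contours (so the $\Ei$'s appear immediately), and integrate back up in $R$ with constants fixed by small-$R$/Born matching. Both routes work. The paper's buys economy: it never needs a sine transform of a logarithm and it recycles \eqref{eq:F1div}. Yours is more self-contained but carries heavier bookkeeping, which you largely anticipate: the splitting of the symmetrized integrand creates spurious $1/q^m$ singularities at $q=0$ that must be regrouped or taken as principal values, the four branch points must be tracked consistently with $\I n<0$, and the repeated $R$-integration of $\Ei$-terms must be matched against the cancellation of logarithmic singularities as $R\to0^+$, as in the regular combination $\Ei(-2i(n-1)kR)-\Ei(-2i(n+1)kR)+\log\frac{n+1}{n-1}$ appearing in \eqref{eq:F2div}.
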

\begin{proof}By Fourier transform, we convert the target object into \begin{align}\langle\bm E_\mathrm{inc} ,-\chi k\bm F_2\rangle_{V=O(\mathbf0,R)}=\frac{\chi^2}{1+\chi}\frac{k}{(2\pi)^{3}}\iiint_{\mathbb R^3}\frac{|\bm q\cdot\widetilde{\bm E}_\mathrm{inc}(\bm q)|^2}{|\bm q|^2-(1+\chi)k^{2}}\D^3\bm q,\end{align}where $ \widetilde{\bm E}_\mathrm{inc}(\bm q)$ has been given by \eqref{eq:E_inc_q}. As before, we introduce the coordinate transformation  $\bm q'=\bm q-k\bm e_z $ to evaluate the triple integral in question:
\begin{align}&\iiint_{\mathbb R^3}\frac{|\bm q\cdot\widetilde{\bm E}_\mathrm{inc}(\bm q)|^2}{|\bm q|^2-(1+\chi)k^{2}}\D^3\bm q\notag\\={}&\iiint_{\mathbb R^3}\frac{(\bm e_x\cdot\bm q')^{2}|\widetilde{\bm E}_\mathrm{inc}(\bm q'+k\bm e_z )|^2}{|\bm q'+k\bm e_z |^2-(1+\chi)k^{2}}\D^3\bm q'=\iiint_{\mathbb R^3}\frac{(\bm e_x\cdot\bm q')^{2}|\widetilde{\bm E}_\mathrm{inc}(\bm q'+k\bm e_z )|^2}{|\bm q'|^{2}+2k\bm e_z\cdot\bm q'-\chi k^{2}}\D^3\bm q'\notag\\={}&2\pi(4\pi R^{3})^2\int_{0}^{+\infty}\left[\frac{ j_1(qR)}{R}\right]^2\times\notag\\&\times\left[\dfrac{q^{2}-\chi k^2}{4k^{2}}-\dfrac{(q^{2}+2kq-\chi k^{2})(q^{2}-2kq-\chi k^{2})}{16k^{3}q}\log\dfrac{q^{2}+2kq-\chi k^{2}}{q^{2}-2kq-\chi k^{2}}\right]\D q.\label{eq:FTqFq}\end{align}Denoting the last term in \eqref{eq:FTqFq} by $(2\pi)^{3}A_{2}(\chi,k,R) $, we may deduce the differential equation\begin{align}\frac{\partial A_{2}(\chi,k,R)}{\partial \chi}={}&\frac{1}{2n}\frac{\partial A_{2}(n^{2}-1,k,R)}{\partial n}\notag\\={}&4R^{4}\int_{0}^{+\infty}j_1^{2}(qR)\left(-\dfrac{1}{2}+\frac{q^2-k^2 \chi}{8 k q}\log\dfrac{q^{2}+2kq-\chi k^{2}}{q^{2}-2kq-\chi k^{2}}\right)\D q.\end{align} We may apply the dominated convergence theorem to the triple integral to derive the limit value
\begin{align}A_{2}(-i\infty,k,R)\colonequals \lim_{|\chi|\to+\infty}A_{2}(-i|\chi|,k,R)=0.\end{align} Direct computation shows that $\int_{0}^{+\infty}j_1^{2}(qR)\D q=\pi/(6R)$, and the result in Proposition~\ref{prop:F1} implies that
\begin{align}&4R^{4}\int_{0}^{+\infty}\frac{k^{2}j_1^{2}(qR)}{2  q}\log\dfrac{q^{2}+2kq-\chi k^{2}}{q^{2}-2kq-\chi k^{2}}\D q\notag\\={}&-\frac{\pi i}{4k^{2}}\frac{1-e^{-2i(n-1)kR}[1+2i(n-1)kR]}{(n-1)^4}-\frac{\pi i}{4k^{2}}\frac{e^{-2i(n+1)kR}[1+2i(n+1)kR]-1}{(n+1)^4}\notag\\{}&-\frac{\pi k}{3}\left (\frac{6 i nR^2}{k (n^{2}-1)^2}+\frac{4R^3}{n^{2}-1} \right),
\end{align}thus
\begin{align}&4R^{4}\int_{0}^{+\infty}j_1^{2}(qR)\left(-\dfrac{1}{2}-\frac{k \chi}{8 q}\log\dfrac{q^{2}+2kq-\chi k^{2}}{q^{2}-2kq-\chi k^{2}}\right)\D q\notag\\={}&\frac{\pi i(n+1)}{16k^{3}}\frac{1-e^{-2i(n-1)kR}[1+2i(n-1)kR]}{(n-1)^3}\notag\\{}&+\frac{\pi i(n-1)}{16k^{3}}\frac{e^{-2i(n+1)kR}[1+2i(n+1)kR]-1}{(n+1)^3}+\frac{\pi}{2}\frac{ i nR^2}{k (n^{2}-1)}.\label{eq:A2_1}\end{align}In a similar vein as Proposition~\ref{prop:F1}, we may evaluate the integral
\begin{align}\frac{\partial}{\partial\chi}\int_{0}^{+\infty}\frac{qR^{4}j_1^{2}(qR)}{8 k}\log\dfrac{q^{2}+2kq-\chi k^{2}}{q^{2}-2kq-\chi k^{2}}\D q=-\int_{-\infty}^{+\infty}\frac{qkR^{4}j_1^{2}(qR)\D q}{8(q^{2}+2kq-\chi k^{2}) }\end{align}by considering the differential equation
\begin{align}-\frac{\partial}{\partial R}\int_{-\infty}^{+\infty}\frac{qkR^{4}j_1^{2}(qR)\D q}{8(q^{2}+2kq-\chi k^{2}) }={}&R^4\frac{\partial}{\partial R} \int_{-\infty}^{+\infty}\frac{k\sin^{2}(qR)\D q}{8qR^{2}(q^{2}+2kq-\chi k^{2}) }\notag\\={}&R^4\frac{\partial}{\partial R}\left\{\frac{i\pi }{32nkR^{2}}\left[\frac{1-e^{-2i(n-1)kR}}{1-n} +\frac{1-e^{-2i(n+1)kR}}{1+n} \right]\right\},\end{align}which  implies
\begin{align}&\frac{\partial}{\partial\chi}\int_{0}^{+\infty}\frac{qR^{4}j_1^{2}(qR)}{8 k}\log\dfrac{q^{2}+2kq-\chi k^{2}}{q^{2}-2kq-\chi k^{2}}\D q\notag\\={}&\frac{i \pi  }{32 k^3 n} \left\{2 \left[\frac{k^2 R^2}{n^{2}-1}+\frac{3 n^2+1}{ (n^2-1)^3}\right]-e^{-2 i  (n+1)k R} \frac{[ (n+1)k R-i]^2}{(n+1)^{3}}+\right.\notag\\{}&\left.+e^{-2 i  (n-1) kR}\frac{[ (n-1)k R-i]^2}{(n-1)^{3}}\right\}.\end{align}Integrating from $\infty e^{-i\pi/4} $ to $n$  in the complex $n$-plane, we obtain
\begin{align}&4R^{4}\int_{0}^{+\infty}\frac{qj_1^{2}(qR)}{8 k}\log\dfrac{q^{2}+2kq-\chi k^{2}}{q^{2}-2kq-\chi k^{2}}\D q\notag\\={}&\frac{i \pi }{4 k^3} \left\{k^2 R^2 \left[-\text{Ei}(-2 i  (n-1) kR)+\text{Ei}(-2 i  (n+1)k R)+\log \frac{n-1}{n+1}\right]-\frac{2n}{(n^{2}-1)^{2}}\right.\notag\\&\left.+\frac{e^{-2 i (n-1) kR}[1+2i(n+1)kR]}{2(n-1)^{2}}-\frac{e^{-2 i  (n+1)k R} [1+2i(n-1)kR]}{2  (n+1)^2}\right\}.\label{eq:A2_2}\end{align}Now that \eqref{eq:A2_1} and \eqref{eq:A2_2} elucidate all the contributions to $\partial A_{2}(\chi,k,R) /\partial\chi $, we may integrate in the complex $n$-plane to obtain
\begin{align}&\frac{k}{(2\pi)^{3}}\iiint_{\mathbb R^3}\frac{|\bm q\cdot\widetilde{\bm E}_\mathrm{inc}(\bm q)|^2}{|\bm q|^2-(1+\chi)k^{2}}\D^3\bm q\notag\\={}&\frac{\pi}{16 k^2 \chi^2}   \left\{-2 i\chi^{2} [2 (\chi+2)  k^2R^2-1] \left[\text{Ei}(-2 i  (n-1)k R)-\text{Ei}(-2 i  (n+1)k R)+\log\frac{n+1}{n-1}\right]\right.\notag\\&+4 i n (2 \chi ^{2}k^2  R^2-\chi-2)+ e^{-2 i  (n+1)k R}(n-1)^2 \left[2  (n+1) (\chi+2)k R+i (n^{2}+4n+1)\right]\notag\\&\left.- e^{-2 i  (n-1)k R}(n+1)^2  \left[2  (n-1)(\chi+2)k R+i (n^{2}-4n +1)\right]\phantom{\frac12}\hspace{-1em}\right\},\end{align}thereby leading to the claimed conclusion.
  \end{proof}\begin{remark}We may check the reasonability of our computations for $ \langle\bm E_\mathrm{inc} ,-\chi k\bm F_1\rangle_{V=O(\mathbf0,R)}+\langle\bm E_\mathrm{inc} ,-\chi k\bm F_2\rangle_{V=O(\mathbf0,R)}$ in the case of optically soft materials with small $ |n-1|$.

 In the  limit $n\to1$, we may derive  the asymptotic expansion
\begin{align}& \langle\bm E_\mathrm{inc} ,-\chi k\bm F_1\rangle_{V=O(\mathbf0,R)}+\langle\bm E_\mathrm{inc} ,-\chi k\bm F_2\rangle_{V=O(\mathbf0,R)}\notag\\
={}&-(n-1)\frac{8\pi  k R^3}{3} +(n-1)^2 \frac{\pi}{k^{2}}\left\{ \frac{4 k^2 R^2-1}{2} [\Si(4 k R)+i \Ci(4 k R)-i\log (4k R)-i\gamma _{0}]\right.\notag\\&\left.+i\left(2 k^4 R^4+\frac{8}{3} i k^3 R^3+\frac{5}{2} k^2 R^2-\frac{7}{16}\right)+ e^{-4 i k R} \left(\frac{ k R}{4}+\frac{7 i}{16}\right)\right\}\label{eq:F1_F2_n1}+o((n-1)^{2}),\end{align}where $\gamma_0\colonequals \lim_{M\to\infty}(\sum_{m=1}^M\frac1m-\log M)=0.577215+$ is the Euler--Mascheroni constant, and $ \Si(x)\colonequals\int^x_{0}\frac{\sin t}{t}\D t$,  $\Ci(x)\colonequals-\int_x^{+\infty}\frac{\cos t}{t}\D t$. For a transparent insulator standing in vacuum, the refractive index $n$ is a positive real number, so \eqref{eq:F1_F2_n1} leads to an asymptotic expansion for the total scattering cross-section as
\begin{align}&\I[\langle\bm E_\mathrm{inc} ,-\chi k\bm F_1\rangle_{V=O(\mathbf0,R)}+\langle\bm E_\mathrm{inc} ,-\chi k\bm F_2\rangle_{V=O(\mathbf0,R)}]\notag\\\sim{}& \pi R^2(n-1)^2\left\{\frac52+2k^2R^2-\frac{\sin(4kR)}{4kR}-\frac{7[1-\cos(4kR)]}{16k^{2}R^2}+\right.\notag\\{}&\left.+ \left( \frac{1}{2k^{2}R^2}-2 \right) [\gamma _{0}+\log (4k R)- \Ci(4 k R)]\right\},\end{align}which is exactly the Rayleigh--Gans formula (cf.~\eqref{eq:RGB}, see also  \cite[p.~90]{vandeHulst} or \cite{Rayleigh,Gans}) for $ (n-1)^2= \chi^2/4+O(\chi^3)$. It might be noted that the original derivation of the Rayleigh--Gans formula \eqref{eq:RGB} for total scattering cross-section is based on a different line of thought: obtain an angular distribution of scattering fields by a phase-shift argument, then integrate over all solid angles.
\eor
\end{remark}

\subsection{Numerical comparisons of approximation schemes\label{subsec:approx_comp}}Now, let us  consider a model problem of a glass (refractive index $n=3/2 $) sphere   immersed in water (refractive index $n=4/3 $). The relative susceptibility is given by $\chi=(9/8)^{2}-1$. This model problem may arise from a practical context in
optical microscopy~\cite{RICM}, where one wishes to use the near-field scattering pattern  to localize the center of the glass sphere with high precision (sub-wavelength resolution) in all the three spatial dimensions.

According to the computational details in \S\ref{subsec:nonpert_Mie},  the forward scattering amplitude $ \langle\bm E_\mathrm{inc},-\chi k(\hat I-\chi\hat{\mathscr G})^{-1}\bm E_\mathrm{inc} \rangle_V$ for $ V=O(\mathbf 0,R)$ and $ \I \chi<0$, as approximated by integrals in \eqref{eq:bulkapprox}, can be evaluated in closed form \eqref{eq:nonperturb}. The forward scattering amplitude $ \langle\bm E_\mathrm{inc},-\chi k(\hat I-\chi\hat{\mathscr G})^{-1}\bm E_\mathrm{inc} \rangle_{O(\mathbf 0,R)}$ remains finite when $ \I\chi=0$, and we will test our approximation  \eqref{eq:nonperturb} by analytically continuing it to the $\R\chi$-axis. \begin{figure}[!t]\begin{minipage}{.75\textwidth}\begin{tikzpicture}\pgfplotsset{xlabel style={yshift=.3cm}, ylabel style={yshift=-.8cm},width=10.5cm,height=5cm,xmajorgrids,ymajorgrids, tick label style={font=\tiny},label style={font=\tiny}}
\begin{axis}[xlabel={$kR$},ylabel={$ \sigma_N=\frac{\sigma_{\text{sc}}}{\pi R^2}$},ymin=0,ymax=4,xmin=0,xmax=160,enlargelimits=false,  minor y tick num =1, minor x tick num =1,ytick={0,1,...,5},legend style={
legend columns=1,
cells={anchor=west},at={(.98,.95)},
font=\tiny,legend style={row sep=-2.5pt},
},
   restrict y to domain=0:6]\addplot [only marks,
draw=green!50!black,mark=o,thin,mark size=1.5pt
] coordinates{(0.01,0.000000000176426)(0.843333,0.0072602)(1.67667,0.0591146)(2.51,0.16348)(3.34333,0.317605)(4.17667,0.51471)(5.01,0.743057)(5.84333,1.00075)(6.67667,1.27619)(7.51,1.56168)(8.34333,1.85223)(9.17667,2.13685)(10.01,2.40986)(10.8433,2.6643)(11.6767,2.89208)(12.51,3.09023)(13.3433,3.25305)(14.1767,3.3767)(15.01,3.46079)(15.8433,3.50344)(16.6767,3.50679)(17.51,3.47356)(18.3433,3.40521)(19.1767,3.30661)(20.01,3.1817)(20.8433,3.03527)(21.6767,2.87435)(22.51,2.70398)(23.3433,2.53091)(24.1767,2.36226)(25.01,2.20314)(25.8433,2.05948)(26.6767,1.93479)(27.51,1.83089)(28.3433,1.75039)(29.1767,1.6938)(30.01,1.66188)(30.8433,1.65548)(31.6767,1.67336)(32.51,1.7149)(33.3433,1.77777)(34.1767,1.85741)(35.01,1.94922)(35.8433,2.04672)(36.6767,2.14403)(37.51,2.23732)(38.3433,2.32315)(39.1767,2.40044)(40.01,2.46843)(40.8433,2.52524)(41.6767,2.56945)(42.51,2.5982)(43.3433,2.60824)(44.1767,2.59761)(45.01,2.56499)(45.8433,2.51275)(46.6767,2.44662)(47.51,2.37291)(48.3433,2.29784)(49.1767,2.22504)(50.01,2.15606)(50.8433,2.09188)(51.6767,2.03142)(52.51,1.97341)(53.3433,1.91802)(54.1767,1.86775)(55.01,1.82962)(55.8433,1.8104)(56.6767,1.81238)(57.51,1.8349)(58.3433,1.87461)(59.1767,1.92663)(60.01,1.98489)(60.8433,2.04211)(61.6767,2.09306)(62.51,2.13607)(63.3433,2.17604)(64.1767,2.21951)(65.01,2.26629)(65.8433,2.31281)(66.6767,2.35598)(67.51,2.39136)(68.3433,2.41257)(69.1767,2.41461)(70.01,2.39697)(70.8433,2.35752)(71.6767,2.30425)(72.51,2.25199)(73.3433,2.20447)(74.1767,2.16068)(75.01,2.12126)(75.8433,2.0875)(76.6767,2.05487)(77.51,2.01523)(78.3433,1.9779)(79.1767,1.94258)(80.01,1.89987)(80.8433,1.87376)(81.6767,1.86859)(82.51,1.8795)(83.3433,1.90375)(84.1767,1.94147)(85.01,1.9914)(85.8433,2.03412)(86.6767,2.06565)(87.51,2.11369)(88.3433,2.14465)(89.1767,2.16174)(90.01,2.19084)(90.8433,2.2219)(91.6767,2.24973)(92.51,2.2729)(93.3433,2.2943)(94.1767,2.31524)(95.01,2.29729)(95.8433,2.27232)(96.6767,2.26796)(97.51,2.22368)(98.3433,2.17557)(99.1767,2.138)(100.01,2.10168)(100.843,2.06593)(101.677,2.03258)(102.51,2.00809)(103.343,2.00211)(104.177,1.9555)(105.01,1.9322)(105.843,1.94438)(106.677,1.92882)(107.51,1.92287)(108.343,1.93414)(109.177,1.95216)(110.01,1.97475)(110.843,2.00138)(111.677,2.03474)(112.51,2.09272)(113.343,2.10071)(114.177,2.12832)(115.01,2.16883)(115.843,2.20255)(116.677,2.2104)(117.51,2.22454)(118.343,2.23137)(119.177,2.23023)(120.01,2.22268)(120.843,2.21162)(121.677,2.20809)(122.51,2.18375)(123.343,2.1557)(124.177,2.13296)(125.01,2.13638)(125.843,2.07721)(126.677,2.04383)(127.51,2.01137)(128.343,1.98211)(129.177,1.95936)(130.01,1.94497)(130.843,1.94005)(131.677,1.96749)(132.51,1.94852)(133.343,1.96064)(134.177,1.97686)(135.01,1.99806)(135.843,2.00584)(136.677,2.0233)(137.51,2.0455)(138.343,2.07198)(139.177,2.10105)(140.01,2.13071)(140.843,2.16017)(141.677,2.18536)(142.51,2.19912)(143.343,2.20498)(144.177,2.20083)(145.01,2.19048)(145.843,2.17713)(146.677,2.16413)(147.51,2.15017)(148.343,2.13518)(149.177,2.11932)(150.01,2.10219)(150.843,2.08223)(151.677,2.0559)(152.51,2.02403)(153.343,1.99334)(154.177,1.9708)(155.01,1.9568)(155.843,1.94953)(156.677,1.94986)(157.51,1.95741)(158.343,1.97121)(159.177,1.98714)(160.01,2.0019)(160.843,2.01332)(161.677,2.02226)(162.51,2.03764)(163.343,2.05884)(164.177,2.08231)(165.01,2.10661)(165.843,2.13025)(166.677,2.15414)
};
\addplot [
draw=blue,thin
] coordinates{(0.01,0.000000000209054)(0.843333,0.00802717)(1.67667,0.0635396)(2.51,0.165793)(3.34333,0.329635)(4.17667,0.540061)(5.01,0.805293)(5.84333,1.1171)(6.67667,1.48172)(7.51,1.89334)(8.34333,2.35661)(9.17667,2.86744)(10.01,3.42903)(10.8433,4.03876)(11.6767,4.69848)(12.51,5.40692)(13.3433,6.1647)(14.1767,6.97169)(15.01,7.8275)(15.8433,8.73291)(16.6767,9.68678)(17.51,10.6905)(18.3433,11.7425)(19.1767,12.8444)(20.01,13.9944)(20.8433,15.1945)(21.6767,16.4427)(22.51,17.7407)(23.3433,19.0872)(24.1767,20.4832)(25.01,21.9278)(25.8433,23.4218)(26.6767,24.9647)(27.51,26.5566)(28.3433,28.1976)(29.1767,29.8874)(30.01,31.6266)(30.8433,33.4145)(31.6767,35.2517)(32.51,37.1376)(33.3433,39.0729)(34.1767,41.0568)(35.01,43.0902)(35.8433,45.1721)(36.6767,47.3035)(37.51,49.4835)(38.3433,51.7129)(39.1767,53.991)(40.01,56.3183)(40.8433,58.6945)(41.6767,61.1197)(42.51,63.5941)(43.3433,66.1172)(44.1767,68.6897)(45.01,71.3108)(45.8433,73.9813)(46.6767,76.7004)(47.51,79.469)(48.3433,82.2861)(49.1767,85.1526)(50.01,88.0678)(50.8433,91.0323)(51.6767,94.0456)(52.51,97.1081)(53.3433,100.219)(54.1767,103.38)(55.01,106.589)(55.8433,109.848)(56.6767,113.155)(57.51,116.511)(58.3433,119.917)(59.1767,123.371)(60.01,126.875)(60.8433,130.427)(61.6767,134.029)(62.51,137.679)(63.3433,141.378)(64.1767,145.127)(65.01,148.924)(65.8433,152.771)(66.6767,156.666)(67.51,160.611)(68.3433,164.604)(69.1767,168.647)(70.01,172.738)(70.8433,176.879)(71.6767,181.068)(72.51,185.307)(73.3433,189.594)(74.1767,193.931)(75.01,198.316)(75.8433,202.751)(76.6767,207.234)(77.51,211.767)(78.3433,216.348)(79.1767,220.978)(80.01,225.658)(80.8433,230.386)(81.6767,235.164)(82.51,239.99)(83.3433,244.866)(84.1767,249.79)(85.01,254.764)(85.8433,259.786)(86.6767,264.858)(87.51,269.978)(88.3433,275.148)(89.1767,280.366)(90.01,285.634)(90.8433,290.95)(91.6767,296.316)(92.51,301.73)(93.3433,307.194)(94.1767,312.707)(95.01,318.268)(95.8433,323.879)(96.6767,329.538)(97.51,335.247)(98.3433,341.004)(99.1767,346.811)(100.01,352.666)(100.843,358.571)(101.677,364.524)(102.51,370.527)(103.343,376.578)(104.177,382.679)(105.01,388.828)(105.843,395.027)(106.677,401.274)(107.51,407.571)(108.343,413.916)(109.177,420.311)(110.01,426.754)(110.843,433.247)(111.677,439.788)(112.51,446.379)(113.343,453.018)(114.177,459.707)(115.01,466.444)(115.843,473.231)(116.677,480.066)(117.51,486.951)(118.343,493.884)(119.177,500.867)(120.01,507.898)(120.843,514.979)(121.677,522.108)(122.51,529.287)(123.343,536.514)(124.177,543.791)(125.01,551.116)(125.843,558.491)(126.677,565.914)(127.51,573.387)(128.343,580.908)(129.177,588.479)(130.01,596.098)(130.843,603.767)(131.677,611.484)(132.51,619.251)(133.343,627.066)(134.177,634.931)(135.01,642.844)(135.843,650.807)(136.677,658.818)(137.51,666.879)(138.343,674.988)(139.177,683.147)(140.01,691.354)(140.843,699.611)(141.677,707.916)(142.51,716.271)(143.343,724.674)(144.177,733.127)(145.01,741.628)(145.843,750.179)(146.677,758.778)(147.51,767.427)(148.343,776.124)(149.177,784.871)(150.01,793.666)(150.843,802.511)(151.677,811.404)(152.51,820.347)(153.343,829.338)(154.177,838.378)(155.01,847.468)(155.843,856.606)(156.677,865.794)(157.51,875.03)(158.343,884.316)(159.177,893.65)(160.01,903.034)(160.843,912.466)(161.677,921.948)(162.51,931.478)(163.343,941.057)(164.177,950.686)(165.01,960.363)(165.843,970.09)(166.677,979.865)
 };
\addplot [
draw=red,ultra thick
] coordinates{(0.01,0.000000000285687)(0.843333,0.00871643)(1.67667,0.064612)(2.51,0.166261)(3.34333,0.321883)(4.17667,0.51408)(5.01,0.742923)(5.84333,0.997381)(6.67667,1.27156)(7.51,1.55786)(8.34333,1.84658)(9.17667,2.13241)(10.01,2.40477)(10.8433,2.65952)(11.6767,2.88791)(12.51,3.08627)(13.3433,3.24948)(14.1767,3.37436)(15.01,3.4599)(15.8433,3.50413)(16.6767,3.50943)(17.51,3.47602)(18.3433,3.40839)(19.1767,3.30951)(20.01,3.18472)(20.8433,3.03956)(21.6767,2.87935)(22.51,2.71121)(23.3433,2.54007)(24.1767,2.37309)(25.01,2.21477)(25.8433,2.07081)(26.6767,1.94513)(27.51,1.84103)(28.3433,1.76147)(29.1767,1.70715)(30.01,1.67948)(30.8433,1.67704)(31.6767,1.6992)(32.51,1.74317)(33.3433,1.80617)(34.1767,1.88479)(35.01,1.9746)(35.8433,2.07201)(36.6767,2.17192)(37.51,2.27082)(38.3433,2.36406)(39.1767,2.4484)(40.01,2.52058)(40.8433,2.57792)(41.6767,2.61899)(42.51,2.64211)(43.3433,2.64759)(44.1767,2.63518)(45.01,2.60649)(45.8433,2.56293)(46.6767,2.50682)(47.51,2.44102)(48.3433,2.36819)(49.1767,2.292)(50.01,2.21514)(50.8433,2.14128)(51.6767,2.07295)(52.51,2.013)(53.3433,1.96359)(54.1767,1.92627)(55.01,1.90251)(55.8433,1.89242)(56.6767,1.89649)(57.51,1.91366)(58.3433,1.94314)(59.1767,1.98315)(60.01,2.03171)(60.8433,2.08668)(61.6767,2.14526)(62.51,2.20527)(63.3433,2.26371)(64.1767,2.31853)(65.01,2.36721)(65.8433,2.40799)(66.6767,2.43934)(67.51,2.45999)(68.3433,2.46961)(69.1767,2.46768)(70.01,2.45491)(70.8433,2.43175)(71.6767,2.39964)(72.51,2.36002)(73.3433,2.31469)(74.1767,2.26584)(75.01,2.21533)(75.8433,2.16566)(76.6767,2.11855)(77.51,2.07623)(78.3433,2.04016)(79.1767,2.01183)(80.01,1.9923)(80.8433,1.98203)(81.6767,1.98155)(82.51,1.99032)(83.3433,2.00812)(84.1767,2.03372)(85.01,2.06606)(85.8433,2.10357)(86.6767,2.14447)(87.51,2.18714)(88.3433,2.22947)(89.1767,2.26995)(90.01,2.30659)(90.8433,2.33815)(91.6767,2.36322)(92.51,2.38087)(93.3433,2.39057)(94.1767,2.39185)(95.01,2.38506)(95.8433,2.37032)(96.6767,2.34868)(97.51,2.32095)(98.3433,2.28853)(99.1767,2.25287)(100.01,2.21543)(100.843,2.17803)(101.677,2.142)(102.51,2.10913)(103.343,2.0805)(104.177,2.05746)(105.01,2.04079)(105.843,2.03111)(106.677,2.0288)(107.51,2.03367)(108.343,2.04563)(109.177,2.06381)(110.01,2.08755)(110.843,2.11559)(111.677,2.14676)(112.51,2.17967)(113.343,2.21278)(114.177,2.24487)(115.01,2.27434)(115.843,2.3002)(116.677,2.32119)(117.51,2.33663)(118.343,2.34586)(119.177,2.34855)(120.01,2.34478)(120.843,2.33462)(121.677,2.31881)(122.51,2.29788)(123.343,2.27297)(124.177,2.24508)(125.01,2.21549)(125.843,2.18551)(126.677,2.15632)(127.51,2.12933)(128.343,2.10545)(129.177,2.08588)(130.01,2.07121)(130.843,2.06213)(131.677,2.05891)(132.51,2.06157)(133.343,2.07004)(134.177,2.08371)(135.01,2.10213)(135.843,2.12425)(136.677,2.14923)(137.51,2.17588)(138.343,2.20305)(139.177,2.22962)(140.01,2.25434)(140.843,2.27634)(141.677,2.29451)(142.51,2.30831)(143.343,2.31704)(144.177,2.32047)(145.01,2.3185)(145.843,2.31122)(146.677,2.29912)(147.51,2.2826)(148.343,2.26261)(149.177,2.23987)(150.01,2.21551)(150.843,2.19052)(151.677,2.16599)(152.51,2.14302)(153.343,2.12245)(154.177,2.10532)(155.01,2.09213)(155.843,2.08356)(156.677,2.07983)(157.51,2.08108)(158.343,2.08723)(159.177,2.09788)(160.01,2.11266)(160.843,2.13073)(161.677,2.15143)(162.51,2.17372)(163.343,2.19672)(164.177,2.21938)(165.01,2.24073)(165.843,2.25993)(166.677,2.27605)
 };

\legend{Truncated Mie series,Born approximation,Semigroup asymptotics};

\node[fill=white,anchor=south,draw=black] at
(axis cs:146.75,.25) {\tiny$ n=9/8$};
\end{axis}
\end{tikzpicture}\end{minipage}\begin{minipage}{.25\textwidth}\caption{Comparison of various  approximate solutions to the total scattering cross-section of a glass bead  $n=3/2 $ immersed in water  $n=4/3 $, a system with relative refractive index $ n=9/8$. For numerical computations, the Mie series \eqref{eq:Mie_sc_series} are truncated after the first  200 terms.   \label{fig:5-1}}\end{minipage}

\end{figure}
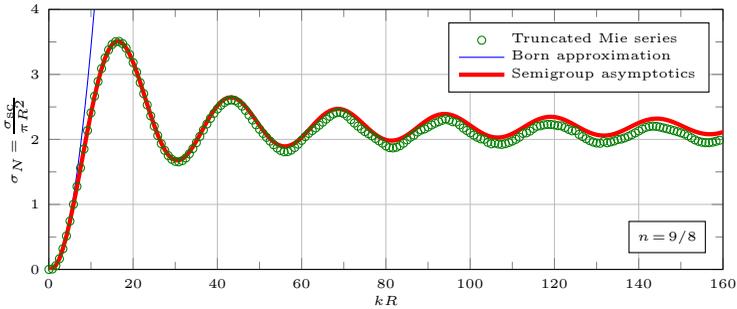

In Figure~\ref{fig:5-1}, we  normalize the total scattering cross-section $\sigma_\text{sc} $ by the geometric cross-section $\pi R^2 $ and plot the normalized cross-section $\sigma_N=\sigma_\text{sc}/(\pi R^2) $ against the normalized radius $kR$ and compare the exact solution (Mie theory) with the Born approximation \eqref{eq:RGB} and the semigroup approach [\textit{i.e.}~the total scattering cross-section  $\sigma_\text{sc} $ read off from the  imaginary part of \eqref{eq:nonperturb}]. It is graphically evident that the failure of Born approximation in the non-perturbative regime is remedied by a semigroup approach.

  \begin{figure}[!b]
\begin{minipage}{.75\textwidth}\begin{tikzpicture}\pgfplotsset{xlabel style={yshift=.3cm}, ylabel style={yshift=-.8cm},width=10.5cm,height=5cm,xmajorgrids,ymajorgrids, tick label style={font=\tiny},label style={font=\tiny}}
\begin{axis}[xlabel={$kR$},ylabel={$ \sigma_N=\frac{\sigma_{\text{sc}}}{\pi R^2}$},ymin=0,ymax=6,xmin=0,xmax=20,enlargelimits=false,  minor y tick num =1, minor x tick num =4,ytick={0,1,...,6},xtick={0,5,...,35},legend style={
legend columns=1,
cells={anchor=west},at={(.98,.95)},
font=\tiny,legend style={row sep=-2.5pt},
},
   restrict y to domain=0:6]\addplot [only marks,
draw=green!50!black,mark=o,thin,mark size=1.5pt
] coordinates{(0.01,0.00000000230682)(0.343333,0.00322711)(0.676667,0.0486471)(1.01,0.222812)(1.34333,0.550947)(1.67667,1.06397)(2.01,1.81641)(2.34333,2.24966)(2.67667,2.99393)(3.01,3.42184)(3.34333,3.75702)(3.67667,4.19165)(4.01,4.05502)(4.34333,4.36475)(4.67667,3.92882)(5.01,3.93502)(5.34333,3.47404)(5.67667,3.09257)(6.01,2.88156)(6.34333,2.28813)(6.67667,2.35549)(7.01,1.83495)(7.34333,2.02727)(7.67667,1.784)(8.01,1.80577)(8.34333,1.98156)(8.67667,1.93477)(9.01,2.24373)(9.34333,2.39625)(9.67667,2.6585)(10.01,2.87189)(10.3433,3.11075)(10.6767,3.094)(11.01,2.89086)(11.3433,2.79736)(11.6767,2.68763)(12.01,2.47039)(12.3433,2.43772)(12.6767,2.19289)(13.01,2.24962)(13.3433,2.09533)(13.6767,2.13213)(14.01,1.97719)(14.3433,1.85361)(14.6767,2.03353)(15.01,1.98631)(15.3433,2.24742)(15.6767,2.25545)(16.01,2.42866)(16.3433,2.51659)(16.6767,2.59241)(17.01,2.66854)(17.3433,2.56431)(17.6767,2.82632)(18.01,2.45552)(18.3433,2.49244)(18.6767,2.2926)(19.01,2.18756)(19.3433,2.1362)(19.6767,1.97622)(20.01,2.03192)(20.3433,1.9253)(20.6767,2.0601)(21.01,2.0266)(21.3433,2.1568)(21.6767,2.21056)(22.01,2.22918)(22.3433,2.34921)(22.6767,2.3854)(23.01,2.41534)(23.3433,2.46638)(23.6767,2.5018)(24.01,2.4565)(24.3433,2.46121)(24.6767,2.38959)(25.01,2.24716)(25.3433,2.11985)(25.6767,2.11407)(26.01,1.98239)(26.3433,2.01456)(26.6767,1.97439)(27.01,2.01118)(27.3433,2.2014)(27.6767,2.13704)(28.01,2.20515)(28.3433,2.16735)(28.6767,2.33908)(29.01,2.29805)(29.3433,2.37458)(29.6767,2.38202)(30.01,2.35195)(30.3433,2.38544)(30.6767,2.42125)(31.01,2.32017)(31.3433,2.15963)(31.6767,2.20465)(32.01,2.06235)(32.3433,2.03745)(32.6767,2.03117)(33.01,1.9857)(33.3433,2.03367)
};
\addplot [
draw=blue,thin
] coordinates{(0.01,0.0000499997)(0.0933333,0.00435345)(0.176667,0.0155785)(0.26,0.0336733)(0.343333,0.058554)(0.426667,0.0901058)(0.51,0.128183)(0.593333,0.17261)(0.676667,0.223181)(0.76,0.279665)(0.843333,0.341802)(0.926667,0.409307)(1.01,0.481871)(1.09333,0.559163)(1.17667,0.64083)(1.26,0.726504)(1.34333,0.815794)(1.42667,0.908301)(1.51,1.00361)(1.59333,1.10129)(1.67667,1.2009)(1.76,1.30202)(1.84333,1.40419)(1.92667,1.50696)(2.01,1.6099)(2.09333,1.71255)(2.17667,1.81447)(2.26,1.91523)(2.34333,2.01442)(2.42667,2.11161)(2.51,2.20641)(2.59333,2.29844)(2.67667,2.38733)(2.76,2.47272)(2.84333,2.5543)(2.92667,2.63176)(3.01,2.70481)(3.09333,2.77319)(3.17667,2.83668)(3.26,2.89507)(3.34333,2.94817)(3.42667,2.99585)(3.51,3.03797)(3.59333,3.07444)(3.67667,3.1052)(3.76,3.13021)(3.84333,3.14947)(3.92667,3.163)(4.01,3.17086)(4.09333,3.17311)(4.17667,3.16987)(4.26,3.16127)(4.34333,3.14747)(4.42667,3.12864)(4.51,3.105)(4.59333,3.07677)(4.67667,3.04419)(4.76,3.00752)(4.84333,2.96706)(4.92667,2.9231)(5.01,2.87594)(5.09333,2.8259)(5.17667,2.77333)(5.26,2.71856)(5.34333,2.66192)(5.42667,2.60378)(5.51,2.54448)(5.59333,2.48437)(5.67667,2.42379)(5.76,2.3631)(5.84333,2.30263)(5.92667,2.24272)(6.01,2.18367)(6.09333,2.12582)(6.17667,2.06945)(6.26,2.01484)(6.34333,1.96227)(6.42667,1.912)(6.51,1.86425)(6.59333,1.81923)(6.67667,1.77716)(6.76,1.73819)(6.84333,1.7025)(6.92667,1.6702)(7.01,1.6414)(7.09333,1.6162)(7.17667,1.59466)(7.26,1.57682)(7.34333,1.5627)(7.42667,1.5523)(7.51,1.54558)(7.59333,1.54251)(7.67667,1.54301)(7.76,1.547)(7.84333,1.55437)(7.92667,1.56499)(8.01,1.57872)(8.09333,1.5954)(8.17667,1.61485)(8.26,1.63689)(8.34333,1.66131)(8.42667,1.68791)(8.51,1.71646)(8.59333,1.74672)(8.67667,1.77847)(8.76,1.81147)(8.84333,1.84546)(8.92667,1.88021)(9.01,1.91546)(9.09333,1.95097)(9.17667,1.9865)(9.26,2.02181)(9.34333,2.05666)(9.42667,2.09083)(9.51,2.1241)(9.59333,2.15626)(9.67667,2.18711)(9.76,2.21647)(9.84333,2.24417)(9.92667,2.27003)(10.01,2.29393)(10.0933,2.31572)(10.1767,2.3353)(10.26,2.35256)(10.3433,2.36742)(10.4267,2.37982)(10.51,2.38971)(10.5933,2.39706)(10.6767,2.40185)(10.76,2.40409)(10.8433,2.4038)(10.9267,2.40102)(11.01,2.39579)(11.0933,2.38819)(11.1767,2.37829)(11.26,2.3662)(11.3433,2.35201)(11.4267,2.33587)(11.51,2.31788)(11.5933,2.29821)(11.6767,2.277)(11.76,2.25441)(11.8433,2.23061)(11.9267,2.20577)(12.01,2.18007)(12.0933,2.1537)(12.1767,2.12682)(12.26,2.09964)(12.3433,2.07233)(12.4267,2.04508)(12.51,2.01806)(12.5933,1.99145)(12.6767,1.96542)(12.76,1.94014)(12.8433,1.91576)(12.9267,1.89244)(13.01,1.87032)(13.0933,1.84953)(13.1767,1.83018)(13.26,1.8124)(13.3433,1.79627)(13.4267,1.78189)(13.51,1.76932)(13.5933,1.75864)(13.6767,1.74988)(13.76,1.74308)(13.8433,1.73826)(13.9267,1.73544)(14.01,1.73459)(14.0933,1.73571)(14.1767,1.73875)(14.26,1.74369)(14.3433,1.75045)(14.4267,1.75898)(14.51,1.76919)(14.5933,1.78099)(14.6767,1.79428)(14.76,1.80895)(14.8433,1.8249)(14.9267,1.84199)(15.01,1.8601)(15.0933,1.87908)(15.1767,1.8988)(15.26,1.91913)(15.3433,1.9399)(15.4267,1.96098)(15.51,1.98221)(15.5933,2.00345)(15.6767,2.02456)(15.76,2.04539)(15.8433,2.0658)(15.9267,2.08565)(16.01,2.10482)(16.0933,2.12319)(16.1767,2.14062)(16.26,2.15702)(16.3433,2.17228)(16.4267,2.18631)(16.51,2.19902)(16.5933,2.21034)(16.6767,2.22021)(16.76,2.22856)(16.8433,2.23537)(16.9267,2.24059)(17.01,2.24421)(17.0933,2.24621)(17.1767,2.2466)(17.26,2.24539)(17.3433,2.2426)(17.4267,2.23826)(17.51,2.23242)(17.5933,2.22513)(17.6767,2.21645)(17.76,2.20646)(17.8433,2.19523)(17.9267,2.18285)(18.01,2.16942)(18.0933,2.15503)(18.1767,2.13979)(18.26,2.12382)(18.3433,2.10722)(18.4267,2.09013)(18.51,2.07264)(18.5933,2.0549)(18.6767,2.03701)(18.76,2.01912)(18.8433,2.00132)(18.9267,1.98375)(19.01,1.96653)(19.0933,1.94976)(19.1767,1.93356)(19.26,1.91803)(19.3433,1.90327)(19.4267,1.88938)(19.51,1.87643)(19.5933,1.86453)(19.6767,1.85372)(19.76,1.84409)(19.8433,1.83567)(19.9267,1.82853)(20.01,1.8227)(20.0933,1.8182)(20.1767,1.81506)(20.26,1.81329)(20.3433,1.81288)(20.4267,1.81383)(20.51,1.81612)(20.5933,1.81972)(20.6767,1.82459)(20.76,1.8307)(20.8433,1.83799)
};
\addplot [
draw=blue,thin,dashed
] coordinates{(0.01,0.0000999994)(0.0933333,0.00867316)(0.176667,0.0305067)(0.26,0.0644547)(0.343333,0.109491)(0.426667,0.164778)(0.51,0.229607)(0.593333,0.30334)(0.676667,0.385382)(0.76,0.475162)(0.843333,0.572115)(0.926667,0.67568)(1.01,0.785295)(1.09333,0.900391)(1.17667,1.02039)(1.26,1.14472)(1.34333,1.27278)(1.42667,1.40399)(1.51,1.53775)(1.59333,1.67346)(1.67667,1.81053)(1.76,1.94836)(1.84333,2.08636)(1.92667,2.22395)(2.01,2.36055)(2.09333,2.49559)(2.17667,2.62854)(2.26,2.75885)(2.34333,2.886)(2.42667,3.00951)(2.51,3.12889)(2.59333,3.24371)(2.67667,3.35353)(2.76,3.45796)(2.84333,3.55663)(2.92667,3.64921)(3.01,3.73539)(3.09333,3.81492)(3.17667,3.88754)(3.26,3.95306)(3.34333,4.01132)(3.42667,4.06219)(3.51,4.10558)(3.59333,4.14142)(3.67667,4.1697)(3.76,4.19044)(3.84333,4.2037)(3.92667,4.20955)(4.01,4.20811)(4.09333,4.19955)(4.17667,4.18405)(4.26,4.16183)(4.34333,4.13313)(4.42667,4.09822)(4.51,4.0574)(4.59333,4.01101)(4.67667,3.95937)(4.76,3.90287)(4.84333,3.84187)(4.92667,3.77678)(5.01,3.70802)(5.09333,3.63599)(5.17667,3.56115)(5.26,3.48391)(5.34333,3.40473)(5.42667,3.32405)(5.51,3.24231)(5.59333,3.15994)(5.67667,3.07739)(5.76,2.99509)(5.84333,2.91344)(5.92667,2.83285)(6.01,2.75371)(6.09333,2.6764)(6.17667,2.60128)(6.26,2.52868)(6.34333,2.45893)(6.42667,2.39231)(6.51,2.3291)(6.59333,2.26956)(6.67667,2.21389)(6.76,2.1623)(6.84333,2.11496)(6.92667,2.07201)(7.01,2.03356)(7.09333,1.99971)(7.17667,1.9705)(7.26,1.94597)(7.34333,1.92612)(7.42667,1.91093)(7.51,1.90035)(7.59333,1.8943)(7.67667,1.89268)(7.76,1.89536)(7.84333,1.90221)(7.92667,1.91304)(8.01,1.92768)(8.09333,1.94591)(8.17667,1.9675)(8.26,1.99223)(8.34333,2.01983)(8.42667,2.05004)(8.51,2.08258)(8.59333,2.11716)(8.67667,2.1535)(8.76,2.19131)(8.84333,2.23027)(8.92667,2.27009)(9.01,2.31047)(9.09333,2.35111)(9.17667,2.39173)(9.26,2.43203)(9.34333,2.47174)(9.42667,2.51058)(9.51,2.5483)(9.59333,2.58465)(9.67667,2.61941)(9.76,2.65234)(9.84333,2.68324)(9.92667,2.71194)(10.01,2.73827)(10.0933,2.76206)(10.1767,2.7832)(10.26,2.80158)(10.3433,2.81709)(10.4267,2.82968)(10.51,2.83929)(10.5933,2.84589)(10.6767,2.84947)(10.76,2.85004)(10.8433,2.84764)(10.9267,2.8423)(11.01,2.8341)(11.0933,2.82313)(11.1767,2.80949)(11.26,2.79328)(11.3433,2.77466)(11.4267,2.75376)(11.51,2.73075)(11.5933,2.70579)(11.6767,2.67908)(11.76,2.65079)(11.8433,2.62113)(11.9267,2.59031)(12.01,2.55854)(12.0933,2.52602)(12.1767,2.49298)(12.26,2.45963)(12.3433,2.42618)(12.4267,2.39285)(12.51,2.35984)(12.5933,2.32737)(12.6767,2.29563)(12.76,2.26481)(12.8433,2.23509)(12.9267,2.20664)(13.01,2.17964)(13.0933,2.15422)(13.1767,2.13052)(13.26,2.10868)(13.3433,2.0888)(13.4267,2.07097)(13.51,2.05529)(13.5933,2.04182)(13.6767,2.0306)(13.76,2.02167)(13.8433,2.01507)(13.9267,2.01078)(14.01,2.00879)(14.0933,2.0091)(14.1767,2.01164)(14.26,2.01637)(14.3433,2.02322)(14.4267,2.0321)(14.51,2.04293)(14.5933,2.05559)(14.6767,2.06997)(14.76,2.08593)(14.8433,2.10336)(14.9267,2.12209)(15.01,2.14199)(15.0933,2.16288)(15.1767,2.18462)(15.26,2.20703)(15.3433,2.22995)(15.4267,2.2532)(15.51,2.27663)(15.5933,2.30005)(15.6767,2.3233)(15.76,2.34622)(15.8433,2.36864)(15.9267,2.39042)(16.01,2.41141)(16.0933,2.43146)(16.1767,2.45044)(16.26,2.46823)(16.3433,2.4847)(16.4267,2.49976)(16.51,2.51331)(16.5933,2.52527)(16.6767,2.53557)(16.76,2.54414)(16.8433,2.55094)(16.9267,2.55594)(17.01,2.55911)(17.0933,2.56045)(17.1767,2.55995)(17.26,2.55763)(17.3433,2.55352)(17.4267,2.54766)(17.51,2.54011)(17.5933,2.53091)(17.6767,2.52015)(17.76,2.50791)(17.8433,2.49428)(17.9267,2.47936)(18.01,2.46326)(18.0933,2.44609)(18.1767,2.42797)(18.26,2.40904)(18.3433,2.38942)(18.4267,2.36924)(18.51,2.34865)(18.5933,2.32778)(18.6767,2.30678)(18.76,2.28577)(18.8433,2.2649)(18.9267,2.2443)(19.01,2.22411)(19.0933,2.20445)(19.1767,2.18545)(19.26,2.16722)(19.3433,2.14989)(19.4267,2.13354)(19.51,2.11828)(19.5933,2.1042)(19.6767,2.09138)(19.76,2.07989)(19.8433,2.06979)(19.9267,2.06113)(20.01,2.05395)(20.0933,2.04829)(20.1767,2.04416)(20.26,2.04158)(20.3433,2.04053)(20.4267,2.04102)(20.51,2.04302)(20.5933,2.0465)(20.6767,2.05141)(20.76,2.05771)(20.8433,2.06533)(20.9267,2.07421)(21.01,2.08428)(21.0933,2.09544)(21.1767,2.10761)(21.26,2.12069)(21.3433,2.13458)(21.4267,2.14917)(21.51,2.16435)(21.5933,2.18002)(21.6767,2.19604)(21.76,2.21232)(21.8433,2.22872)(21.9267,2.24513)(22.01,2.26143)(22.0933,2.27751)(22.1767,2.29326)(22.26,2.30856)(22.3433,2.32331)(22.4267,2.33741)(22.51,2.35077)(22.5933,2.36328)(22.6767,2.37488)(22.76,2.38548)(22.8433,2.39501)(22.9267,2.40342)(23.01,2.41064)(23.0933,2.41664)(23.1767,2.42138)(23.26,2.42483)(23.3433,2.42698)(23.4267,2.42781)(23.51,2.42733)(23.5933,2.42554)(23.6767,2.42247)(23.76,2.41813)(23.8433,2.41257)(23.9267,2.40583)(24.01,2.39796)(24.0933,2.38902)(24.1767,2.37907)(24.26,2.36819)(24.3433,2.35645)(24.4267,2.34394)(24.51,2.33074)(24.5933,2.31695)(24.6767,2.30266)(24.76,2.28796)(24.8433,2.27297)(24.9267,2.25779)(25.01,2.2425)(25.0933,2.22722)(25.1767,2.21205)(25.26,2.19708)(25.3433,2.18242)(25.4267,2.16816)(25.51,2.1544)(25.5933,2.14121)(25.6767,2.12869)(25.76,2.1169)(25.8433,2.10593)(25.9267,2.09585)(26.01,2.0867)(26.0933,2.07854)(26.1767,2.07143)(26.26,2.06539)(26.3433,2.06047)(26.4267,2.05667)(26.51,2.05403)(26.5933,2.05254)(26.6767,2.0522)(26.76,2.05301)(26.8433,2.05495)(26.9267,2.058)(27.01,2.06212)(27.0933,2.06728)(27.1767,2.07343)(27.26,2.08052)(27.3433,2.08849)(27.4267,2.09728)(27.51,2.10682)(27.5933,2.11703)(27.6767,2.12784)(27.76,2.13917)(27.8433,2.15093)(27.9267,2.16304)(28.01,2.17541)(28.0933,2.18795)(28.1767,2.20056)(28.26,2.21316)(28.3433,2.22566)(28.4267,2.23797)(28.51,2.25)(28.5933,2.26168)(28.6767,2.27291)(28.76,2.28362)(28.8433,2.29374)(28.9267,2.30321)(29.01,2.31194)(29.0933,2.31989)(29.1767,2.32701)(29.26,2.33325)(29.3433,2.33856)(29.4267,2.34291)(29.51,2.34629)(29.5933,2.34865)(29.6767,2.35001)(29.76,2.35034)(29.8433,2.34964)(29.9267,2.34794)(30.01,2.34523)
};
\addplot [
draw=red,ultra thick
] coordinates{(0.01,0.00000000694673)(0.0933333,0.0000523994)(0.176667,0.000662916)(0.26,0.00303747)(0.343333,0.00894172)(0.426667,0.020468)(0.51,0.0397629)(0.593333,0.0687533)(0.676667,0.10891)(0.76,0.161079)(0.843333,0.225407)(0.926667,0.301358)(1.01,0.38783)(1.09333,0.483336)(1.17667,0.586231)(1.26,0.694944)(1.34333,0.808177)(1.42667,0.925053)(1.51,1.04518)(1.59333,1.16862)(1.67667,1.2958)(1.76,1.42734)(1.84333,1.56386)(1.92667,1.70577)(2.01,1.85311)(2.09333,2.00545)(2.17667,2.16181)(2.26,2.32074)(2.34333,2.48042)(2.42667,2.63883)(2.51,2.79394)(2.59333,2.9439)(2.67667,3.08724)(2.76,3.22294)(2.84333,3.35048)(2.92667,3.46983)(3.01,3.58134)(3.09333,3.68558)(3.17667,3.78322)(3.26,3.87478)(3.34333,3.96058)(3.42667,4.04059)(3.51,4.11445)(3.59333,4.18147)(3.67667,4.24074)(3.76,4.29127)(3.84333,4.33211)(3.92667,4.36253)(4.01,4.38211)(4.09333,4.39079)(4.17667,4.38892)(4.26,4.37718)(4.34333,4.35649)(4.42667,4.32791)(4.51,4.29247)(4.59333,4.2511)(4.67667,4.20447)(4.76,4.15303)(4.84333,4.09692)(4.92667,4.03609)(5.01,3.97037)(5.09333,3.89956)(5.17667,3.82356)(5.26,3.74245)(5.34333,3.65658)(5.42667,3.56657)(5.51,3.47329)(5.59333,3.37779)(5.67667,3.28125)(5.76,3.18479)(5.84333,3.08947)(5.92667,2.99611)(6.01,2.90532)(6.09333,2.81743)(6.17667,2.73256)(6.26,2.65066)(6.34333,2.5716)(6.42667,2.49528)(6.51,2.42167)(6.59333,2.35092)(6.67667,2.28336)(6.76,2.2195)(6.84333,2.15998)(6.92667,2.10551)(7.01,2.05678)(7.09333,2.01436)(7.17667,1.97868)(7.26,1.94991)(7.34333,1.92801)(7.42667,1.91272)(7.51,1.90359)(7.59333,1.90011)(7.67667,1.90172)(7.76,1.90796)(7.84333,1.91845)(7.92667,1.93299)(8.01,1.95151)(8.09333,1.97409)(8.17667,2.00085)(8.26,2.03194)(8.34333,2.06743)(8.42667,2.10728)(8.51,2.15126)(8.59333,2.19896)(8.67667,2.24981)(8.76,2.30311)(8.84333,2.35807)(8.92667,2.41392)(9.01,2.46996)(9.09333,2.5256)(9.17667,2.58042)(9.26,2.63416)(9.34333,2.6867)(9.42667,2.73803)(9.51,2.78817)(9.59333,2.83714)(9.67667,2.88488)(9.76,2.93123)(9.84333,2.97587)(9.92667,3.0184)(10.01,3.0583)(10.0933,3.09502)(10.1767,3.12803)(10.26,3.15689)(10.3433,3.18126)(10.4267,3.20098)(10.51,3.21606)(10.5933,3.22664)(10.6767,3.233)(10.76,3.23546)(10.8433,3.23436)(10.9267,3.22999)(11.01,3.22256)(11.0933,3.21215)(11.1767,3.19876)(11.26,3.18231)(11.3433,3.16265)(11.4267,3.13967)(11.51,3.11331)(11.5933,3.08359)(11.6767,3.05069)(11.76,3.01492)(11.8433,2.97671)(11.9267,2.93658)(12.01,2.8951)(12.0933,2.85284)(12.1767,2.81031)(12.26,2.76792)(12.3433,2.72597)(12.4267,2.68464)(12.51,2.644)(12.5933,2.60408)(12.6767,2.56485)(12.76,2.52633)(12.8433,2.4886)(12.9267,2.45181)(13.01,2.41623)(13.0933,2.38221)(13.1767,2.35018)(13.26,2.32056)(13.3433,2.29379)(13.4267,2.27019)(13.51,2.25002)(13.5933,2.2334)(13.6767,2.22032)(13.76,2.21066)(13.8433,2.20422)(13.9267,2.20075)(14.01,2.20003)(14.0933,2.20185)(14.1767,2.20609)(14.26,2.21271)(14.3433,2.22175)(14.4267,2.2333)(14.51,2.24747)(14.5933,2.26438)(14.6767,2.28407)(14.76,2.30649)(14.8433,2.33149)(14.9267,2.35882)(15.01,2.38811)(15.0933,2.41895)(15.1767,2.45088)(15.26,2.48346)(15.3433,2.51631)(15.4267,2.54912)(15.51,2.58166)(15.5933,2.61382)(15.6767,2.64554)(15.76,2.67682)(15.8433,2.70767)(15.9267,2.73804)(16.01,2.76787)(16.0933,2.79699)(16.1767,2.82517)(16.26,2.8521)(16.3433,2.87742)(16.4267,2.90076)(16.51,2.9218)(16.5933,2.94026)(16.6767,2.95596)(16.76,2.96881)(16.8433,2.97883)(16.9267,2.98613)(17.01,2.99087)(17.0933,2.99325)(17.1767,2.99344)(17.26,2.99158)(17.3433,2.98774)(17.4267,2.98193)(17.51,2.9741)(17.5933,2.96416)(17.6767,2.952)(17.76,2.93753)(17.8433,2.92074)(17.9267,2.90168)(18.01,2.88048)(18.0933,2.85738)(18.1767,2.83268)(18.26,2.80675)(18.3433,2.77995)(18.4267,2.75263)(18.51,2.72512)(18.5933,2.69763)(18.6767,2.67033)(18.76,2.64331)(18.8433,2.6166)(18.9267,2.59021)(19.01,2.56412)(19.0933,2.53839)(19.1767,2.51309)(19.26,2.48839)(19.3433,2.46449)(19.4267,2.44169)(19.51,2.42027)(19.5933,2.40056)(19.6767,2.38283)(19.76,2.36729)(19.8433,2.35408)(19.9267,2.34326)(20.01,2.3348)(20.0933,2.32858)(20.1767,2.32448)(20.26,2.32234)(20.3433,2.32201)(20.4267,2.32341)(20.51,2.32649)(20.5933,2.33126)(20.6767,2.33778)(20.76,2.34615)(20.8433,2.35644)
 };

\legend{Truncated Mie series,Van de Hulst approximation,Evans--Fournier approximation,Semigroup asymptotics};

\node[fill=white,anchor=south,draw=black] at
(axis cs:18.4,.25) {\tiny$ n=3/2$};
\end{axis}
\end{tikzpicture}
\vspace{-.75em}
\begin{center}\begin{tiny}\quad\,(a)\end{tiny}\end{center}\begin{tikzpicture}\pgfplotsset{xlabel style={yshift=.3cm}, ylabel style={yshift=-.8cm},width=10.5cm,height=5cm,xmajorgrids,ymajorgrids, tick label style={font=\tiny},label style={font=\tiny}}
\begin{axis}[xlabel={$kR$},ylabel={$ \sigma_N=\frac{\sigma_{\text{sc}}}{\pi R^2}$},ymin=0,ymax=6,xmin=0,xmax=30,enlargelimits=false,  minor y tick num =1, minor x tick num =4,ytick={0,1,...,6},xtick={0,5,...,35},legend style={
legend columns=1,
cells={anchor=west},at={(.98,.95)},
font=\tiny,legend style={row sep=-2.5pt},
},
   restrict y to domain=0:6]\addplot [only marks,
draw=green!50!black,mark=o,thin,mark size=1.5pt
] coordinates{(0.01,0.00000000113033)(0.343333,0.00155574)(0.676667,0.0225093)(1.01,0.0992157)(1.34333,0.243916)(1.67667,0.439388)(2.01,0.740065)(2.34333,1.09484)(2.67667,1.40118)(3.01,1.80054)(3.34333,2.16727)(3.67667,2.49425)(4.01,2.87131)(4.34333,3.12348)(4.67667,3.42204)(5.01,3.62494)(5.34333,3.74581)(5.67667,3.94028)(6.01,3.89306)(6.34333,3.9496)(6.67667,3.9137)(7.01,3.71844)(7.34333,3.72076)(7.67667,3.43257)(8.01,3.28306)(8.34333,3.11513)(8.67667,2.77573)(9.01,2.73265)(9.34333,2.38561)(9.67667,2.25269)(10.01,2.14183)(10.3433,1.8517)(10.6767,1.96326)(11.01,1.73268)(11.3433,1.74382)(11.6767,1.81325)(12.01,1.67218)(12.3433,1.95305)(12.6767,1.87917)(13.01,2.02803)(13.3433,2.19725)(13.6767,2.14273)(14.01,2.47867)(14.3433,2.44334)(14.6767,2.61101)(15.01,2.74386)(15.3433,2.6266)(15.6767,2.85715)(16.01,2.76918)(16.3433,2.86242)(16.6767,2.86133)(17.01,2.63127)(17.3433,2.70368)(17.6767,2.54427)(18.01,2.60176)(18.3433,2.50437)(18.6767,2.21513)(19.01,2.22661)(19.3433,2.06734)(19.6767,2.12232)(20.01,2.08169)(20.3433,1.86611)(20.6767,1.91977)(21.01,1.8294)(21.3433,1.8957)(21.6767,1.98034)(22.01,1.93663)(22.3433,2.06127)(22.6767,2.01353)(23.01,2.11004)(23.3433,2.23003)(23.6767,2.38475)(24.01,2.4644)(24.3433,2.36226)(24.6767,2.43588)(25.01,2.49436)(25.3433,2.50686)(25.6767,2.69883)(26.01,2.48317)(26.3433,2.4979)(26.6767,2.46633)(27.01,2.37489)(27.3433,2.42308)(27.6767,2.31952)(28.01,2.2642)(28.3433,2.18441)(28.6767,2.07833)(29.01,2.10294)(29.3433,1.99092)(29.6767,2.10721)(30.01,1.95941)(30.3433,1.90433)(30.6767,1.98236)(31.01,1.91366)(31.3433,2.0204)(31.6767,2.0605)(32.01,2.02011)(32.3433,2.1389)(32.6767,2.10914)(33.01,2.24687)(33.3433,2.25181)(33.6767,2.37286)(34.01,2.37833)(34.3433,2.32421)(34.6767,2.43735)(35.01,2.3988)(35.3433,2.38659)(35.6767,2.43471)(36.01,2.32761)(36.3433,2.37991)(36.6767,2.29687)(37.01,2.26289)(37.3433,2.258)(37.6767,2.12529)(38.01,2.15733)(38.3433,2.06572)(38.6767,2.06256)(39.01,2.05472)(39.3433,1.94507)(39.6767,1.99403)(40.01,1.96035)(40.3433,2.01685)(40.6767,2.02902)(41.01,1.97574)(41.3433,2.06584)(41.6767,2.05697)(42.01,2.14739)(42.3433,2.18603)(42.6767,2.16379)(43.01,2.26954)(43.3433,2.2455)(43.6767,2.28826)(44.01,2.32964)(44.3433,2.30406)(44.6767,2.38732)(45.01,2.30208)(45.3433,2.29061)(45.6767,2.29602)(46.01,2.26584)(46.3433,2.31776)(46.6767,2.17226)(47.01,2.13592)(47.3433,2.12321)(47.6767,2.13062)(48.01,2.12331)(48.3433,2.00468)(48.6767,1.99197)(49.01,1.99367)(49.3433,1.96221)(49.6767,2.02476)(50.01,1.97647)(50.3433,2.01367)(50.6767,2.04068)(51.01,2.02965)(51.3433,2.12397)(51.6767,2.11557)(52.01,2.18349)(52.3433,2.19633)(52.6767,2.18631)(53.01,2.27258)(53.3433,2.33491)(53.6767,2.35223)(54.01,2.28469)(54.3433,2.24437)(54.6767,2.29933)(55.01,2.22775)(55.3433,2.26019)(55.6767,2.21272)(56.01,2.14741)(56.3433,2.17787)(56.6767,2.08818)(57.01,2.11672)(57.3433,2.13165)(57.6767,2.01289)(58.01,2.05019)(58.3433,1.97253)(58.6767,2.02266)(59.01,1.99649)(59.3433,1.97992)(59.6767,2.05268)(60.01,2.00221)(60.3433,2.0692)(60.6767,2.07664)(61.01,2.09161)(61.3433,2.25007)(61.6767,2.13196)(62.01,2.19341)(62.3433,2.2038)(62.6767,2.22468)(63.01,2.27934)(63.3433,2.21585)(63.6767,2.24785)(64.01,2.23345)(64.3433,2.24848)(64.6767,2.24476)(65.01,2.16229)(65.3433,2.176)(65.6767,2.13383)(66.01,2.12351)(66.3433,2.11356)(66.6767,2.03569)(67.01,2.06282)(67.3433,2.01864)(67.6767,1.99077)(68.01,2.0236)(68.3433,1.97801)(68.6767,2.04891)(69.01,2.00907)(69.3433,1.99978)(69.6767,2.06426)(70.01,2.05157)(70.3433,2.155)(70.6767,2.11071)(71.01,2.10928)(71.3433,2.17919)(71.6767,2.21196)(72.01,2.23484)(72.3433,2.20569)(72.6767,2.19063)(73.01,2.23576)(73.3433,2.19422)(73.6767,2.23202)(74.01,2.18921)(74.3433,2.15607)(74.6767,2.17491)(75.01,2.10663)(75.3433,2.13161)(75.6767,2.08476)(76.01,2.05943)(76.3433,2.06433)(76.6767,2.0024)(77.01,2.03482)(77.3433,2.05705)(77.6767,2.02811)(78.01,2.02613)(78.3433,1.98723)(78.6767,2.04122)(79.01,2.03449)(79.3433,2.05972)(79.6767,2.09588)(80.01,2.11378)(80.3433,2.13643)(80.6767,2.13398)(81.01,2.17086)(81.3433,2.19012)(81.6767,2.16453)(82.01,2.21484)(82.3433,2.1893)(82.6767,2.20647)(83.01,2.20262)(83.3433,2.16226)(83.6767,2.19199)(84.01,2.14026)(84.3433,2.12541)(84.6767,2.11651)(85.01,2.06916)(85.3433,2.10093)(85.6767,2.04092)(86.01,2.01975)(86.3433,2.0281)(86.6767,1.99995)(87.01,2.04362)(87.3433,2.00403)(87.6767,1.99684)(88.01,2.02824)(88.3433,2.0381)(88.6767,2.07525)(89.01,2.06029)(89.3433,2.0734)(89.6767,2.11042)(90.01,2.14557)(90.3433,2.15975)(90.6767,2.14815)(91.01,2.16151)(91.3433,2.18042)(91.6767,2.1687)(92.01,2.19367)(92.3433,2.16752)(92.6767,2.17615)(93.01,2.15875)(93.3433,2.12179)(93.6767,2.13652)(94.01,2.09778)(94.3433,2.12622)(94.6767,2.07191)(95.01,2.02945)(95.3433,2.04774)(95.6767,2.01354)(96.01,2.01921)(96.3433,2.01343)(96.6767,1.99023)(97.01,2.02551)(97.3433,2.00837)(97.6767,2.04235)(98.01,2.04468)(98.3433,2.02942)(98.6767,2.0334)(99.01,2.05872)(99.3433,2.05586)(99.6767,2.08738)(100.01,2.09116)(100.343,2.101)(100.677,2.11693)(101.01,2.10639)(101.343,2.12724)(101.677,2.1135)(102.01,2.06598)(102.343,2.0363)(102.677,2.00493)(103.01,1.96068)(103.343,1.93929)(103.677,1.89397)(104.01,1.86999)(104.343,1.84339)(104.677,1.81444)(105.01,1.82018)(105.343,1.80921)(105.677,1.78793)(106.01,1.77486)(106.343,1.75056)(106.677,1.73828)(107.01,1.72557)(107.343,1.7085)(107.677,1.70421)(108.01,1.68675)(108.343,1.68213)(108.677,1.68845)(109.01,1.70492)(109.343,1.72259)(109.677,1.73502)(110.01,1.75104)(110.343,1.75964)(110.677,1.76447)(111.01,1.76781)(111.343,1.76058)(111.677,1.74512)(112.01,1.70963)(112.343,1.67238)(112.677,1.63773)(113.01,1.59287)(113.343,1.56049)(113.677,1.52154)(114.01,1.48755)(114.343,1.46499)(114.677,1.44087)(115.01,1.44135)(115.343,1.44533)(115.677,1.44679)(116.01,1.46258)(116.343,1.46789)(116.677,1.48373)(117.01,1.49734)(117.343,1.50396)(117.677,1.51832)(118.01,1.51069)(118.343,1.50369)(118.677,1.49842)(119.01,1.481)(119.343,1.47497)(119.677,1.45793)(120.01,1.44218)(120.343,1.43128)(120.677,1.41096)(121.01,1.40685)(121.343,1.40025)(121.677,1.39291)(122.01,1.39043)(122.343,1.37705)(122.677,1.37245)(123.01,1.36165)(123.343,1.34759)(123.677,1.34043)(124.01,1.3183)(124.343,1.29843)(124.677,1.27763)(125.01,1.25523)(125.343,1.24158)(125.677,1.22376)(126.01,1.21489)(126.343,1.20815)(126.677,1.20192)(127.01,1.20992)(127.343,1.21819)(127.677,1.2338)(128.01,1.2511)(128.343,1.26429)(128.677,1.28239)(129.01,1.29309)(129.343,1.30317)(129.677,1.31033)(130.01,1.30669)(130.343,1.29843)(130.677,1.27978)(131.01,1.25911)(131.343,1.23456)(131.677,1.20351)(132.01,1.17589)(132.343,1.14304)(132.677,1.11126)(133.01,1.08518)(133.343,1.06281)(133.677,1.04857)(134.01,1.03728)(134.343,1.03329)(134.677,1.03485)(135.01,1.03956)(135.343,1.05367)(135.677,1.06959)(136.01,1.08815)(136.343,1.10654)(136.677,1.12172)(137.01,1.13741)(137.343,1.14802)(137.677,1.15531)(138.01,1.15807)(138.343,1.15286)(138.677,1.14236)(139.01,1.12554)(139.343,1.10767)(139.677,1.08648)(140.01,1.0621)(140.343,1.03973)(140.677,1.01564)(141.01,0.993785)(141.343,0.976807)(141.677,0.964201)(142.01,0.955406)(142.343,0.948714)(142.677,0.947069)(143.01,0.947136)(143.343,0.949452)(143.677,0.956126)(144.01,0.962128)(144.343,0.968015)(144.677,0.972451)(145.01,0.976101)(145.343,0.979279)(145.677,0.980442)(146.01,0.981099)(146.343,0.978832)(146.677,0.974308)(147.01,0.968312)(147.343,0.961369)(147.677,0.955538)(148.01,0.947937)(148.343,0.940253)(148.677,0.933263)(149.01,0.925631)(149.343,0.919567)(149.677,0.915281)(150.01,0.912368)(150.343,0.908023)(150.677,0.903845)(151.01,0.900524)(151.343,0.895082)(151.677,0.890075)(152.01,0.884931)(152.343,0.877338)(152.677,0.867915)(153.01,0.858354)(153.343,0.849366)(153.677,0.839744)(154.01,0.831969)(154.343,0.82547)(154.677,0.81882)(155.01,0.814678)(155.343,0.813039)(155.677,0.814054)(156.01,0.816848)(156.343,0.821285)(156.677,0.826979)(157.01,0.832954)(157.343,0.840219)(157.677,0.84763)(158.01,0.854702)(158.343,0.859792)(158.677,0.861861)(159.01,0.862051)(159.343,0.858984)(159.677,0.852671)(160.01,0.843591)(160.343,0.83114)(160.677,0.814875)(161.01,0.796376)(161.343,0.778222)(161.677,0.759186)(162.01,0.740849)(162.343,0.725211)(162.677,0.711041)(163.01,0.699604)(163.343,0.693011)(163.677,0.691309)(164.01,0.692521)(164.343,0.697854)(164.677,0.707168)(165.01,0.717937)(165.343,0.731456)(165.677,0.747281)(166.01,0.76286)(166.343,0.776831)(166.677,0.789227)
};
\addplot [
draw=blue,thin
] coordinates{(0.01,0.0000222222)(0.0933333,0.00193539)(0.176667,0.00693046)(0.26,0.0149972)(0.343333,0.0261189)(0.426667,0.0402728)(0.51,0.0574299)(0.593333,0.0775547)(0.676667,0.100606)(0.76,0.126537)(0.843333,0.155293)(0.926667,0.186817)(1.01,0.221043)(1.09333,0.257902)(1.17667,0.297319)(1.26,0.339212)(1.34333,0.383496)(1.42667,0.430082)(1.51,0.478875)(1.59333,0.529776)(1.67667,0.582681)(1.76,0.637484)(1.84333,0.694075)(1.92667,0.752338)(2.01,0.812158)(2.09333,0.873415)(2.17667,0.935986)(2.26,0.999746)(2.34333,1.06457)(2.42667,1.13033)(2.51,1.19689)(2.59333,1.26412)(2.67667,1.3319)(2.76,1.40009)(2.84333,1.46855)(2.92667,1.53716)(3.01,1.60578)(3.09333,1.67428)(3.17667,1.74254)(3.26,1.81041)(3.34333,1.87778)(3.42667,1.94451)(3.51,2.01049)(3.59333,2.07559)(3.67667,2.13969)(3.76,2.20267)(3.84333,2.26443)(3.92667,2.32486)(4.01,2.38384)(4.09333,2.44127)(4.17667,2.49706)(4.26,2.55111)(4.34333,2.60334)(4.42667,2.65365)(4.51,2.70197)(4.59333,2.74822)(4.67667,2.79233)(4.76,2.83424)(4.84333,2.87388)(4.92667,2.9112)(5.01,2.94615)(5.09333,2.97869)(5.17667,3.00878)(5.26,3.03639)(5.34333,3.06149)(5.42667,3.08405)(5.51,3.10408)(5.59333,3.12155)(5.67667,3.13646)(5.76,3.14881)(5.84333,3.15862)(5.92667,3.16589)(6.01,3.17065)(6.09333,3.17292)(6.17667,3.17273)(6.26,3.17011)(6.34333,3.1651)(6.42667,3.15776)(6.51,3.14812)(6.59333,3.13625)(6.67667,3.1222)(6.76,3.10604)(6.84333,3.08783)(6.92667,3.06765)(7.01,3.04557)(7.09333,3.02167)(7.17667,2.99603)(7.26,2.96875)(7.34333,2.9399)(7.42667,2.90958)(7.51,2.87788)(7.59333,2.8449)(7.67667,2.81073)(7.76,2.77548)(7.84333,2.73924)(7.92667,2.70212)(8.01,2.66422)(8.09333,2.62564)(8.17667,2.58649)(8.26,2.54687)(8.34333,2.50688)(8.42667,2.46663)(8.51,2.42622)(8.59333,2.38575)(8.67667,2.34533)(8.76,2.30504)(8.84333,2.265)(8.92667,2.22529)(9.01,2.18602)(9.09333,2.14726)(9.17667,2.10911)(9.26,2.07167)(9.34333,2.035)(9.42667,1.9992)(9.51,1.96433)(9.59333,1.93048)(9.67667,1.89772)(9.76,1.8661)(9.84333,1.8357)(9.92667,1.80658)(10.01,1.77878)(10.0933,1.75237)(10.1767,1.72738)(10.26,1.70386)(10.3433,1.68185)(10.4267,1.66138)(10.51,1.64248)(10.5933,1.62519)(10.6767,1.6095)(10.76,1.59545)(10.8433,1.58305)(10.9267,1.57229)(11.01,1.5632)(11.0933,1.55575)(11.1767,1.54995)(11.26,1.54578)(11.3433,1.54324)(11.4267,1.54229)(11.51,1.54293)(11.5933,1.54511)(11.6767,1.54882)(11.76,1.55401)(11.8433,1.56065)(11.9267,1.5687)(12.01,1.57811)(12.0933,1.58884)(12.1767,1.60083)(12.26,1.61402)(12.3433,1.62837)(12.4267,1.64382)(12.51,1.66029)(12.5933,1.67774)(12.6767,1.69609)(12.76,1.71528)(12.8433,1.73524)(12.9267,1.75589)(13.01,1.77718)(13.0933,1.79902)(13.1767,1.82135)(13.26,1.84409)(13.3433,1.86717)(13.4267,1.89051)(13.51,1.91405)(13.5933,1.9377)(13.6767,1.9614)(13.76,1.98508)(13.8433,2.00867)(13.9267,2.03209)(14.01,2.05528)(14.0933,2.07817)(14.1767,2.10069)(14.26,2.12279)(14.3433,2.14439)(14.4267,2.16545)(14.51,2.18591)(14.5933,2.2057)(14.6767,2.22478)(14.76,2.24309)(14.8433,2.2606)(14.9267,2.27725)(15.01,2.29301)(15.0933,2.30784)(15.1767,2.3217)(15.26,2.33456)(15.3433,2.34639)(15.4267,2.35717)(15.51,2.36687)(15.5933,2.37548)(15.6767,2.38298)(15.76,2.38936)(15.8433,2.39461)(15.9267,2.39873)(16.01,2.40171)(16.0933,2.40355)(16.1767,2.40427)(16.26,2.40386)(16.3433,2.40235)(16.4267,2.39973)(16.51,2.39605)(16.5933,2.3913)(16.6767,2.38552)(16.76,2.37873)(16.8433,2.37096)(16.9267,2.36225)(17.01,2.35262)(17.0933,2.34212)(17.1767,2.33077)(17.26,2.31864)(17.3433,2.30574)(17.4267,2.29214)(17.51,2.27787)(17.5933,2.26299)(17.6767,2.24754)(17.76,2.23158)(17.8433,2.21515)(17.9267,2.19831)(18.01,2.18111)(18.0933,2.16361)(18.1767,2.14586)(18.26,2.12791)(18.3433,2.10981)(18.4267,2.09163)(18.51,2.07342)(18.5933,2.05523)(18.6767,2.03712)(18.76,2.01913)(18.8433,2.00132)(18.9267,1.98374)(19.01,1.96645)(19.0933,1.94948)(19.1767,1.93289)(19.26,1.91672)(19.3433,1.90102)(19.4267,1.88582)(19.51,1.87118)(19.5933,1.85713)(19.6767,1.8437)(19.76,1.83093)(19.8433,1.81885)(19.9267,1.80749)(20.01,1.79688)(20.0933,1.78705)(20.1767,1.77801)(20.26,1.76979)(20.3433,1.7624)(20.4267,1.75587)(20.51,1.75019)(20.5933,1.74539)(20.6767,1.74146)(20.76,1.73842)(20.8433,1.73626)(20.9267,1.73498)(21.01,1.73459)(21.0933,1.73506)(21.1767,1.7364)(21.26,1.7386)(21.3433,1.74163)(21.4267,1.74549)(21.51,1.75015)(21.5933,1.75559)(21.6767,1.7618)(21.76,1.76875)(21.8433,1.7764)(21.9267,1.78473)(22.01,1.79372)(22.0933,1.80332)(22.1767,1.81351)(22.26,1.82424)(22.3433,1.83548)(22.4267,1.8472)(22.51,1.85935)(22.5933,1.8719)(22.6767,1.8848)(22.76,1.898)(22.8433,1.91148)(22.9267,1.92518)(23.01,1.93906)(23.0933,1.95308)(23.1767,1.96719)(23.26,1.98136)(23.3433,1.99553)(23.4267,2.00966)(23.51,2.02372)(23.5933,2.03765)(23.6767,2.05142)(23.76,2.06499)(23.8433,2.07831)(23.9267,2.09135)(24.01,2.10407)(24.0933,2.11643)(24.1767,2.1284)(24.26,2.13994)(24.3433,2.15103)(24.4267,2.16162)(24.51,2.17169)(24.5933,2.18122)(24.6767,2.19018)(24.76,2.19854)(24.8433,2.20628)(24.9267,2.21339)(25.01,2.21984)(25.0933,2.22562)(25.1767,2.23072)(25.26,2.23513)(25.3433,2.23883)(25.4267,2.24182)(25.51,2.2441)(25.5933,2.24565)(25.6767,2.24649)(25.76,2.24662)(25.8433,2.24603)(25.9267,2.24473)(26.01,2.24274)(26.0933,2.24006)(26.1767,2.2367)(26.26,2.23268)(26.3433,2.22801)(26.4267,2.22272)(26.51,2.21682)(26.5933,2.21034)(26.6767,2.20329)(26.76,2.1957)(26.8433,2.1876)(26.9267,2.17901)(27.01,2.16997)(27.0933,2.1605)(27.1767,2.15064)(27.26,2.14042)(27.3433,2.12986)(27.4267,2.11901)(27.51,2.1079)(27.5933,2.09656)(27.6767,2.08503)(27.76,2.07335)(27.8433,2.06155)(27.9267,2.04966)(28.01,2.03773)(28.0933,2.02579)(28.1767,2.01388)(28.26,2.00203)(28.3433,1.99028)(28.4267,1.97866)(28.51,1.96721)(28.5933,1.95596)(28.6767,1.94494)(28.76,1.93419)(28.8433,1.92374)(28.9267,1.91361)(29.01,1.90384)(29.0933,1.89445)(29.1767,1.88548)(29.26,1.87693)(29.3433,1.86885)(29.4267,1.86124)(29.51,1.85413)(29.5933,1.84754)(29.6767,1.84149)(29.76,1.83599)(29.8433,1.83105)(29.9267,1.82669)(30.01,1.82291)
};
\addplot [
draw=blue,thin,dashed
] coordinates{(0.01,0.0000444443)(0.0933333,0.00385577)(0.176667,0.0135716)(0.26,0.0287064)(0.343333,0.0488399)(0.426667,0.0736477)(0.51,0.102871)(0.593333,0.136293)(0.676667,0.173723)(0.76,0.21499)(0.843333,0.259933)(0.926667,0.308396)(1.01,0.36023)(1.09333,0.415287)(1.17667,0.47342)(1.26,0.534481)(1.34333,0.598322)(1.42667,0.664794)(1.51,0.733745)(1.59333,0.805023)(1.67667,0.878474)(1.76,0.95394)(1.84333,1.03126)(1.92667,1.11029)(2.01,1.19085)(2.09333,1.27278)(2.17667,1.35592)(2.26,1.44011)(2.34333,1.52518)(2.42667,1.61096)(2.51,1.6973)(2.59333,1.78401)(2.67667,1.87095)(2.76,1.95794)(2.84333,2.04482)(2.92667,2.13144)(3.01,2.21762)(3.09333,2.30322)(3.17667,2.38807)(3.26,2.47202)(3.34333,2.55493)(3.42667,2.63664)(3.51,2.71702)(3.59333,2.79592)(3.67667,2.8732)(3.76,2.94874)(3.84333,3.02241)(3.92667,3.09408)(4.01,3.16364)(4.09333,3.23097)(4.17667,3.29598)(4.26,3.35855)(4.34333,3.4186)(4.42667,3.47602)(4.51,3.53075)(4.59333,3.5827)(4.67667,3.6318)(4.76,3.67799)(4.84333,3.72121)(4.92667,3.76141)(5.01,3.79855)(5.09333,3.83258)(5.17667,3.86348)(5.26,3.89122)(5.34333,3.91579)(5.42667,3.93718)(5.51,3.95538)(5.59333,3.97039)(5.67667,3.98224)(5.76,3.99093)(5.84333,3.99649)(5.92667,3.99894)(6.01,3.99833)(6.09333,3.9947)(6.17667,3.9881)(6.26,3.97857)(6.34333,3.96619)(6.42667,3.95102)(6.51,3.93312)(6.59333,3.91258)(6.67667,3.88947)(6.76,3.86389)(6.84333,3.83592)(6.92667,3.80566)(7.01,3.77321)(7.09333,3.73867)(7.17667,3.70215)(7.26,3.66375)(7.34333,3.62359)(7.42667,3.58179)(7.51,3.53845)(7.59333,3.49371)(7.67667,3.44767)(7.76,3.40047)(7.84333,3.35223)(7.92667,3.30307)(8.01,3.25311)(8.09333,3.20249)(8.17667,3.15133)(8.26,3.09974)(8.34333,3.04787)(8.42667,2.99582)(8.51,2.94373)(8.59333,2.89172)(8.67667,2.83989)(8.76,2.78837)(8.84333,2.73728)(8.92667,2.68673)(9.01,2.63682)(9.09333,2.58766)(9.17667,2.53935)(9.26,2.49201)(9.34333,2.44571)(9.42667,2.40056)(9.51,2.35663)(9.59333,2.31403)(9.67667,2.27281)(9.76,2.23307)(9.84333,2.19486)(9.92667,2.15827)(10.01,2.12333)(10.0933,2.09012)(10.1767,2.05868)(10.26,2.02906)(10.3433,2.0013)(10.4267,1.97543)(10.51,1.95149)(10.5933,1.92949)(10.6767,1.90946)(10.76,1.8914)(10.8433,1.87534)(10.9267,1.86127)(11.01,1.84919)(11.0933,1.83909)(11.1767,1.83096)(11.26,1.82479)(11.3433,1.82055)(11.4267,1.81821)(11.51,1.81775)(11.5933,1.81913)(11.6767,1.82231)(11.76,1.82725)(11.8433,1.83389)(11.9267,1.84218)(12.01,1.85208)(12.0933,1.86351)(12.1767,1.87643)(12.26,1.89075)(12.3433,1.90642)(12.4267,1.92335)(12.51,1.94149)(12.5933,1.96075)(12.6767,1.98105)(12.76,2.00232)(12.8433,2.02447)(12.9267,2.04742)(13.01,2.07109)(13.0933,2.09539)(13.1767,2.12024)(13.26,2.14555)(13.3433,2.17124)(13.4267,2.19722)(13.51,2.2234)(13.5933,2.24971)(13.6767,2.27606)(13.76,2.30236)(13.8433,2.32853)(13.9267,2.3545)(14.01,2.38018)(14.0933,2.4055)(14.1767,2.43038)(14.26,2.45475)(14.3433,2.47854)(14.4267,2.50169)(14.51,2.52412)(14.5933,2.54578)(14.6767,2.56661)(14.76,2.58655)(14.8433,2.60554)(14.9267,2.62355)(15.01,2.64051)(15.0933,2.6564)(15.1767,2.67117)(15.26,2.68479)(15.3433,2.69722)(15.4267,2.70844)(15.51,2.71842)(15.5933,2.72715)(15.6767,2.73461)(15.76,2.74078)(15.8433,2.74566)(15.9267,2.74925)(16.01,2.75154)(16.0933,2.75253)(16.1767,2.75225)(16.26,2.75068)(16.3433,2.74786)(16.4267,2.74379)(16.51,2.7385)(16.5933,2.73202)(16.6767,2.72436)(16.76,2.71557)(16.8433,2.70568)(16.9267,2.69472)(17.01,2.68273)(17.0933,2.66977)(17.1767,2.65586)(17.26,2.64106)(17.3433,2.62542)(17.4267,2.60899)(17.51,2.59183)(17.5933,2.57398)(17.6767,2.55551)(17.76,2.53647)(17.8433,2.51692)(17.9267,2.49692)(18.01,2.47654)(18.0933,2.45583)(18.1767,2.43485)(18.26,2.41367)(18.3433,2.39235)(18.4267,2.37095)(18.51,2.34954)(18.5933,2.32816)(18.6767,2.30689)(18.76,2.28578)(18.8433,2.2649)(18.9267,2.24429)(19.01,2.22402)(19.0933,2.20414)(19.1767,2.1847)(19.26,2.16575)(19.3433,2.14734)(19.4267,2.12953)(19.51,2.11235)(19.5933,2.09585)(19.6767,2.08007)(19.76,2.06505)(19.8433,2.05082)(19.9267,2.03741)(20.01,2.02486)(20.0933,2.01319)(20.1767,2.00243)(20.26,1.9926)(20.3433,1.98372)(20.4267,1.9758)(20.51,1.96886)(20.5933,1.9629)(20.6767,1.95794)(20.76,1.95398)(20.8433,1.95102)(20.9267,1.94905)(21.01,1.94807)(21.0933,1.94808)(21.1767,1.94906)(21.26,1.951)(21.3433,1.95389)(21.4267,1.9577)(21.51,1.96241)(21.5933,1.96801)(21.6767,1.97445)(21.76,1.98172)(21.8433,1.98979)(21.9267,1.99861)(22.01,2.00817)(22.0933,2.01841)(22.1767,2.0293)(22.26,2.0408)(22.3433,2.05287)(22.4267,2.06547)(22.51,2.07855)(22.5933,2.09206)(22.6767,2.10596)(22.76,2.12021)(22.8433,2.13475)(22.9267,2.14954)(23.01,2.16453)(23.0933,2.17966)(23.1767,2.1949)(23.26,2.21019)(23.3433,2.22548)(23.4267,2.24073)(23.51,2.25588)(23.5933,2.2709)(23.6767,2.28573)(23.76,2.30032)(23.8433,2.31464)(23.9267,2.32865)(24.01,2.34229)(24.0933,2.35553)(24.1767,2.36833)(24.26,2.38065)(24.3433,2.39246)(24.4267,2.40372)(24.51,2.4144)(24.5933,2.42448)(24.6767,2.43391)(24.76,2.44269)(24.8433,2.45077)(24.9267,2.45815)(25.01,2.4648)(25.0933,2.47071)(25.1767,2.47586)(25.26,2.48024)(25.3433,2.48384)(25.4267,2.48665)(25.51,2.48867)(25.5933,2.4899)(25.6767,2.49033)(25.76,2.48997)(25.8433,2.48882)(25.9267,2.4869)(26.01,2.4842)(26.0933,2.48074)(26.1767,2.47654)(26.26,2.47162)(26.3433,2.46598)(26.4267,2.45965)(26.51,2.45265)(26.5933,2.44501)(26.6767,2.43675)(26.76,2.4279)(26.8433,2.41849)(26.9267,2.40855)(27.01,2.39811)(27.0933,2.38721)(27.1767,2.37588)(27.26,2.36415)(27.3433,2.35206)(27.4267,2.33966)(27.51,2.32697)(27.5933,2.31403)(27.6767,2.3009)(27.76,2.2876)(27.8433,2.27417)(27.9267,2.26067)(28.01,2.24711)(28.0933,2.23356)(28.1767,2.22004)(28.26,2.20659)(28.3433,2.19326)(28.4267,2.18009)(28.51,2.1671)(28.5933,2.15434)(28.6767,2.14184)(28.76,2.12965)(28.8433,2.11778)(28.9267,2.10628)(29.01,2.09518)(29.0933,2.08451)(29.1767,2.07428)(29.26,2.06455)(29.3433,2.05532)(29.4267,2.04662)(29.51,2.03847)(29.5933,2.0309)(29.6767,2.02392)(29.76,2.01755)(29.8433,2.0118)(29.9267,2.00669)(30.01,2.00223)
};
\addplot [
draw=red,ultra thick
] coordinates{(0.01,0.00000000239483)(0.0933333,0.0000180477)(0.176667,0.00022881)(0.26,0.00105199)(0.343333,0.00311146)(0.426667,0.00716484)(0.51,0.0140193)(0.593333,0.0244438)(0.676667,0.0390886)(0.76,0.058422)(0.843333,0.0826904)(0.926667,0.111908)(1.01,0.145874)(1.09333,0.184216)(1.17667,0.226454)(1.26,0.272069)(1.34333,0.320579)(1.42667,0.371602)(1.51,0.4249)(1.59333,0.480405)(1.67667,0.53821)(1.76,0.598548)(1.84333,0.661738)(1.92667,0.728128)(2.01,0.798026)(2.09333,0.871641)(2.17667,0.949033)(2.26,1.03009)(2.34333,1.1145)(2.42667,1.20181)(2.51,1.29141)(2.59333,1.38264)(2.67667,1.47483)(2.76,1.56733)(2.84333,1.65961)(2.92667,1.7513)(3.01,1.84217)(3.09333,1.93214)(3.17667,2.02129)(3.26,2.10977)(3.34333,2.19778)(3.42667,2.28553)(3.51,2.37313)(3.59333,2.46061)(3.67667,2.54788)(3.76,2.6347)(3.84333,2.72073)(3.92667,2.80555)(4.01,2.88865)(4.09333,2.96959)(4.17667,3.04791)(4.26,3.12328)(4.34333,3.19547)(4.42667,3.26438)(4.51,3.33001)(4.59333,3.39247)(4.67667,3.45192)(4.76,3.50857)(4.84333,3.56257)(4.92667,3.61405)(5.01,3.66304)(5.09333,3.70948)(5.17667,3.75322)(5.26,3.79404)(5.34333,3.83166)(5.42667,3.8658)(5.51,3.89619)(5.59333,3.92264)(5.67667,3.94502)(5.76,3.9633)(5.84333,3.97757)(5.92667,3.98796)(6.01,3.99472)(6.09333,3.99808)(6.17667,3.99831)(6.26,3.99561)(6.34333,3.99017)(6.42667,3.98207)(6.51,3.97132)(6.59333,3.95788)(6.67667,3.94167)(6.76,3.92255)(6.84333,3.90043)(6.92667,3.87523)(7.01,3.84694)(7.09333,3.81563)(7.17667,3.78144)(7.26,3.74459)(7.34333,3.70534)(7.42667,3.66399)(7.51,3.62086)(7.59333,3.57624)(7.67667,3.53036)(7.76,3.4834)(7.84333,3.43549)(7.92667,3.38668)(8.01,3.33697)(8.09333,3.28634)(8.17667,3.23478)(8.26,3.18226)(8.34333,3.12882)(8.42667,3.07456)(8.51,3.01962)(8.59333,2.96422)(8.67667,2.90861)(8.76,2.85307)(8.84333,2.7979)(8.92667,2.74337)(9.01,2.68971)(9.09333,2.6371)(9.17667,2.58566)(9.26,2.53546)(9.34333,2.4865)(9.42667,2.43875)(9.51,2.39219)(9.59333,2.34678)(9.67667,2.30251)(9.76,2.25942)(9.84333,2.2176)(9.92667,2.17718)(10.01,2.13832)(10.0933,2.10123)(10.1767,2.0661)(10.26,2.03313)(10.3433,2.00246)(10.4267,1.97421)(10.51,1.94841)(10.5933,1.92507)(10.6767,1.90413)(10.76,1.8855)(10.8433,1.86907)(10.9267,1.85474)(11.01,1.8424)(11.0933,1.832)(11.1767,1.82351)(11.26,1.81697)(11.3433,1.81243)(11.4267,1.80996)(11.51,1.80966)(11.5933,1.8116)(11.6767,1.81585)(11.76,1.82241)(11.8433,1.83124)(11.9267,1.84228)(12.01,1.85538)(12.0933,1.87039)(12.1767,1.88713)(12.26,1.90541)(12.3433,1.92507)(12.4267,1.94596)(12.51,1.96799)(12.5933,1.99109)(12.6767,2.01524)(12.76,2.04043)(12.8433,2.06667)(12.9267,2.09398)(13.01,2.12235)(13.0933,2.15174)(13.1767,2.18208)(13.26,2.21325)(13.3433,2.2451)(13.4267,2.27744)(13.51,2.31008)(13.5933,2.34281)(13.6767,2.37546)(13.76,2.40785)(13.8433,2.43988)(13.9267,2.47145)(14.01,2.50253)(14.0933,2.5331)(14.1767,2.56316)(14.26,2.59273)(14.3433,2.6218)(14.4267,2.65036)(14.51,2.67835)(14.5933,2.70568)(14.6767,2.73225)(14.76,2.75792)(14.8433,2.78252)(14.9267,2.80592)(15.01,2.82796)(15.0933,2.84854)(15.1767,2.86758)(15.26,2.88503)(15.3433,2.90091)(15.4267,2.91524)(15.51,2.92807)(15.5933,2.93946)(15.6767,2.94948)(15.76,2.95817)(15.8433,2.96555)(15.9267,2.9716)(16.01,2.97629)(16.0933,2.97956)(16.1767,2.98134)(16.26,2.98155)(16.3433,2.98013)(16.4267,2.97703)(16.51,2.97226)(16.5933,2.96583)(16.6767,2.9578)(16.76,2.94828)(16.8433,2.93737)(16.9267,2.9252)(17.01,2.91189)(17.0933,2.89754)(17.1767,2.88225)(17.26,2.86607)(17.3433,2.84904)(17.4267,2.83115)(17.51,2.81241)(17.5933,2.7928)(17.6767,2.77231)(17.76,2.75096)(17.8433,2.72877)(17.9267,2.70581)(18.01,2.68218)(18.0933,2.658)(18.1767,2.63342)(18.26,2.60857)(18.3433,2.5836)(18.4267,2.55866)(18.51,2.53384)(18.5933,2.50924)(18.6767,2.4849)(18.76,2.46085)(18.8433,2.4371)(18.9267,2.41365)(19.01,2.39048)(19.0933,2.36762)(19.1767,2.34509)(19.26,2.32294)(19.3433,2.30126)(19.4267,2.28013)(19.51,2.25967)(19.5933,2.24001)(19.6767,2.22127)(19.76,2.20355)(19.8433,2.18693)(19.9267,2.17147)(20.01,2.15718)(20.0933,2.14408)(20.1767,2.13212)(20.26,2.12127)(20.3433,2.11149)(20.4267,2.10275)(20.51,2.09501)(20.5933,2.08829)(20.6767,2.0826)(20.76,2.07798)(20.8433,2.07449)(20.9267,2.0722)(21.01,2.07116)(21.0933,2.07142)(21.1767,2.07301)(21.26,2.07593)(21.3433,2.08015)(21.4267,2.08562)(21.51,2.09227)(21.5933,2.09999)(21.6767,2.10871)(21.76,2.11833)(21.8433,2.12877)(21.9267,2.13998)(22.01,2.15193)(22.0933,2.16459)(22.1767,2.17797)(22.26,2.19209)(22.3433,2.20695)(22.4267,2.22256)(22.51,2.23891)(22.5933,2.25595)(22.6767,2.27362)(22.76,2.29185)(22.8433,2.31053)(22.9267,2.32955)(23.01,2.34878)(23.0933,2.36811)(23.1767,2.38744)(23.26,2.4067)(23.3433,2.42583)(23.4267,2.44478)(23.51,2.46355)(23.5933,2.48214)(23.6767,2.50054)(23.76,2.51875)(23.8433,2.53677)(23.9267,2.55454)(24.01,2.57203)(24.0933,2.58917)(24.1767,2.60585)(24.26,2.62199)(24.3433,2.63749)(24.4267,2.65224)(24.51,2.66617)(24.5933,2.67922)(24.6767,2.69135)(24.76,2.70256)(24.8433,2.71284)(24.9267,2.72222)(25.01,2.73074)(25.0933,2.73842)(25.1767,2.74528)(25.26,2.75133)(25.3433,2.75657)(25.4267,2.76097)(25.51,2.76447)(25.5933,2.76704)(25.6767,2.76862)(25.76,2.76915)(25.8433,2.76861)(25.9267,2.76697)(26.01,2.76425)(26.0933,2.76046)(26.1767,2.75567)(26.26,2.74994)(26.3433,2.74333)(26.4267,2.73593)(26.51,2.72781)(26.5933,2.71901)(26.6767,2.70957)(26.76,2.69952)(26.8433,2.68886)(26.9267,2.67757)(27.01,2.66566)(27.0933,2.65311)(27.1767,2.63991)(27.26,2.6261)(27.3433,2.61171)(27.4267,2.59679)(27.51,2.58142)(27.5933,2.5657)(27.6767,2.54971)(27.76,2.53356)(27.8433,2.51733)(27.9267,2.50111)(28.01,2.48495)(28.0933,2.46889)(28.1767,2.45295)(28.26,2.43714)(28.3433,2.42147)(28.4267,2.40592)(28.51,2.39051)(28.5933,2.37526)(28.6767,2.3602)(28.76,2.34538)(28.8433,2.33087)(28.9267,2.31675)(29.01,2.30311)(29.0933,2.29003)(29.1767,2.27758)(29.26,2.26583)(29.3433,2.25482)(29.4267,2.24458)(29.51,2.23511)(29.5933,2.22639)(29.6767,2.21841)(29.76,2.21114)(29.8433,2.20456)(29.9267,2.19865)(30.01,2.19343)
 };

\legend{Truncated Mie series,Van de Hulst approximation,Evans--Fournier approximation,Semigroup asymptotics};

\node[fill=white,anchor=south,draw=black] at
(axis cs:27.5,.25) {\tiny$ n=4/3$};
\end{axis}
\end{tikzpicture}
\vspace{-.75em}
\begin{center}\begin{tiny}\quad\,(b)\end{tiny}\end{center}\end{minipage}\begin{minipage}{.25\textwidth}\caption{Comparison of various  approximate solutions to the total scattering cross-section of    \emph{(a)}~a glass bead $n=3/2 $  and \emph{(b)}~a water droplet $n=4/3 $.  Following  \cite[p.~177, Fig.~32]{vandeHulst}, the horizontal axes are lined up so that the values of relative phase shift $2(n-1)kR $ match in both panels. For numerical computations, the Mie series \eqref{eq:Mie_sc_series} are truncated after the first  100 terms. \label{fig:5-2}}\end{minipage}\end{figure}
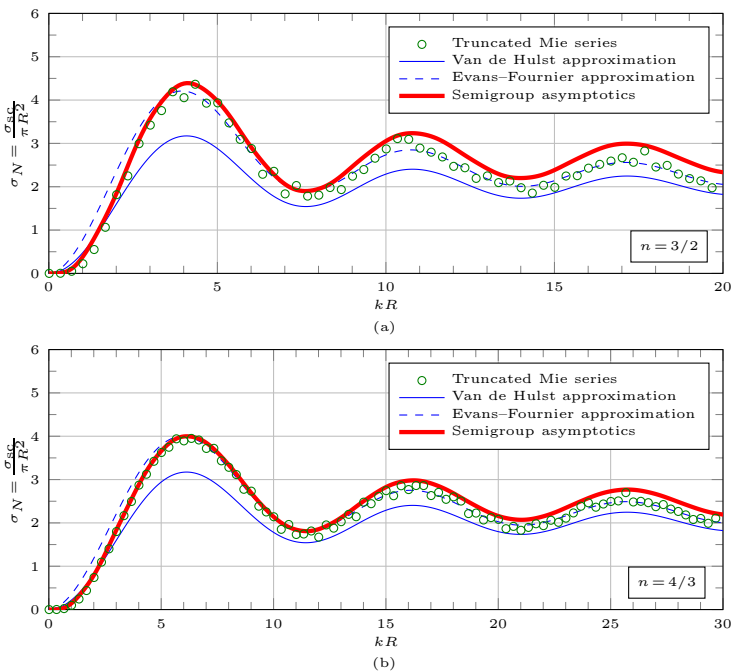
 Admittedly, our semigroup approach is not the first reported attempt to amend the deficiency of Born approximation in the non-perturbative regime. In \cite[\S11.22]{vandeHulst}, H. C. van de Hulst proposed an asymptotic formula \begin{align}\frac{\sigma_\text{sc}}{\pi R^2}\sim 2-\frac{4}{\rho}\sin\rho+\frac{4}{\rho^2}(1-\cos\rho),\quad \rho=2(n-1)kR\end{align}based on scalar wave approximations of anomalous diffraction. As we can check in the non-perturbative refractive indices for glass ($n=3/2$) and water ($n=4/3$) (see Figure~\ref{fig:5-2}), the van de Hulst approximation gives a less accurate approximation than the semigroup approach in the regime of  $0\leq2(n-1)kR\leq 20 $. A possible explanation is that the transversality constraint is honored in the semigroup asymptotic analysis,\footnote{We note that transversality is essential to our semigroup analysis in at least two ways. In theory, the strong stability $ \lim_{\tau\to+\infty}\Vert\exp(-i\tau\hat{\mathscr G})\bm F\Vert_{L^2(V;\mathbb C^3)}=0,\forall \bm F\in\Phi(V;\mathbb C^3)$ has to draw on the discreteness of the physical spectrum $ \sigma^{\Phi}(\hat{\mathscr G})$, while there is no warrant for discreteness in the non-physical spectrum  $ \sigma(\hat{\mathscr G})$ for $ \hat{\mathscr G}:L^2(V;\mathbb C^3)\longrightarrow L^2(V;\mathbb C^3)$ without the transversality constraint $ \nabla\cdot\bm F=\mathbf 0$ \cite{QualEM}. In practice, our approximate formula \eqref{eq:bulkapprox} respects the transversality condition and the long-term behavior of $\exp(-i\tau\hat{\mathscr G})$.} but is absent from the scalar wave approximation \cite{vandeHulst}. As we formerly proved in \cite[\S2]{QualEM}, ignoring the transversality constraint may cause non-robust solutions to the light scattering problem.

While our heuristics in \S\ref{subsec:nonpert_approx} did not, \textit{per se}, accommodate to very large values of $kR$, its output in \S\ref{subsec:nonpert_Mie}  performed  (as seen in Figure~\ref{fig:5-2}) only slightly worse  than the Evans--Fournier approximation \cite{Evans1990} in the  large $ k R$ regime. Here, the Evans--Fournier approximation for $ 1.01\leq n\leq 2.00$ is the van de Hulst approximation times an empirical factor of $ 2-\exp(-(kR)^{-2/3})$, which has been  designed to fit the asymptotic behavior of Mie scattering cross-section  as $ kR\to+\infty$.

Undoubtedly, the numerical accuracy of  our semigroup-based approximation does deteriorate for  either  large $  kR$   (Figure~\ref{fig:5-2})  or  large $n-1$ (Figure~\ref{fig:5-3}). This probably represents an  intrinsic limitation in  our approximate treatment of the Schr\"odinger semigroup associated with the light scattering problem, and the resulting ``quasistatic response'' approximation \eqref{eq:quasistat_resp} at the dielectric boundary. Perhaps a better understanding of the boundary corrections to the free-space propagator \eqref{eq:freeprop} is required to improve the accuracy of the semigroup approach for  large values of phase shifts $2(n-1)kR $.

We also note that A. Ya.\ Perelman \cite{Perelman1991} has derived non-perturbative formulae for Mie scattering that took  a functional form similar to \eqref{eq:nonperturb} and exhibited similar level of approximation accuracy. However, Perelman's approach was based on an asymptotic summation of the Mie series that draws heavily on some identities involving special functions. We take leave to think that the semigroup approach provides a clearer physical picture and is adaptable to more complicated scattering geometry where  analytic solutions (analogs of Mie series) are unavailable.

 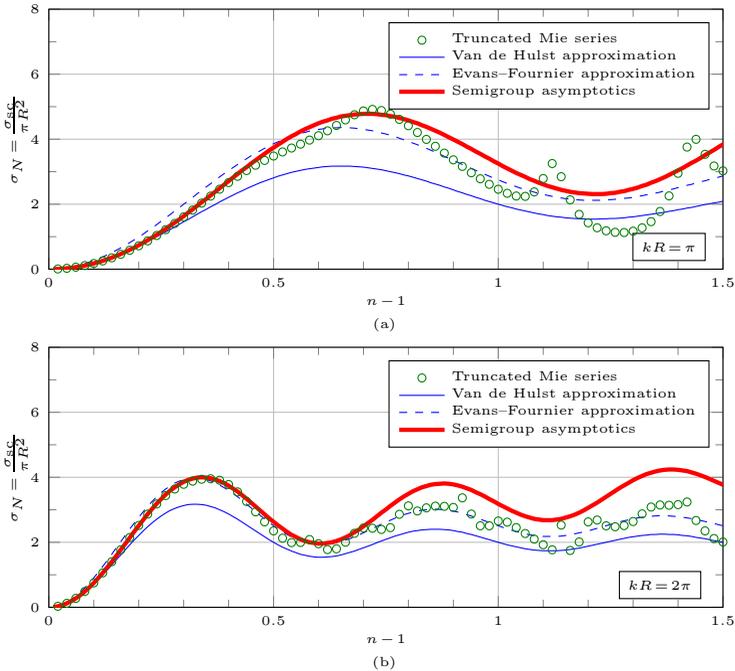
\begin{figure}[!t]
\begin{minipage}{.75\textwidth}\begin{tikzpicture}\pgfplotsset{xlabel style={yshift=.3cm}, ylabel style={yshift=-.8cm},width=10.5cm,height=5cm,xmajorgrids,ymajorgrids, tick label style={font=\tiny},label style={font=\tiny}}
\begin{axis}[xlabel={$n-1$},ylabel={$ \sigma_N=\frac{\sigma_{\text{sc}}}{\pi R^2}$},ymin=0,ymax=8,xmin=0,xmax=1.5,enlargelimits=false,  minor y tick num =1, minor x tick num =4,ytick={0,2,...,10},xtick={0,.5,...,2},legend style={
legend columns=1,
cells={anchor=west},at={(.98,.95)},
font=\tiny,legend style={row sep=-2.5pt},
},
   restrict y to domain=0:10]\addplot [only marks,
draw=green!50!black,mark=o,thin,mark size=1.5pt
] coordinates{(0.0200,0.00660654)(0.0400,0.0269357)(0.0600,0.061622)(0.0800,0.111097)(0.100,0.175574)(0.120,0.255069)(0.140,0.349445)(0.160,0.458489)(0.180,0.581972)(0.200,0.719687)(0.220,0.871438)(0.240,1.03695)(0.260,1.2157)(0.280,1.40676)(0.300,1.60848)(0.320,1.81841)(0.340,2.03322)(0.360,2.24882)(0.380,2.46078)(0.400,2.6648)(0.420,2.85724)(0.440,3.03558)(0.460,3.19871)(0.480,3.34698)(0.500,3.48224)(0.520,3.60763)(0.540,3.72747)(0.560,3.84698)(0.580,3.97187)(0.600,4.10767)(0.620,4.25822)(0.640,4.42314)(0.660,4.59417)(0.680,4.75217)(0.700,4.86878)(0.720,4.91636)(0.740,4.88176)(0.760,4.77234)(0.780,4.60908)(0.800,4.41472)(0.820,4.20626)(0.840,3.99361)(0.860,3.78141)(0.880,3.57144)(0.900,3.36456)(0.920,3.16211)(0.940,2.9665)(0.960,2.78131)(0.980,2.61078)(1.00,2.45975)(1.02,2.33525)(1.04,2.25137)(1.06,2.24061)(1.08,2.37804)(1.10,2.78814)(1.12,3.25073)(1.14,2.84582)(1.16,2.13413)(1.18,1.68569)(1.20,1.42895)(1.22,1.2749)(1.24,1.18212)(1.26,1.13416)(1.28,1.12763)(1.30,1.16839)(1.32,1.27145)(1.34,1.46184)(1.36,1.7748)(1.38,2.25526)(1.40,2.9494)(1.42,3.75725)(1.44,3.99585)(1.46,3.53321)(1.48,3.17509)(1.50,3.02645)
};
\addplot [
draw=blue,thin
] coordinates{(0.010,0.00197)(0.0200,0.00789)(0.0300,0.0177)(0.0400,0.0315)(0.0500,0.0491)(0.0600,0.0705)(0.0700,0.0957)(0.0800,0.125)(0.0900,0.157)(0.100,0.193)(0.110,0.233)(0.120,0.275)(0.130,0.321)(0.140,0.371)(0.150,0.423)(0.160,0.478)(0.170,0.535)(0.180,0.596)(0.190,0.658)(0.200,0.723)(0.210,0.790)(0.220,0.859)(0.230,0.929)(0.240,1.00)(0.250,1.07)(0.260,1.15)(0.270,1.22)(0.280,1.30)(0.290,1.38)(0.300,1.46)(0.310,1.53)(0.320,1.61)(0.330,1.69)(0.340,1.77)(0.350,1.84)(0.360,1.92)(0.370,1.99)(0.380,2.07)(0.390,2.14)(0.400,2.21)(0.410,2.28)(0.420,2.35)(0.430,2.41)(0.440,2.48)(0.450,2.54)(0.460,2.60)(0.470,2.66)(0.480,2.71)(0.490,2.76)(0.500,2.81)(0.510,2.86)(0.520,2.90)(0.530,2.94)(0.540,2.98)(0.550,3.01)(0.560,3.04)(0.570,3.07)(0.580,3.09)(0.590,3.11)(0.600,3.13)(0.610,3.15)(0.620,3.16)(0.630,3.17)(0.640,3.17)(0.650,3.17)(0.660,3.17)(0.670,3.17)(0.680,3.16)(0.690,3.15)(0.700,3.14)(0.710,3.12)(0.720,3.10)(0.730,3.08)(0.740,3.06)(0.750,3.03)(0.760,3.00)(0.770,2.97)(0.780,2.94)(0.790,2.90)(0.800,2.87)(0.810,2.83)(0.820,2.79)(0.830,2.75)(0.840,2.71)(0.850,2.66)(0.860,2.62)(0.870,2.58)(0.880,2.53)(0.890,2.49)(0.900,2.44)(0.910,2.39)(0.920,2.35)(0.930,2.30)(0.940,2.26)(0.950,2.21)(0.960,2.17)(0.970,2.12)(0.980,2.08)(0.990,2.04)(1.00,2.00)(1.01,1.96)(1.02,1.92)(1.03,1.89)(1.04,1.85)(1.05,1.82)(1.06,1.79)(1.07,1.76)(1.08,1.73)(1.09,1.70)(1.10,1.68)(1.11,1.65)(1.12,1.63)(1.13,1.61)(1.14,1.60)(1.15,1.58)(1.16,1.57)(1.17,1.56)(1.18,1.55)(1.19,1.55)(1.20,1.54)(1.21,1.54)(1.22,1.54)(1.23,1.55)(1.24,1.55)(1.25,1.56)(1.26,1.56)(1.27,1.57)(1.28,1.58)(1.29,1.60)(1.30,1.61)(1.31,1.63)(1.32,1.65)(1.33,1.67)(1.34,1.69)(1.35,1.71)(1.36,1.73)(1.37,1.75)(1.38,1.78)(1.39,1.80)(1.40,1.83)(1.41,1.85)(1.42,1.88)(1.43,1.90)(1.44,1.93)(1.45,1.96)(1.46,1.99)(1.47,2.01)(1.48,2.04)(1.49,2.06)(1.50,2.09)(1.51,2.12)(1.52,2.14)(1.53,2.16)(1.54,2.19)(1.55,2.21)(1.56,2.23)(1.57,2.25)(1.58,2.27)(1.59,2.29)(1.60,2.31)(1.61,2.32)(1.62,2.34)(1.63,2.35)(1.64,2.36)(1.65,2.37)(1.66,2.38)(1.67,2.39)(1.68,2.39)(1.69,2.40)(1.70,2.40)(1.71,2.40)(1.72,2.40)(1.73,2.40)(1.74,2.40)(1.75,2.40)(1.76,2.39)(1.77,2.39)(1.78,2.38)(1.79,2.37)(1.80,2.36)(1.81,2.35)(1.82,2.33)(1.83,2.32)(1.84,2.31)(1.85,2.29)(1.86,2.27)(1.87,2.26)(1.88,2.24)(1.89,2.22)(1.90,2.20)(1.91,2.18)(1.92,2.16)(1.93,2.14)(1.94,2.12)(1.95,2.10)(1.96,2.08)(1.97,2.06)(1.98,2.04)(1.99,2.02)(2.00,2.00)(2.01,1.98)(2.02,1.96)(2.03,1.94)(2.04,1.92)(2.05,1.91)(2.06,1.89)(2.07,1.87)(2.08,1.86)(2.09,1.84)(2.10,1.83)(2.11,1.81)(2.12,1.80)(2.13,1.79)(2.14,1.78)(2.15,1.77)(2.16,1.76)(2.17,1.75)(2.18,1.75)(2.19,1.74)(2.20,1.74)(2.21,1.74)(2.22,1.74)(2.23,1.73)(2.24,1.74)(2.25,1.74)(2.26,1.74)(2.27,1.74)(2.28,1.75)(2.29,1.75)(2.30,1.76)(2.31,1.77)(2.32,1.78)(2.33,1.79)(2.34,1.80)(2.35,1.81)(2.36,1.82)(2.37,1.83)(2.38,1.85)(2.39,1.86)(2.40,1.88)(2.41,1.89)(2.42,1.91)(2.43,1.92)(2.44,1.94)(2.45,1.95)(2.46,1.97)(2.47,1.98)(2.48,2.00)(2.49,2.02)(2.50,2.03)(2.51,2.05)(2.52,2.06)(2.53,2.08)(2.54,2.09)(2.55,2.11)(2.56,2.12)(2.57,2.13)(2.58,2.15)(2.59,2.16)(2.60,2.17)(2.61,2.18)(2.62,2.19)(2.63,2.20)(2.64,2.21)(2.65,2.22)(2.66,2.22)(2.67,2.23)(2.68,2.24)(2.69,2.24)(2.70,2.24)(2.71,2.24)(2.72,2.25)(2.73,2.25)(2.74,2.25)(2.75,2.24)(2.76,2.24)(2.77,2.24)(2.78,2.24)(2.79,2.23)(2.80,2.23)(2.81,2.22)(2.82,2.21)(2.83,2.20)(2.84,2.20)(2.85,2.19)(2.86,2.18)(2.87,2.17)(2.88,2.15)(2.89,2.14)(2.90,2.13)(2.91,2.12)(2.92,2.11)(2.93,2.09)(2.94,2.08)(2.95,2.07)(2.96,2.05)(2.97,2.04)(2.98,2.03)(2.99,2.01)(3.00,2.00)
};\addplot [
draw=blue,thin,dashed
] coordinates{(0.010,0.00271)(0.0200,0.0108)(0.0300,0.0243)(0.0400,0.0432)(0.0500,0.0674)(0.0600,0.0968)(0.0700,0.131)(0.0800,0.171)(0.0900,0.216)(0.100,0.265)(0.110,0.319)(0.120,0.378)(0.130,0.441)(0.140,0.509)(0.150,0.580)(0.160,0.656)(0.170,0.735)(0.180,0.817)(0.190,0.903)(0.200,0.992)(0.210,1.08)(0.220,1.18)(0.230,1.28)(0.240,1.37)(0.250,1.48)(0.260,1.58)(0.270,1.68)(0.280,1.79)(0.290,1.89)(0.300,2.00)(0.310,2.10)(0.320,2.21)(0.330,2.32)(0.340,2.42)(0.350,2.53)(0.360,2.63)(0.370,2.73)(0.380,2.84)(0.390,2.94)(0.400,3.03)(0.410,3.13)(0.420,3.22)(0.430,3.31)(0.440,3.40)(0.450,3.49)(0.460,3.57)(0.470,3.64)(0.480,3.72)(0.490,3.79)(0.500,3.86)(0.510,3.92)(0.520,3.98)(0.530,4.04)(0.540,4.09)(0.550,4.13)(0.560,4.18)(0.570,4.21)(0.580,4.25)(0.590,4.28)(0.600,4.30)(0.610,4.32)(0.620,4.34)(0.630,4.35)(0.640,4.35)(0.650,4.36)(0.660,4.35)(0.670,4.35)(0.680,4.34)(0.690,4.32)(0.700,4.30)(0.710,4.28)(0.720,4.26)(0.730,4.23)(0.740,4.19)(0.750,4.16)(0.760,4.12)(0.770,4.08)(0.780,4.03)(0.790,3.98)(0.800,3.93)(0.810,3.88)(0.820,3.83)(0.830,3.77)(0.840,3.72)(0.850,3.66)(0.860,3.60)(0.870,3.54)(0.880,3.47)(0.890,3.41)(0.900,3.35)(0.910,3.29)(0.920,3.22)(0.930,3.16)(0.940,3.10)(0.950,3.04)(0.960,2.98)(0.970,2.92)(0.980,2.86)(0.990,2.80)(1.00,2.75)(1.01,2.69)(1.02,2.64)(1.03,2.59)(1.04,2.54)(1.05,2.49)(1.06,2.45)(1.07,2.41)(1.08,2.37)(1.09,2.33)(1.10,2.30)(1.11,2.27)(1.12,2.24)(1.13,2.22)(1.14,2.19)(1.15,2.17)(1.16,2.16)(1.17,2.14)(1.18,2.13)(1.19,2.12)(1.20,2.12)(1.21,2.12)(1.22,2.12)(1.23,2.12)(1.24,2.13)(1.25,2.14)(1.26,2.15)(1.27,2.16)(1.28,2.18)(1.29,2.19)(1.30,2.21)(1.31,2.24)(1.32,2.26)(1.33,2.29)(1.34,2.31)(1.35,2.34)(1.36,2.37)(1.37,2.41)(1.38,2.44)(1.39,2.47)(1.40,2.51)(1.41,2.54)(1.42,2.58)(1.43,2.61)(1.44,2.65)(1.45,2.69)(1.46,2.72)(1.47,2.76)(1.48,2.80)(1.49,2.83)(1.50,2.87)(1.51,2.90)(1.52,2.94)(1.53,2.97)(1.54,3.00)(1.55,3.03)(1.56,3.06)(1.57,3.09)(1.58,3.12)(1.59,3.14)(1.60,3.16)(1.61,3.19)(1.62,3.21)(1.63,3.22)(1.64,3.24)(1.65,3.25)(1.66,3.27)(1.67,3.28)(1.68,3.29)(1.69,3.29)(1.70,3.30)(1.71,3.30)(1.72,3.30)(1.73,3.30)(1.74,3.30)(1.75,3.29)(1.76,3.28)(1.77,3.27)(1.78,3.26)(1.79,3.25)(1.80,3.24)(1.81,3.22)(1.82,3.20)(1.83,3.19)(1.84,3.17)(1.85,3.14)(1.86,3.12)(1.87,3.10)(1.88,3.07)(1.89,3.05)(1.90,3.02)(1.91,3.00)(1.92,2.97)(1.93,2.94)(1.94,2.91)(1.95,2.89)(1.96,2.86)(1.97,2.83)(1.98,2.80)(1.99,2.77)(2.00,2.75)(2.01,2.72)(2.02,2.69)(2.03,2.67)(2.04,2.64)(2.05,2.62)(2.06,2.59)(2.07,2.57)(2.08,2.55)(2.09,2.53)(2.10,2.51)(2.11,2.49)(2.12,2.47)(2.13,2.46)(2.14,2.44)(2.15,2.43)(2.16,2.42)(2.17,2.41)(2.18,2.40)(2.19,2.39)(2.20,2.39)(2.21,2.38)(2.22,2.38)(2.23,2.38)(2.24,2.38)(2.25,2.38)(2.26,2.39)(2.27,2.39)(2.28,2.40)(2.29,2.41)(2.30,2.42)(2.31,2.43)(2.32,2.44)(2.33,2.45)(2.34,2.47)(2.35,2.48)(2.36,2.50)(2.37,2.52)(2.38,2.54)(2.39,2.56)(2.40,2.57)(2.41,2.60)(2.42,2.62)(2.43,2.64)(2.44,2.66)(2.45,2.68)(2.46,2.70)(2.47,2.72)(2.48,2.75)(2.49,2.77)(2.50,2.79)(2.51,2.81)(2.52,2.83)(2.53,2.85)(2.54,2.87)(2.55,2.89)(2.56,2.91)(2.57,2.93)(2.58,2.95)(2.59,2.96)(2.60,2.98)(2.61,2.99)(2.62,3.01)(2.63,3.02)(2.64,3.03)(2.65,3.04)(2.66,3.05)(2.67,3.06)(2.68,3.07)(2.69,3.07)(2.70,3.08)(2.71,3.08)(2.72,3.08)(2.73,3.08)(2.74,3.08)(2.75,3.08)(2.76,3.08)(2.77,3.07)(2.78,3.07)(2.79,3.06)(2.80,3.05)(2.81,3.05)(2.82,3.04)(2.83,3.02)(2.84,3.01)(2.85,3.00)(2.86,2.99)(2.87,2.97)(2.88,2.96)(2.89,2.94)(2.90,2.93)(2.91,2.91)(2.92,2.89)(2.93,2.87)(2.94,2.86)(2.95,2.84)(2.96,2.82)(2.97,2.80)(2.98,2.78)(2.99,2.76)(3.00,2.75)
};
\addplot [
draw=red,ultra thick
] coordinates{(0.010,0.00164)(0.0200,0.00661)(0.0300,0.0150)(0.0400,0.0270)(0.0500,0.0426)(0.0600,0.0619)(0.0700,0.0849)(0.0800,0.112)(0.0900,0.143)(0.100,0.177)(0.110,0.216)(0.120,0.258)(0.130,0.305)(0.140,0.355)(0.150,0.409)(0.160,0.467)(0.170,0.529)(0.180,0.595)(0.190,0.664)(0.200,0.737)(0.210,0.813)(0.220,0.893)(0.230,0.976)(0.240,1.06)(0.250,1.15)(0.260,1.24)(0.270,1.34)(0.280,1.43)(0.290,1.53)(0.300,1.63)(0.310,1.74)(0.320,1.84)(0.330,1.95)(0.340,2.06)(0.350,2.16)(0.360,2.27)(0.370,2.38)(0.380,2.49)(0.390,2.60)(0.400,2.71)(0.410,2.82)(0.420,2.93)(0.430,3.04)(0.440,3.14)(0.450,3.25)(0.460,3.35)(0.470,3.45)(0.480,3.55)(0.490,3.65)(0.500,3.74)(0.510,3.83)(0.520,3.92)(0.530,4.01)(0.540,4.09)(0.550,4.16)(0.560,4.24)(0.570,4.31)(0.580,4.37)(0.590,4.43)(0.600,4.49)(0.610,4.54)(0.620,4.58)(0.630,4.63)(0.640,4.66)(0.650,4.70)(0.660,4.72)(0.670,4.74)(0.680,4.76)(0.690,4.77)(0.700,4.78)(0.710,4.78)(0.720,4.78)(0.730,4.77)(0.740,4.75)(0.750,4.74)(0.760,4.71)(0.770,4.69)(0.780,4.65)(0.790,4.62)(0.800,4.58)(0.810,4.53)(0.820,4.49)(0.830,4.43)(0.840,4.38)(0.850,4.32)(0.860,4.26)(0.870,4.20)(0.880,4.13)(0.890,4.06)(0.900,3.99)(0.910,3.92)(0.920,3.85)(0.930,3.77)(0.940,3.70)(0.950,3.62)(0.960,3.55)(0.970,3.47)(0.980,3.40)(0.990,3.32)(1.00,3.25)(1.01,3.17)(1.02,3.10)(1.03,3.03)(1.04,2.96)(1.05,2.90)(1.06,2.84)(1.07,2.77)(1.08,2.72)(1.09,2.66)(1.10,2.61)(1.11,2.56)(1.12,2.52)(1.13,2.48)(1.14,2.44)(1.15,2.41)(1.16,2.38)(1.17,2.36)(1.18,2.34)(1.19,2.33)(1.20,2.32)(1.21,2.31)(1.22,2.31)(1.23,2.31)(1.24,2.32)(1.25,2.34)(1.26,2.35)(1.27,2.38)(1.28,2.40)(1.29,2.44)(1.30,2.47)(1.31,2.51)(1.32,2.56)(1.33,2.60)(1.34,2.66)(1.35,2.71)(1.36,2.77)(1.37,2.83)(1.38,2.90)(1.39,2.97)(1.40,3.04)(1.41,3.11)(1.42,3.19)(1.43,3.27)(1.44,3.34)(1.45,3.42)(1.46,3.51)(1.47,3.59)(1.48,3.67)(1.49,3.76)(1.50,3.84)(1.51,3.92)(1.52,4.00)(1.53,4.09)(1.54,4.17)(1.55,4.25)(1.56,4.33)(1.57,4.40)(1.58,4.48)(1.59,4.55)(1.60,4.62)(1.61,4.69)(1.62,4.75)(1.63,4.81)(1.64,4.87)(1.65,4.92)(1.66,4.98)(1.67,5.02)(1.68,5.07)(1.69,5.11)(1.70,5.14)(1.71,5.17)(1.72,5.20)(1.73,5.22)(1.74,5.24)(1.75,5.26)(1.76,5.27)(1.77,5.27)(1.78,5.28)(1.79,5.27)(1.80,5.27)(1.81,5.26)(1.82,5.24)(1.83,5.22)(1.84,5.20)(1.85,5.18)(1.86,5.15)(1.87,5.11)(1.88,5.08)(1.89,5.04)(1.90,5.00)(1.91,4.96)(1.92,4.91)(1.93,4.86)(1.94,4.81)(1.95,4.76)(1.96,4.71)(1.97,4.65)(1.98,4.60)(1.99,4.54)(2.00,4.49)(2.01,4.43)(2.02,4.37)(2.03,4.32)(2.04,4.26)(2.05,4.21)(2.06,4.16)(2.07,4.11)(2.08,4.06)(2.09,4.01)(2.10,3.96)(2.11,3.92)(2.12,3.88)(2.13,3.84)(2.14,3.80)(2.15,3.77)(2.16,3.74)(2.17,3.72)(2.18,3.69)(2.19,3.68)(2.20,3.66)(2.21,3.65)(2.22,3.64)(2.23,3.64)(2.24,3.64)(2.25,3.65)(2.26,3.65)(2.27,3.67)(2.28,3.68)(2.29,3.71)(2.30,3.73)(2.31,3.76)(2.32,3.79)(2.33,3.83)(2.34,3.87)(2.35,3.92)(2.36,3.97)(2.37,4.02)(2.38,4.07)(2.39,4.13)(2.40,4.19)(2.41,4.25)(2.42,4.32)(2.43,4.39)(2.44,4.46)(2.45,4.53)(2.46,4.60)(2.47,4.68)(2.48,4.75)(2.49,4.83)(2.50,4.91)(2.51,4.98)(2.52,5.06)(2.53,5.14)(2.54,5.22)(2.55,5.29)(2.56,5.37)(2.57,5.44)(2.58,5.52)(2.59,5.59)(2.60,5.66)(2.61,5.72)(2.62,5.79)(2.63,5.85)(2.64,5.91)(2.65,5.97)(2.66,6.03)(2.67,6.08)(2.68,6.13)(2.69,6.17)(2.70,6.21)(2.71,6.25)(2.72,6.28)(2.73,6.31)(2.74,6.34)(2.75,6.36)(2.76,6.38)(2.77,6.40)(2.78,6.41)(2.79,6.41)(2.80,6.41)(2.81,6.41)(2.82,6.41)(2.83,6.40)(2.84,6.38)(2.85,6.37)(2.86,6.35)(2.87,6.32)(2.88,6.30)(2.89,6.27)(2.90,6.24)(2.91,6.20)(2.92,6.16)(2.93,6.12)(2.94,6.08)(2.95,6.03)(2.96,5.99)(2.97,5.94)(2.98,5.89)(2.99,5.84)(3.00,5.79)
 };

\legend{Truncated Mie series,Van de Hulst approximation,Evans--Fournier approximation,Semigroup asymptotics};

\node[fill=white,anchor=south,draw=black] at
(axis cs:1.38,.25) {\tiny$ kR=\pi$};
\end{axis}
\end{tikzpicture}
\vspace{-.75em}
\begin{center}\begin{tiny}\quad\,(a)\end{tiny}\end{center}\begin{tikzpicture}\pgfplotsset{xlabel style={yshift=.3cm}, ylabel style={yshift=-.8cm},width=10.5cm,height=5cm,xmajorgrids,ymajorgrids, tick label style={font=\tiny},label style={font=\tiny}}
\begin{axis}[xlabel={$n-1$},ylabel={$ \sigma_N=\frac{\sigma_{\text{sc}}}{\pi R^2}$},ymin=0,ymax=8,xmin=0,xmax=1.5,enlargelimits=false,  minor y tick num =1, minor x tick num =4,ytick={0,2,...,10},xtick={0,.5,...,2},legend style={
legend columns=1,
cells={anchor=west},at={(.98,.95)},
font=\tiny,legend style={row sep=-2.5pt},
},
   restrict y to domain=0:6]\addplot [only marks,
draw=green!50!black,mark=o,thin,mark size=1.5pt
] coordinates{(0.0200,0.0301201)(0.0400,0.121823)(0.0600,0.27499)(0.0800,0.486576)(0.100,0.750767)(0.120,1.05942)(0.140,1.40265)(0.160,1.76911)(0.180,2.14585)(0.200,2.51774)(0.220,2.86774)(0.240,3.17885)(0.260,3.43791)(0.280,3.63909)(0.300,3.78468)(0.320,3.88243)(0.340,3.93959)(0.360,3.9545)(0.380,3.90919)(0.400,3.77686)(0.420,3.54985)(0.440,3.25556)(0.460,2.9369)(0.480,2.62817)(0.500,2.35138)(0.520,2.12665)(0.540,1.98881)(0.560,1.99397)(0.580,2.08532)(0.600,1.95531)(0.620,1.77521)(0.640,1.79833)(0.660,2.00233)(0.680,2.27633)(0.700,2.44031)(0.720,2.43221)(0.740,2.39476)(0.760,2.45103)(0.780,2.8606)(0.800,3.1213)(0.820,2.96289)(0.840,3.05939)(0.860,3.11256)(0.880,3.10567)(0.900,3.07261)(0.920,3.36719)(0.940,2.86948)(0.960,2.50834)(0.980,2.50777)(1.00,2.65564)(1.02,2.6191)(1.04,2.4523)(1.06,2.27009)(1.08,2.09758)(1.10,1.9249)(1.12,1.7688)(1.14,2.52901)(1.16,1.74835)(1.18,2.01014)(1.20,2.62795)(1.22,2.68457)(1.24,2.50656)(1.26,2.47776)(1.28,2.5128)(1.30,2.63832)(1.32,2.8735)(1.34,3.08395)(1.36,3.14823)(1.38,3.13941)(1.40,3.15866)(1.42,3.24048)(1.44,2.6694)(1.46,2.35036)(1.48,2.11523)(1.50,2.00891)
};
\addplot [
draw=blue,thin
] coordinates{(0.010,0.00789)(0.0200,0.0315)(0.0300,0.0705)(0.0400,0.125)(0.0500,0.193)(0.0600,0.275)(0.0700,0.371)(0.0800,0.478)(0.0900,0.596)(0.100,0.723)(0.110,0.859)(0.120,1.00)(0.130,1.15)(0.140,1.30)(0.150,1.46)(0.160,1.61)(0.170,1.77)(0.180,1.92)(0.190,2.07)(0.200,2.21)(0.210,2.35)(0.220,2.48)(0.230,2.60)(0.240,2.71)(0.250,2.81)(0.260,2.90)(0.270,2.98)(0.280,3.04)(0.290,3.09)(0.300,3.13)(0.310,3.16)(0.320,3.17)(0.330,3.17)(0.340,3.16)(0.350,3.14)(0.360,3.10)(0.370,3.06)(0.380,3.00)(0.390,2.94)(0.400,2.87)(0.410,2.79)(0.420,2.71)(0.430,2.62)(0.440,2.53)(0.450,2.44)(0.460,2.35)(0.470,2.26)(0.480,2.17)(0.490,2.08)(0.500,2.00)(0.510,1.92)(0.520,1.85)(0.530,1.79)(0.540,1.73)(0.550,1.68)(0.560,1.63)(0.570,1.60)(0.580,1.57)(0.590,1.55)(0.600,1.54)(0.610,1.54)(0.620,1.55)(0.630,1.56)(0.640,1.58)(0.650,1.61)(0.660,1.65)(0.670,1.69)(0.680,1.73)(0.690,1.78)(0.700,1.83)(0.710,1.88)(0.720,1.93)(0.730,1.99)(0.740,2.04)(0.750,2.09)(0.760,2.14)(0.770,2.19)(0.780,2.23)(0.790,2.27)(0.800,2.31)(0.810,2.34)(0.820,2.36)(0.830,2.38)(0.840,2.39)(0.850,2.40)(0.860,2.40)(0.870,2.40)(0.880,2.39)(0.890,2.38)(0.900,2.36)(0.910,2.33)(0.920,2.31)(0.930,2.27)(0.940,2.24)(0.950,2.20)(0.960,2.16)(0.970,2.12)(0.980,2.08)(0.990,2.04)(1.00,2.00)(1.01,1.96)(1.02,1.92)(1.03,1.89)(1.04,1.86)(1.05,1.83)(1.06,1.80)(1.07,1.78)(1.08,1.76)(1.09,1.75)(1.10,1.74)(1.11,1.74)(1.12,1.74)(1.13,1.74)(1.14,1.75)(1.15,1.76)(1.16,1.78)(1.17,1.80)(1.18,1.82)(1.19,1.85)(1.20,1.88)(1.21,1.91)(1.22,1.94)(1.23,1.97)(1.24,2.00)(1.25,2.03)(1.26,2.06)(1.27,2.09)(1.28,2.12)(1.29,2.15)(1.30,2.17)(1.31,2.19)(1.32,2.21)(1.33,2.22)(1.34,2.24)(1.35,2.24)(1.36,2.25)(1.37,2.25)(1.38,2.24)(1.39,2.24)(1.40,2.23)(1.41,2.21)(1.42,2.20)(1.43,2.18)(1.44,2.15)(1.45,2.13)(1.46,2.11)(1.47,2.08)(1.48,2.05)(1.49,2.03)(1.50,2.00)(1.51,1.97)(1.52,1.95)(1.53,1.92)(1.54,1.90)(1.55,1.88)(1.56,1.86)(1.57,1.85)(1.58,1.83)(1.59,1.82)(1.60,1.82)(1.61,1.81)(1.62,1.81)(1.63,1.82)(1.64,1.82)(1.65,1.83)(1.66,1.84)(1.67,1.85)(1.68,1.87)(1.69,1.89)(1.70,1.91)(1.71,1.93)(1.72,1.95)(1.73,1.97)(1.74,1.99)(1.75,2.02)(1.76,2.04)(1.77,2.06)(1.78,2.08)(1.79,2.10)(1.80,2.12)(1.81,2.13)(1.82,2.15)(1.83,2.16)(1.84,2.17)(1.85,2.17)(1.86,2.18)(1.87,2.18)(1.88,2.18)(1.89,2.17)(1.90,2.16)(1.91,2.15)(1.92,2.14)(1.93,2.13)(1.94,2.11)(1.95,2.10)(1.96,2.08)(1.97,2.06)(1.98,2.04)(1.99,2.02)(2.00,2.00)(2.01,1.98)(2.02,1.96)(2.03,1.94)(2.04,1.93)(2.05,1.91)(2.06,1.90)(2.07,1.88)(2.08,1.87)(2.09,1.87)(2.10,1.86)(2.11,1.86)(2.12,1.86)(2.13,1.86)(2.14,1.86)(2.15,1.87)(2.16,1.87)(2.17,1.88)(2.18,1.90)(2.19,1.91)(2.20,1.92)(2.21,1.94)(2.22,1.96)(2.23,1.97)(2.24,1.99)(2.25,2.01)(2.26,2.03)(2.27,2.04)(2.28,2.06)(2.29,2.08)(2.30,2.09)(2.31,2.10)(2.32,2.11)(2.33,2.12)(2.34,2.13)(2.35,2.13)(2.36,2.14)(2.37,2.14)(2.38,2.14)(2.39,2.13)(2.40,2.13)(2.41,2.12)(2.42,2.11)(2.43,2.10)(2.44,2.09)(2.45,2.08)(2.46,2.06)(2.47,2.05)(2.48,2.03)(2.49,2.02)(2.50,2.00)(2.51,1.98)(2.52,1.97)(2.53,1.95)(2.54,1.94)(2.55,1.93)(2.56,1.92)(2.57,1.91)(2.58,1.90)(2.59,1.89)(2.60,1.89)(2.61,1.88)(2.62,1.88)(2.63,1.88)(2.64,1.89)(2.65,1.89)(2.66,1.90)(2.67,1.90)(2.68,1.91)(2.69,1.93)(2.70,1.94)(2.71,1.95)(2.72,1.96)(2.73,1.98)(2.74,1.99)(2.75,2.01)(2.76,2.02)(2.77,2.04)(2.78,2.05)(2.79,2.06)(2.80,2.07)(2.81,2.08)(2.82,2.09)(2.83,2.10)(2.84,2.11)(2.85,2.11)(2.86,2.11)(2.87,2.11)(2.88,2.11)(2.89,2.11)(2.90,2.11)(2.91,2.10)(2.92,2.09)(2.93,2.08)(2.94,2.07)(2.95,2.06)(2.96,2.05)(2.97,2.04)(2.98,2.03)(2.99,2.01)(3.00,2.00)
};\addplot [
draw=blue,thin,dashed
] coordinates{(0.010,0.00990)(0.0200,0.0395)(0.0300,0.0884)(0.0400,0.156)(0.0500,0.242)(0.0600,0.345)(0.0700,0.465)(0.0800,0.599)(0.0900,0.747)(0.100,0.907)(0.110,1.08)(0.120,1.26)(0.130,1.44)(0.140,1.63)(0.150,1.83)(0.160,2.02)(0.170,2.21)(0.180,2.41)(0.190,2.59)(0.200,2.77)(0.210,2.94)(0.220,3.11)(0.230,3.26)(0.240,3.40)(0.250,3.53)(0.260,3.64)(0.270,3.73)(0.280,3.82)(0.290,3.88)(0.300,3.93)(0.310,3.96)(0.320,3.98)(0.330,3.98)(0.340,3.96)(0.350,3.93)(0.360,3.89)(0.370,3.83)(0.380,3.76)(0.390,3.68)(0.400,3.60)(0.410,3.50)(0.420,3.40)(0.430,3.29)(0.440,3.17)(0.450,3.06)(0.460,2.95)(0.470,2.83)(0.480,2.72)(0.490,2.61)(0.500,2.51)(0.510,2.41)(0.520,2.32)(0.530,2.24)(0.540,2.17)(0.550,2.10)(0.560,2.05)(0.570,2.00)(0.580,1.97)(0.590,1.95)(0.600,1.94)(0.610,1.94)(0.620,1.94)(0.630,1.96)(0.640,1.99)(0.650,2.02)(0.660,2.07)(0.670,2.11)(0.680,2.17)(0.690,2.23)(0.700,2.29)(0.710,2.36)(0.720,2.42)(0.730,2.49)(0.740,2.56)(0.750,2.62)(0.760,2.68)(0.770,2.74)(0.780,2.80)(0.790,2.85)(0.800,2.89)(0.810,2.93)(0.820,2.96)(0.830,2.99)(0.840,3.00)(0.850,3.01)(0.860,3.02)(0.870,3.01)(0.880,3.00)(0.890,2.98)(0.900,2.96)(0.910,2.93)(0.920,2.89)(0.930,2.85)(0.940,2.81)(0.950,2.76)(0.960,2.71)(0.970,2.66)(0.980,2.61)(0.990,2.56)(1.00,2.51)(1.01,2.46)(1.02,2.41)(1.03,2.37)(1.04,2.33)(1.05,2.29)(1.06,2.26)(1.07,2.23)(1.08,2.21)(1.09,2.19)(1.10,2.18)(1.11,2.18)(1.12,2.18)(1.13,2.18)(1.14,2.19)(1.15,2.21)(1.16,2.23)(1.17,2.26)(1.18,2.29)(1.19,2.32)(1.20,2.35)(1.21,2.39)(1.22,2.43)(1.23,2.47)(1.24,2.51)(1.25,2.55)(1.26,2.59)(1.27,2.63)(1.28,2.66)(1.29,2.69)(1.30,2.72)(1.31,2.75)(1.32,2.77)(1.33,2.79)(1.34,2.80)(1.35,2.81)(1.36,2.82)(1.37,2.82)(1.38,2.81)(1.39,2.80)(1.40,2.79)(1.41,2.77)(1.42,2.75)(1.43,2.73)(1.44,2.70)(1.45,2.67)(1.46,2.64)(1.47,2.61)(1.48,2.58)(1.49,2.54)(1.50,2.51)(1.51,2.48)(1.52,2.44)(1.53,2.41)(1.54,2.39)(1.55,2.36)(1.56,2.34)(1.57,2.32)(1.58,2.30)(1.59,2.29)(1.60,2.28)(1.61,2.28)(1.62,2.27)(1.63,2.28)(1.64,2.28)(1.65,2.29)(1.66,2.31)(1.67,2.32)(1.68,2.34)(1.69,2.37)(1.70,2.39)(1.71,2.42)(1.72,2.44)(1.73,2.47)(1.74,2.50)(1.75,2.53)(1.76,2.56)(1.77,2.59)(1.78,2.61)(1.79,2.64)(1.80,2.66)(1.81,2.68)(1.82,2.69)(1.83,2.71)(1.84,2.72)(1.85,2.73)(1.86,2.73)(1.87,2.73)(1.88,2.73)(1.89,2.72)(1.90,2.71)(1.91,2.70)(1.92,2.69)(1.93,2.67)(1.94,2.65)(1.95,2.63)(1.96,2.61)(1.97,2.58)(1.98,2.56)(1.99,2.53)(2.00,2.51)(2.01,2.48)(2.02,2.46)(2.03,2.44)(2.04,2.42)(2.05,2.40)(2.06,2.38)(2.07,2.36)(2.08,2.35)(2.09,2.34)(2.10,2.33)(2.11,2.33)(2.12,2.33)(2.13,2.33)(2.14,2.33)(2.15,2.34)(2.16,2.35)(2.17,2.36)(2.18,2.38)(2.19,2.40)(2.20,2.41)(2.21,2.43)(2.22,2.46)(2.23,2.48)(2.24,2.50)(2.25,2.52)(2.26,2.54)(2.27,2.56)(2.28,2.59)(2.29,2.60)(2.30,2.62)(2.31,2.64)(2.32,2.65)(2.33,2.66)(2.34,2.67)(2.35,2.68)(2.36,2.68)(2.37,2.68)(2.38,2.68)(2.39,2.68)(2.40,2.67)(2.41,2.66)(2.42,2.65)(2.43,2.64)(2.44,2.62)(2.45,2.61)(2.46,2.59)(2.47,2.57)(2.48,2.55)(2.49,2.53)(2.50,2.51)(2.51,2.49)(2.52,2.47)(2.53,2.45)(2.54,2.43)(2.55,2.42)(2.56,2.40)(2.57,2.39)(2.58,2.38)(2.59,2.37)(2.60,2.37)(2.61,2.36)(2.62,2.36)(2.63,2.36)(2.64,2.37)(2.65,2.37)(2.66,2.38)(2.67,2.39)(2.68,2.40)(2.69,2.41)(2.70,2.43)(2.71,2.45)(2.72,2.46)(2.73,2.48)(2.74,2.50)(2.75,2.52)(2.76,2.54)(2.77,2.55)(2.78,2.57)(2.79,2.59)(2.80,2.60)(2.81,2.61)(2.82,2.62)(2.83,2.63)(2.84,2.64)(2.85,2.65)(2.86,2.65)(2.87,2.65)(2.88,2.65)(2.89,2.65)(2.90,2.64)(2.91,2.64)(2.92,2.63)(2.93,2.62)(2.94,2.60)(2.95,2.59)(2.96,2.57)(2.97,2.56)(2.98,2.54)(2.99,2.53)(3.00,2.51)
};
\addplot [
draw=red,ultra thick
] coordinates{(0.010,0.00746)(0.0200,0.0301)(0.0300,0.0681)(0.0400,0.122)(0.0500,0.190)(0.0600,0.274)(0.0700,0.373)(0.0800,0.485)(0.0900,0.611)(0.100,0.749)(0.110,0.899)(0.120,1.06)(0.130,1.23)(0.140,1.40)(0.150,1.58)(0.160,1.77)(0.170,1.95)(0.180,2.14)(0.190,2.33)(0.200,2.51)(0.210,2.69)(0.220,2.87)(0.230,3.03)(0.240,3.19)(0.250,3.33)(0.260,3.47)(0.270,3.59)(0.280,3.69)(0.290,3.78)(0.300,3.86)(0.310,3.92)(0.320,3.96)(0.330,3.99)(0.340,4.00)(0.350,3.99)(0.360,3.97)(0.370,3.93)(0.380,3.88)(0.390,3.82)(0.400,3.74)(0.410,3.65)(0.420,3.56)(0.430,3.45)(0.440,3.34)(0.450,3.22)(0.460,3.10)(0.470,2.98)(0.480,2.86)(0.490,2.74)(0.500,2.63)(0.510,2.52)(0.520,2.42)(0.530,2.32)(0.540,2.23)(0.550,2.16)(0.560,2.09)(0.570,2.04)(0.580,2.00)(0.590,1.98)(0.600,1.96)(0.610,1.96)(0.620,1.98)(0.630,2.00)(0.640,2.04)(0.650,2.09)(0.660,2.15)(0.670,2.22)(0.680,2.31)(0.690,2.39)(0.700,2.49)(0.710,2.59)(0.720,2.69)(0.730,2.80)(0.740,2.91)(0.750,3.01)(0.760,3.12)(0.770,3.22)(0.780,3.32)(0.790,3.41)(0.800,3.49)(0.810,3.57)(0.820,3.63)(0.830,3.69)(0.840,3.73)(0.850,3.77)(0.860,3.79)(0.870,3.80)(0.880,3.81)(0.890,3.80)(0.900,3.78)(0.910,3.75)(0.920,3.71)(0.930,3.66)(0.940,3.61)(0.950,3.54)(0.960,3.48)(0.970,3.41)(0.980,3.33)(0.990,3.26)(1.00,3.18)(1.01,3.11)(1.02,3.04)(1.03,2.97)(1.04,2.91)(1.05,2.85)(1.06,2.80)(1.07,2.76)(1.08,2.72)(1.09,2.70)(1.10,2.68)(1.11,2.68)(1.12,2.68)(1.13,2.70)(1.14,2.72)(1.15,2.76)(1.16,2.80)(1.17,2.85)(1.18,2.91)(1.19,2.98)(1.20,3.06)(1.21,3.14)(1.22,3.22)(1.23,3.31)(1.24,3.40)(1.25,3.49)(1.26,3.58)(1.27,3.67)(1.28,3.75)(1.29,3.83)(1.30,3.91)(1.31,3.98)(1.32,4.04)(1.33,4.10)(1.34,4.14)(1.35,4.18)(1.36,4.21)(1.37,4.23)(1.38,4.24)(1.39,4.24)(1.40,4.23)(1.41,4.21)(1.42,4.19)(1.43,4.15)(1.44,4.11)(1.45,4.06)(1.46,4.01)(1.47,3.95)(1.48,3.89)(1.49,3.83)(1.50,3.77)(1.51,3.71)(1.52,3.65)(1.53,3.59)(1.54,3.53)(1.55,3.48)(1.56,3.44)(1.57,3.40)(1.58,3.37)(1.59,3.34)(1.60,3.33)(1.61,3.32)(1.62,3.33)(1.63,3.34)(1.64,3.36)(1.65,3.39)(1.66,3.43)(1.67,3.47)(1.68,3.53)(1.69,3.59)(1.70,3.66)(1.71,3.73)(1.72,3.80)(1.73,3.88)(1.74,3.97)(1.75,4.05)(1.76,4.13)(1.77,4.22)(1.78,4.30)(1.79,4.37)(1.80,4.44)(1.81,4.51)(1.82,4.57)(1.83,4.63)(1.84,4.67)(1.85,4.71)(1.86,4.74)(1.87,4.77)(1.88,4.78)(1.89,4.78)(1.90,4.78)(1.91,4.77)(1.92,4.75)(1.93,4.72)(1.94,4.68)(1.95,4.64)(1.96,4.60)(1.97,4.55)(1.98,4.49)(1.99,4.44)(2.00,4.38)(2.01,4.32)(2.02,4.27)(2.03,4.21)(2.04,4.16)(2.05,4.11)(2.06,4.07)(2.07,4.04)(2.08,4.01)(2.09,3.98)(2.10,3.97)(2.11,3.96)(2.12,3.96)(2.13,3.97)(2.14,3.99)(2.15,4.02)(2.16,4.05)(2.17,4.10)(2.18,4.15)(2.19,4.21)(2.20,4.27)(2.21,4.34)(2.22,4.41)(2.23,4.49)(2.24,4.56)(2.25,4.64)(2.26,4.72)(2.27,4.80)(2.28,4.88)(2.29,4.96)(2.30,5.03)(2.31,5.09)(2.32,5.15)(2.33,5.21)(2.34,5.25)(2.35,5.29)(2.36,5.33)(2.37,5.35)(2.38,5.37)(2.39,5.37)(2.40,5.37)(2.41,5.36)(2.42,5.34)(2.43,5.32)(2.44,5.29)(2.45,5.25)(2.46,5.21)(2.47,5.16)(2.48,5.11)(2.49,5.06)(2.50,5.01)(2.51,4.95)(2.52,4.90)(2.53,4.85)(2.54,4.80)(2.55,4.75)(2.56,4.71)(2.57,4.68)(2.58,4.65)(2.59,4.63)(2.60,4.61)(2.61,4.60)(2.62,4.60)(2.63,4.61)(2.64,4.63)(2.65,4.65)(2.66,4.69)(2.67,4.73)(2.68,4.78)(2.69,4.83)(2.70,4.89)(2.71,4.96)(2.72,5.03)(2.73,5.10)(2.74,5.18)(2.75,5.26)(2.76,5.34)(2.77,5.41)(2.78,5.49)(2.79,5.56)(2.80,5.63)(2.81,5.70)(2.82,5.76)(2.83,5.81)(2.84,5.86)(2.85,5.90)(2.86,5.93)(2.87,5.96)(2.88,5.98)(2.89,5.99)(2.90,5.99)(2.91,5.98)(2.92,5.96)(2.93,5.94)(2.94,5.91)(2.95,5.88)(2.96,5.84)(2.97,5.79)(2.98,5.75)(2.99,5.70)(3.00,5.64)
 };

\legend{Truncated Mie series,Van de Hulst approximation,Evans--Fournier approximation,Semigroup asymptotics};

\node[fill=white,anchor=south,draw=black] at
(axis cs:1.36,.25) {\tiny$ kR=2\pi$};
\end{axis}
\end{tikzpicture}
\vspace{-.75em}
\begin{center}\begin{tiny}\quad\,(b)\end{tiny}\end{center}\end{minipage}\begin{minipage}{.25\textwidth}\caption{Comparison of various  approximate solutions to the total scattering cross-section  of spherical scatterers (whose refractive indices range from that of  air $n=1$ to that of  diamond $n=5/2$),  with radii  \emph{(a)}~ $R=\pi/k $  and \emph{(b)}~$R=2\pi/k $.   For numerical computations, the Mie series \eqref{eq:Mie_sc_series} are truncated after the first  50 terms. \label{fig:5-3}}\end{minipage}\end{figure}

In addition to the observations above, the integral formulation of the asymptotic solution \eqref{eq:bulkapprox} also explicitly demonstrates the robustness of the light scattering problem against shape distortions. In practice, due to manufacture defects, dielectric geometries are almost never perfectly spherical in the problem of glass bead scattering for optical imaging in aqueous medium \cite{RICM}. The integral formula \eqref{eq:bulkapprox} helps us understand quantitatively how surface ruggedness may affect the scattering pattern, at least asymptotically. This advantage of the integral equation approach is also absent in analogs of Mie theory, which may hang on high symmetry for separation of variables. In view of the possible evaluation of certain volume and surface integrals in closed functional forms, we hope that the integral formula  \eqref{eq:bulkapprox}, which yields approximations in the non-perturbative regime, may find applications in both direct and inverse light scattering problems.

\section{Discussions}

In Papers I and II of this series, we have performed error analysis on perturbative solutions to the Born equation, and have developed a non-perturbative approach to solving light scattering problems of practical interest. There is an  underlying thread of thought that bridges  the perturbative method to the  non-perturbative counterpart. This key idea lies in the evolution semigroup of light scattering, and its asymptotic behavior.  The compactness of certain linear operators not only ensures a robust solution to light scattering problems in principle \cite{QualEM}, it also gives rise to, in practice, useful properties such as the strong stability of the evolution semigroup, which  allows asymptotic expansions beyond the perturbative regime of light scattering.

In prospect, one may develop some spectral methods for further enhancement of the non-perturbative solutions to electromagnetic scattering  in future research.
For example, for the Hilbert--Schmidt operator $ \hat {\mathscr G}(\hat I+2\hat{\mathscr G})^2$,
its functional determinant   is well defined \cite{Simon2005}. This may allow us   to construct analytic approximations to the light scattering problem that apply to an even wider range of dielectric susceptibilities. The presence of a functional determinant in the denominator of a non-perturbative solution would capture the optical resonance modes in the complex $ \chi$-plane where the Born equation ceases to be well-posed. This functional determinant method could probably outshine the semigroup approach presented in the current work, just in the way that the Pad\'e approximants beat polynomial approximations.
\subsection*{Acknowledgments}This research was supported in part  by the Applied Mathematics Program within the Department of Energy
(DOE) Office of Advanced Scientific Computing Research (ASCR) as part of the Collaboratory on
Mathematics for Mesoscopic Modeling of Materials (CM4).

The author thanks Prof.~Xiaowei Zhuang (Harvard University) for her  questions on nanophotonics   in 2006, which inspired the current work. Part of this research formed   Chapter 5 and Appendix D in the  author's  PhD thesis \cite{ZhouThesis} completed in January 2010 under her supervision. The author dedicates this paper to her 50th birthday.

The author is grateful to an anonymous referee for suggestions on improving the presentation of the current work.


\end{document}